\documentclass[a4paper,12pt]{article}
\usepackage[T1]{fontenc}
\pdfoutput=1

\usepackage{jheppub}
\usepackage{macro}
\usepackage{appendix}
\usepackage{comment}
\usepackage{array}
\usepackage{yfonts}
\usepackage{comment}
\usepackage{amsthm}
\usepackage{tikz-cd}
\usepackage{amsmath,amssymb}
\usepackage{tikz}
\usepackage{pgfkeys}
\usepackage{pgfopts}
\usepackage{xcolor}
\usepackage{young}
\usepackage{youngtab}
\usepackage{ytableau}
\usepackage{titlesec}
\setcounter{secnumdepth}{4}
\usetikzlibrary{decorations.pathreplacing}

\usepackage{tikz}

%\notoc
\titleformat{\paragraph}
{\normalfont\normalsize\bfseries}{\theparagraph}{1em}{}
\titlespacing*{\paragraph}
{0pt}{3.25ex plus 1ex minus .2ex}{1.5ex plus .2ex}

\newtheorem{Theorem}{Theorem}[section]

\newtheorem{Definition}{Definition}[section]
\newtheorem{Lemma}{Lemma}[section]
\newtheorem{Proposition}{Proposition}[section]

\usetikzlibrary{matrix,calc,positioning,decorations.markings,decorations.pathmorphing,decorations.pathreplacing
}%tikzmark,
\usetikzlibrary{arrows,cd,shapes}

%\notoc

\title{\textbf Duality, asymptotic charges and algebraic topology in mixed symmetry tensor gauge theories and applications}

\author[a,b]{Federico Manzoni}

\affiliation[a]{Mathematics and Physics department, Roma Tre, Via della Vasca Navale 84, Rome, Italy}
\affiliation[b]{INFN Roma Tre Section, Physics department, Via della Vasca Navale 84, Rome, Italy}
\emailAdd{federico.manzoni@uniroma3.it, ORCID ID: 0000-0002-9979-6154}

\abstract{Recently the duality map between electric-like asymptotic charges of $p$-form gauge theories is studied. The outcome is an existence and uniqueness theorem and the topological nature of the duality map. The goal of this work is to extend that theorem in the case of mixed symmetry tensor gauge theories in order to have a deeper understanding of exotic gauge theories, of the non-trivial charges associated to them and of the duality of their observables. Unlike the simpler case of $p$-form gauge theories, here we need to develop some mathematical tools. The crucial points are to view a mixed symmetry tensor as a Young projected object of the $N$-multi-form space and to develop an analogue of de Rham complex for mixed symmetry tensors. As a result, if the underlying manifold satisfy appropriate conditions, the duality map can be proven to exist and to be unique ensuring the charge of
a description has information on the dual ones. Moreover, we provide some physical applications ranging form fractons and higher symmetries to string theory and holography.}

\begin{document}

\maketitle
%\flushbottom
%\newpage

\section{Introduction}
Since Maxwell's modern theory of electromagnetism \cite{max} until today, gauge theories have played a major role in understanding and describing Nature. The invariance of physics under local transformations has led to the description of all fundamental interactions from General Relativity to the Standard Model of Particle Physics. However, in these theories only a very small subclass of representations of the Lorentz group are used as gauge fields and between the seventies and the eighties, eyes were directed towards the possibility of using arbitrarily high spin fields as gauge fields leading to the Higher-Spin theories \cite{Curtright:1980un,Francia:2004lbf, Campoleoni:2017qot,Campoleoni:2018uib,Campoleoni:2020ejn,sagnotti2012notes,Didenko:2014dwa,Campoleoni:2017mbt}. Systematic study of massless arbitrary spin fields was initiated by Fronsdal in 1978 \cite{fron1},\cite{fron2}. Usually, the spectrum of such theories contains the graviton as a massless spin-two field ans since Higher-Spin theories are supposed to be consistent quantum theories and, for this reason, to give examples of quantum gravity theories. However, Higher-Spin theories, suffers of no-go theorems which, for example, trivialize interactions \cite{mandu,Argyres:1989cu,Aragone:1979bm,Aragone:1983sz,Weinberg:1964ew,sagnotti2012notes}. \\ More or less in the same years, Curtright showed how mixed symmetry tensor fields can be used as consisted gauge field, generalizing the concept of a gauge field to include higher rank Lorentz tensors which are neither totally symmetric nor totally antisymmetric \cite{Curtright:1980yk}. The main interest is due to the fact that an infinite family of mixed symmetry gauge fields arises in the zero tension limit of String Theory \cite{Bonelli_2003}. The simplest of these mixed symmetry tensor gauge theories has as fundamental object a three indexes mixed symmetry tensor field called Curtright hooke field; moreover, its the gauge-invariant dynamics is dual to those of the graviton in $D=5$ dimensions. This is due to an underlying duality between the Young tableaux defining the two irreducible representations. This duality goes well beyond this simple example and a generic mixed symmetry tensor fields posses a bunch of on-shell dual descriptions \cite{hamermesh1989group}. Therefore, studying the Curtright hooke field in generic dimension can lead to new insights on the nature of gravity, at least on its perturbative sector.

As a general feature of gauge theories, we can find some large gauge transformations that act in a non-trivial way on the physical states at infinity, where infinity means in the nearby of a boundary \cite{Bondi:1962px, PhysRev.128.2851,Esmaeili:2020eua,Campoleoni:2018uib, Romoli:2024hlc}. One of the first examples is provided by the BMS group in the context of Einstein gravity \cite{Sachs:1962wk,Bondi:1962px,PhysRev.128.2851}. We stress also that recently an asymptotic algebra for the case of gravisolitons spacetime was found \cite{Belinski:2001ph,Manzoni:2021dij,Manzoni:2024agc}.

In recent past, was shown as the asymptotic symmetries of the scalar in $D=4$ can be interpreted as the asymptotic symmetries of a $2$-form; hence the asymptotic charge of a description contains information about the dual one \cite{Francia:2018jtb,Campiglia:2018see}. Furthermore, in \cite{Manzoni:2024tow},\cite{manzf} we show that also in the realm of $p$-form gauge theories, asymptotic charges contains information on the dual formulation. The main goal of the work is to extend the results of \cite{Manzoni:2024tow} to more general case of mixed symmetry tensor gauge theories. The possibility of a well-defined and unique map, under appropriate topological assumptions, would be of interest to various fields of physics, as will be discussed in the paragraph
\ref{phy}. The route towards such a map goes through the construction of a de Rham-like complex with the idea that a mixed symmetry tensor associated with a Young diagram is a suitably projected N-multi-form, following the idea of Hull and Medeiros \cite{Medeiros_2003}. The tricky conceptual point is to define an appropriate cohomology  to be used as de Rham cohomology for the case of differential forms or 1-multi-forms in this language. These points are covered in paragraphs \ref{nmulti} and \ref{MSTT} where we discuss also the existence and uniqueness of duality maps in the case of mixed symmetry tensors. These paragraphs are presented in a more formal way in order to elucidate better the construction steps.

The physical applications can cover a lot of fields. First of all, we introduce mixed symmetry memory effects and thanks to the duality theorem we propose a possibility to detect extra dimensions measuring memory effects of dual descriptions. The applications to higher symmetry and string theory are essentially based on the isomorphism between de Rham-like cohomology and standard de Rham cohomology: in a nutshell, extended defect can source also mixed symmetry fields and the spectrum of allowed charges is determined by de Rham cohomology of the compactification. Moreover, the closure property in the de Rham-like complex seem to be the right framework for the reinterpretation of fracton dynamics in full cohomological features while the application in holography lead to the possibility that certain operator algebras in the CFT must close into each other, thereby reducing the freedom in bootstrap equations.

\section{The duality map for $p$-form gauge theories}\label{dualitymap}

In a previous work \cite{Manzoni:2024tow}, we proposed a duality map between the electric-like charge $Q^{(e)}_{p,D}$ and the electric-like charge of the dual theory $Q^{(e)}_{q,D}$ with $q=D-p-2$.
In this section we review the theorem on the duality map for the case of $p$-form gauge theories.
This map is between well defined charges, i.e. charges with radiation fall-off for the fields and charges with Coulomb fall-off for the field in $D=2p+2$. The existence and uniqueness of the duality map is stated by the following theorem
\begin{Theorem}[Existence and uniqueness of the duality map for well defined charges]\label{THM3.1} 
Let $(M_D,\boldsymbol{\eta})$ be the $D$-dimensional Minkowski spacetime. Then a duality map $f \in \mathrm{GL}(n_p,\mathbb{C})$, such that the following diagram
\begin{equation}
\centering
\begin{tikzcd}
\Omega_{\mathrm{AS}}^{p+1}(M_D) \arrow[rr, "\star"] & & \Omega_{\mathrm{AS}}^{q+1}(M_D)\\
\Omega_{\mathrm{AS}}^{p}(M_D) \arrow[u, "d"] \arrow[d, "\pi_1", swap ] \arrow[rr, "\star_{D-2}"] & & \Omega_{\mathrm{AS}}^{q}(M_D) \arrow[u, "d", swap] \arrow[d, "\pi_2"]\\
\mathbb{C}^{n_p}  \arrow[rr, "f"] & & \mathbb{C}^{n_{q}} \\
\end{tikzcd}
\end{equation}
commutes, exists. Moreover $f$ admits a unique restriction to a 1-dimensional subspace such that $f|_{Q}: Q^{(e)}_{p,D} \mapsto Q_{q,D}^{(e)}$ and $f^{-1}|_{Q}: Q^{(e)}_{q,D} \mapsto Q^{(e)}_{p,D}$.
\end{Theorem}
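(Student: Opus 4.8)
The plan is to read the diagram as two vertically stacked squares and to obtain $f$ as the map induced on charge data by the transverse Hodge operator $\star_{D-2}$, then to upgrade it to an element of $\mathrm{GL}(n_p,\mathbb{C})$ and to isolate the distinguished one-dimensional restriction. First I would establish commutativity of the upper square, i.e. the identity $\star\circ d = d\circ\star_{D-2}$ as maps $\Omega_{\mathrm{AS}}^{p}(M_D)\to\Omega_{\mathrm{AS}}^{q+1}(M_D)$. Physically this is the statement that the $D$-dimensional Hodge dual of the field strength $dA$ of a $p$-form potential $A$ is an exact $(q+1)$-form whose potential is $\star_{D-2}A$. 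I would derive it from the source-free equation of motion $d\star dA=0$, which makes $\star dA$ closed, together with a Poincar\'e-type lemma on $\Omega_{\mathrm{AS}}$ to write $\star dA = d\tilde{A}$, and from the radial/retarded splitting of $\star$ that produces the transverse operator $\star_{D-2}$ acting on the $(D-2)$-dimensional cut at infinity; one must check that the prescribed fall-off (radiation, or Coulomb in $D=2p+2$) is preserved so that $\tilde{A}$ indeed lies in $\Omega_{\mathrm{AS}}^{q}(M_D)$ and can be taken equal to $\star_{D-2}A$.

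Second, I would build $f$ by descent through the projections. Since $\pi_1$ and $\pi_2$ are surjective onto the finite-dimensional charge spaces, it is enough to show the inclusion $\ker\pi_1\subseteq\ker(\pi_2\circ\star_{D-2})$; commutativity of the upper square is precisely what guarantees that $\star_{D-2}$ sends the subleading, pure-gauge data annihilated by $\pi_1$ into data annihilated by $\pi_2$, so that no charge information is lost. Granting this inclusion, the universal property of the quotient yields a unique linear $f$ with $f\circ\pi_1=\pi_2\circ\star_{D-2}$, which is exactly commutativity of the lower square. For invertibility I would repeat the descent with $\star_{D-2}^{-1}$: because $\star_{D-2}$ is an isomorphism $\Omega_{\mathrm{AS}}^{p}(M_D)\to\Omega_{\mathrm{AS}}^{q}(M_D)$ with $q=D-p-2=(D-2)-p$, the induced map $g$ satisfies $g\circ f=\mathrm{id}$ and $f\circ g=\mathrm{id}$; in particular $n_p=n_q$ and $f\in\mathrm{GL}(n_p,\mathbb{C})$.

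Finally, for the distinguished restriction I would identify the one-dimensional subspaces of $\mathbb{C}^{n_p}$ and $\mathbb{C}^{n_q}$ corresponding to the Coulombic modes whose integrals over the $(D-2)$-sphere compute $Q^{(e)}_{p,D}$ and $Q^{(e)}_{q,D}$. Since $f$ is induced by $\star_{D-2}$, it maps the first subspace into the second, and fixing the normalization through the definition of the electric charges pins down $f|_{Q}\colon Q^{(e)}_{p,D}\mapsto Q^{(e)}_{q,D}$ uniquely; invertibility then gives $f^{-1}|_{Q}\colon Q^{(e)}_{q,D}\mapsto Q^{(e)}_{p,D}$. I expect the main obstacle to be the first step together with the kernel inclusion of the second: controlling the interplay of the $D$-dimensional Hodge star, the exterior derivative and the fall-off conditions so that $\star dA$ is exact with potential in $\Omega_{\mathrm{AS}}^{q}(M_D)$, and verifying that $\star_{D-2}$ respects the filtration defining the projections, are where the real analytic content of the theorem sits.
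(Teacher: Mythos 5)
Your overall topological picture --- exactness of the de Rham complex on Minkowski space, reduction to the short exact sequence $0 \to \Omega^{p}_{\mathrm{AS}} \to \Omega^{p+1}_{\mathrm{AS}} \to 0$ so that $d$ becomes an isomorphism, and $f$ induced by $\star_{D-2}$ --- matches the paper's reading of the theorem. But your construction of $f$ has a genuine gap. You treat $\pi_1$ and $\pi_2$ as surjections with nontrivial kernels (the ``subleading, pure-gauge data'') and obtain $f$ by descent, which requires $\ker\pi_1 \subseteq \ker(\pi_2 \circ \star_{D-2})$. You assert that this inclusion is precisely what commutativity of the upper square guarantees, but the upper square only relates $d$, $\star$ and $\star_{D-2}$; it says nothing about what $\pi_1$ and $\pi_2$ annihilate, and since you never define these projections the inclusion cannot be checked. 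In the paper's construction (spelled out in the proof of Theorem \ref{THM4.3.1}, which is stated to follow the same steps as Theorem \ref{THM3.1}) the issue does not arise because the $\pi$'s are explicit linear \emph{bijections} onto $\mathbb{C}^{n_p}$ with $n_p=\dim\Omega^{p}_{\mathrm{AS}}$: the first component of $\pi_1(B)$ is the charge and the remaining components are independent entries of the coordinate representation of $B$, chosen so that distinct fields have distinct images. Then $f:=\pi_2\circ\star_{D-2}\circ\pi_1^{-1}$ is a composition of bijections, hence lies in $\mathrm{GL}(n_p,\mathbb{C})$ with no descent argument needed; the distinguished restriction comes from putting $f$ in lower-triangular form and reading off the top-left entry $\eta=Q^{(e)}_{q,D}/Q^{(e)}_{p,D}$.

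A second, smaller divergence: you take $\star_{D-2}$ to be the geometric transverse Hodge operator on the cut at infinity and then try to prove commutativity of the upper square from the equations of motion plus a Poincar\'e-type lemma, flagging this as the main analytic obstacle. In the paper $\star_{D-2}$ is only required to be a group homomorphism making the ladder of complexes commute, and since $d$ is an isomorphism on the asymptotic-symmetry spaces one may simply set $\star_{D-2}:=d^{-1}\circ\star\circ d$, so the upper square commutes by construction and $\star_{D-2}$ is automatically an isomorphism. Your route is physically more informative, but it imports analytic work (fall-off preservation, exactness of $\star\,dA$) that the theorem, as formulated and used in this paper, does not require; if you keep it, you must still supply the missing definitions of $\pi_1,\pi_2$ before the descent step can be made rigorous.
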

Let us explain the content of Theorem \ref{THM3.1} which can be interpreted in the realm of algebraic topology; the interesting point is the topological nature of the duality map. Let us consider two copies of the de Rham complex, one labelled by $p$, $C_{\text{dR}}^{*(p)}$, and one labelled by $q=D-p-2$, $C_{\text{dR}}^{*(q)}$
\begin{equation}
\centering
\begin{tikzpicture}
[x=0.75pt,y=0.75pt,yscale=-1,xscale=1]
%uncomment if require: \path (0,372); %set diagram left start at 0, and has height of 372

%Straight Lines [id:da5455330735384084] 
\draw    (252,175) -- (314.77,175.36) ;
\draw [shift={(316.77,175.37)}, rotate = 180.33] [color={rgb, 255:red, 0; green, 0; blue, 0 }  ][line width=0.75]    (10.93,-3.29) .. controls (6.95,-1.4) and (3.31,-0.3) .. (0,0) .. controls (3.31,0.3) and (6.95,1.4) .. (10.93,3.29)   ;
%Straight Lines [id:da006859179195403575] 
\draw    (342,175) -- (404.77,175.36) ;
\draw [shift={(406.77,175.37)}, rotate = 180.33] [color={rgb, 255:red, 0; green, 0; blue, 0 }  ][line width=0.75]    (10.93,-3.29) .. controls (6.95,-1.4) and (3.31,-0.3) .. (0,0) .. controls (3.31,0.3) and (6.95,1.4) .. (10.93,3.29)   ;
%Straight Lines [id:da5010296185698809] 
\draw    (449,176) -- (511.77,176.36) ;
\draw [shift={(513.77,176.37)}, rotate = 180.33] [color={rgb, 255:red, 0; green, 0; blue, 0 }  ][line width=0.75]    (10.93,-3.29) .. controls (6.95,-1.4) and (3.31,-0.3) .. (0,0) .. controls (3.31,0.3) and (6.95,1.4) .. (10.93,3.29)   ;
%Straight Lines [id:da9021555215486249] 
\draw    (150,175) -- (212.77,175.36) ;
\draw [shift={(214.77,175.37)}, rotate = 180.33] [color={rgb, 255:red, 0; green, 0; blue, 0 }  ][line width=0.75]    (10.93,-3.29) .. controls (6.95,-1.4) and (3.31,-0.3) .. (0,0) .. controls (3.31,0.3) and (6.95,1.4) .. (10.93,3.29)   ;
%Straight Lines [id:da15372876143963032] 
\draw    (252,115) -- (314.77,115.36) ;
\draw [shift={(316.77,115.37)}, rotate = 180.33] [color={rgb, 255:red, 0; green, 0; blue, 0 }  ][line width=0.75]    (10.93,-3.29) .. controls (6.95,-1.4) and (3.31,-0.3) .. (0,0) .. controls (3.31,0.3) and (6.95,1.4) .. (10.93,3.29)   ;
%Straight Lines [id:da9090237494480807] 
\draw    (342,115) -- (404.77,115.36) ;
\draw [shift={(406.77,115.37)}, rotate = 180.33] [color={rgb, 255:red, 0; green, 0; blue, 0 }  ][line width=0.75]    (10.93,-3.29) .. controls (6.95,-1.4) and (3.31,-0.3) .. (0,0) .. controls (3.31,0.3) and (6.95,1.4) .. (10.93,3.29)   ;
%Straight Lines [id:da12270813842907646] 
\draw    (449,116) -- (511.77,116.36) ;
\draw [shift={(513.77,116.37)}, rotate = 180.33] [color={rgb, 255:red, 0; green, 0; blue, 0 }  ][line width=0.75]    (10.93,-3.29) .. controls (6.95,-1.4) and (3.31,-0.3) .. (0,0) .. controls (3.31,0.3) and (6.95,1.4) .. (10.93,3.29)   ;
%Straight Lines [id:da23504335073392668] 
\draw    (150,115) -- (212.77,115.36) ;
\draw [shift={(214.77,115.37)}, rotate = 180.33] [color={rgb, 255:red, 0; green, 0; blue, 0 }  ][line width=0.75]    (10.93,-3.29) .. controls (6.95,-1.4) and (3.31,-0.3) .. (0,0) .. controls (3.31,0.3) and (6.95,1.4) .. (10.93,3.29)   ;
%Straight Lines [id:da6401933288934318] 
\draw    (124,125) -- (124.37,161.43) ;
\draw [shift={(124.39,163.43)}, rotate = 269.42] [color={rgb, 255:red, 0; green, 0; blue, 0 }  ][line width=0.75]    (10.93,-3.29) .. controls (6.95,-1.4) and (3.31,-0.3) .. (0,0) .. controls (3.31,0.3) and (6.95,1.4) .. (10.93,3.29)   ;
%Straight Lines [id:da8999807329522597] 
\draw    (530,125) -- (530.37,161.43) ;
\draw [shift={(530.39,163.43)}, rotate = 269.42] [color={rgb, 255:red, 0; green, 0; blue, 0 }  ][line width=0.75]    (10.93,-3.29) .. controls (6.95,-1.4) and (3.31,-0.3) .. (0,0) .. controls (3.31,0.3) and (6.95,1.4) .. (10.93,3.29)   ;
%Straight Lines [id:da11918364395796055] 
\draw    (420,125) -- (420.37,161.43) ;
\draw [shift={(420.39,163.43)}, rotate = 269.42] [color={rgb, 255:red, 0; green, 0; blue, 0 }  ][line width=0.75]    (10.93,-3.29) .. controls (6.95,-1.4) and (3.31,-0.3) .. (0,0) .. controls (3.31,0.3) and (6.95,1.4) .. (10.93,3.29)   ;
%Straight Lines [id:da5809040173254009] 
\draw    (327,125) -- (327.37,161.43) ;
\draw [shift={(327.39,163.43)}, rotate = 269.42] [color={rgb, 255:red, 0; green, 0; blue, 0 }  ][line width=0.75]    (10.93,-3.29) .. controls (6.95,-1.4) and (3.31,-0.3) .. (0,0) .. controls (3.31,0.3) and (6.95,1.4) .. (10.93,3.29)   ;
%Straight Lines [id:da8163802938107894] 
\draw    (224,125) -- (224.37,161.43) ;
\draw [shift={(224.39,163.43)}, rotate = 269.42] [color={rgb, 255:red, 0; green, 0; blue, 0 }  ][line width=0.75]    (10.93,-3.29) .. controls (6.95,-1.4) and (3.31,-0.3) .. (0,0) .. controls (3.31,0.3) and (6.95,1.4) .. (10.93,3.29)   ;

% Text Node
\draw (319,164.4) node [anchor=north west][inner sep=0.75pt]    {$\Omega^{q}$};
% Text Node
\draw (216,164.4) node [anchor=north west][inner sep=0.75pt]    {$\Omega^{q-1}$};
% Text Node
\draw (519,164.4) node [anchor=north west][inner sep=0.75pt]    {$\Omega^{q+2}$};
% Text Node
\draw (411,164.4) node [anchor=north west][inner sep=0.75pt]    {$\Omega^{q+1}$};
% Text Node
\draw (114,164.4) node [anchor=north west][inner sep=0.75pt]    {$\Omega^{q-2}$};
% Text Node
\draw (562,172.4) node [anchor=north west][inner sep=0.75pt]    {$...$};
% Text Node
\draw (97,172.4) node [anchor=north west][inner sep=0.75pt]    {$...$};
% Text Node
\draw (163,149.4) node [anchor=north west][inner sep=0.75pt]    {$d_{q-2}$};
% Text Node
\draw (466,149.4) node [anchor=north west][inner sep=0.75pt]    {$d_{q+1}$};
% Text Node
\draw (359,149.4) node [anchor=north west][inner sep=0.75pt]    {$d_{q}$};
% Text Node
\draw (267,149.4) node [anchor=north west][inner sep=0.75pt]    {$d_{q-1}$};
% Text Node
\draw (319,104.4) node [anchor=north west][inner sep=0.75pt]    {$\Omega^{p}$};
% Text Node
\draw (216,104.4) node [anchor=north west][inner sep=0.75pt]    {$\Omega^{p-1}$};
% Text Node
\draw (519,104.4) node [anchor=north west][inner sep=0.75pt]    {$\Omega^{p+2}$};
% Text Node
\draw (411,104.4) node [anchor=north west][inner sep=0.75pt]    {$\Omega^{p+1}$};
% Text Node
\draw (114,104.4) node [anchor=north west][inner sep=0.75pt]    {$\Omega^{p-2}$};
% Text Node
\draw (562,114.4) node [anchor=north west][inner sep=0.75pt]    {$...$};
% Text Node
\draw (97,114.4) node [anchor=north west][inner sep=0.75pt]    {$...$};
% Text Node
\draw (163,89.4) node [anchor=north west][inner sep=0.75pt]    {$d_{p-2}$};
% Text Node
\draw (466,89.4) node [anchor=north west][inner sep=0.75pt]    {$d_{p+1}$};
% Text Node
\draw (359,89.4) node [anchor=north west][inner sep=0.75pt]    {$d_{p}$};
% Text Node
\draw (267,89.4) node [anchor=north west][inner sep=0.75pt]    {$d_{p-1}$};
% Text Node
\draw (422,128.4) node [anchor=north west][inner sep=0.75pt]    {$\star _{D}$};
% Text Node
\draw (329,128.4) node [anchor=north west][inner sep=0.75pt]    {$\star _{D-2}$};
% Text Node
\draw (226,128.4) node [anchor=north west][inner sep=0.75pt]    {$\star _{D-4}$};
% Text Node
\draw (126,128.4) node [anchor=north west][inner sep=0.75pt]    {$\star _{D-6}$};
% Text Node
\draw (532,128.4) node [anchor=north west][inner sep=0.75pt]    {$\star _{D+2}$};
\end{tikzpicture}
\label{topocosa}
\end{equation}
where every $\star_{D+2n}$ with $n \in \mathbb{Z} \setminus \{-\infty,+\infty\}$ is required to be a group homomorphism and $\star_D$ is the Hodge operator. Noting that $q-p=D-p-2-p=D-2p-2$, we have that 
\begin{equation}
    \star^{*} : C_{\text{dR}}^{*(p)} \mapsto C_{\text{dR}}^{*(q)} \ \ \ \ [D-2p-2]
\end{equation}
is an homotopy of cochain complexes; in critical dimension, i.e. $p=\frac{D-2}{2}=q$, the $p$-form gauge theory is self dual and the homotopy $\star^*$ is an isomorphism of cochain complexes since every space of forms is mapped in itself. Now, since we are interested in considering gauge field theories on Minkowski spacetime we can assume trivial topology, i.e. all the cohomology groups ($n \neq 0$) of the de Rham complex are trivial
\begin{equation}
    H^n=\frac{Z_n:=\{B\in \Omega^n | d_{n}B=0\}}{B_n:=\{d_{n-1}A \in \Omega^n | A \in \Omega^{n-1}\}}=0;
\end{equation}
this means that every cocycle is also a coboundary\footnote{In other words, every closed form is exact.}. Therefore, the de Rham complexes in \eqref{topocosa} are exact sequences of Abelian group. Now, let us restrict to only one de Rham complex\footnote{The same considerations hold for $C_{dR}^{*(q)}$.}, $C_{\textnormal{dR}}^{*(p)}$
\begin{equation}
\centering
\tikzset{every picture/.style={line width=0.75pt}} %set default line width to 0.75pt        
\begin{tikzpicture}[x=0.75pt,y=0.75pt,yscale=-1,xscale=1]
%uncomment if require: \path (0,372); %set diagram left start at 0, and has height of 372

%Straight Lines [id:da15372876143963032] 
\draw    (252,115) -- (314.77,115.36) ;
\draw [shift={(316.77,115.37)}, rotate = 180.33] [color={rgb, 255:red, 0; green, 0; blue, 0 }  ][line width=0.75]    (10.93,-3.29) .. controls (6.95,-1.4) and (3.31,-0.3) .. (0,0) .. controls (3.31,0.3) and (6.95,1.4) .. (10.93,3.29)   ;
%Straight Lines [id:da9090237494480807] 
\draw    (342,115) -- (404.77,115.36) ;
\draw [shift={(406.77,115.37)}, rotate = 180.33] [color={rgb, 255:red, 0; green, 0; blue, 0 }  ][line width=0.75]    (10.93,-3.29) .. controls (6.95,-1.4) and (3.31,-0.3) .. (0,0) .. controls (3.31,0.3) and (6.95,1.4) .. (10.93,3.29)   ;
%Straight Lines [id:da12270813842907646] 
\draw    (449,116) -- (511.77,116.36) ;
\draw [shift={(513.77,116.37)}, rotate = 180.33] [color={rgb, 255:red, 0; green, 0; blue, 0 }  ][line width=0.75]    (10.93,-3.29) .. controls (6.95,-1.4) and (3.31,-0.3) .. (0,0) .. controls (3.31,0.3) and (6.95,1.4) .. (10.93,3.29)   ;
%Straight Lines [id:da23504335073392668] 
\draw    (150,115) -- (212.77,115.36) ;
\draw [shift={(214.77,115.37)}, rotate = 180.33] [color={rgb, 255:red, 0; green, 0; blue, 0 }  ][line width=0.75]    (10.93,-3.29) .. controls (6.95,-1.4) and (3.31,-0.3) .. (0,0) .. controls (3.31,0.3) and (6.95,1.4) .. (10.93,3.29)   ;

% Text Node
\draw (319,104.4) node [anchor=north west][inner sep=0.75pt]    {$\Omega^{p}$};
% Text Node
\draw (216,104.4) node [anchor=north west][inner sep=0.75pt]    {$\Omega^{p-1}$};
% Text Node
\draw (519,104.4) node [anchor=north west][inner sep=0.75pt]    {$\Omega^{p+2}$};
% Text Node
\draw (411,104.4) node [anchor=north west][inner sep=0.75pt]    {$\Omega^{p+1}$};
% Text Node
\draw (114,104.4) node [anchor=north west][inner sep=0.75pt]    {$\Omega^{p-2}$};
% Text Node
\draw (562,114.4) node [anchor=north west][inner sep=0.75pt]    {$...$};
% Text Node
\draw (97,114.4) node [anchor=north west][inner sep=0.75pt]    {$...$};
% Text Node
\draw (163,89.4) node [anchor=north west][inner sep=0.75pt]    {$d_{p-2}$};
% Text Node
\draw (466,89.4) node [anchor=north west][inner sep=0.75pt]    {$d_{p+1}$};
% Text Node
\draw (359,89.4) node [anchor=north west][inner sep=0.75pt]    {$d_{p}$};
% Text Node
\draw (267,89.4) node [anchor=north west][inner sep=0.75pt]    {$d_{p-1}$};
\end{tikzpicture},
\end{equation}
and let us taken into account the fact we are interested in asymptotic symmetries. We start with a $p$-form gauge theory, with gauge field $B \in \Omega^p$ and the asymptotic charge is written in terms of the field strength $H=d_pB \in \Omega^{p+1}$ which we require to be non-vanishing\footnote{Otherwise the asymptotic charge would be zero and would be associated to a trivial gauge transformation.}. Since $C_{\mathrm{dR}}^{*(p)}$ on Minkowski spacetime is exact  we have $H=0 \Leftrightarrow B=d_{p-1}A$ for some $A \in \Omega^{p-1}$; hence we need to throw away all those elements $B \in \Omega^p$ such that $B=d_{p-1}A$ for some $A \in \Omega^{p-1}$. Moreover, only the zero form can have vanishing field strength but, again for exactness $B=0 \Leftrightarrow A=d_{p-2}C$ for some $C \in \Omega^{p-2}$. Therefore, for asymptotic symmetries scopes we can replace $\Omega^{p+1}$ with $\Omega^{p+1}_{\text{AS}}:=\{H \in \Omega^{p+1} | H=d_pB, H \neq 0\} \cup \{H=0\}$, $\Omega^{p}$ remains the same but we dubbed it as with $\Omega^{p}_{\text{AS}}$ and $\Omega^{p-1}$ and $\Omega^{p+1}$ with 0 to get the asymptotic symmetries de Rham complex $C_{\text{ASdR}}^{*(p)}$
\begin{equation}
\centering
\begin{tikzpicture}[x=0.75pt,y=0.75pt,yscale=-1,xscale=1]
%uncomment if require: \path (0,372); %set diagram left start at 0, and has height of 372

%Straight Lines [id:da15372876143963032] 
\draw    (233,115) -- (295.77,115.36) ;
\draw [shift={(297.77,115.37)}, rotate = 180.33] [color={rgb, 255:red, 0; green, 0; blue, 0 }  ][line width=0.75]    (10.93,-3.29) .. controls (6.95,-1.4) and (3.31,-0.3) .. (0,0) .. controls (3.31,0.3) and (6.95,1.4) .. (10.93,3.29)   ;
%Straight Lines [id:da9090237494480807] 
\draw    (332,115) -- (394.77,115.36) ;
\draw [shift={(396.77,115.37)}, rotate = 180.33] [color={rgb, 255:red, 0; green, 0; blue, 0 }  ][line width=0.75]    (10.93,-3.29) .. controls (6.95,-1.4) and (3.31,-0.3) .. (0,0) .. controls (3.31,0.3) and (6.95,1.4) .. (10.93,3.29)   ;
%Straight Lines [id:da12270813842907646] 
\draw    (439,116) -- (501.77,116.36) ;
\draw [shift={(503.77,116.37)}, rotate = 180.33] [color={rgb, 255:red, 0; green, 0; blue, 0 }  ][line width=0.75]    (10.93,-3.29) .. controls (6.95,-1.4) and (3.31,-0.3) .. (0,0) .. controls (3.31,0.3) and (6.95,1.4) .. (10.93,3.29)   ;

% Text Node
\draw (300,104.4) node [anchor=north west][inner sep=0.75pt]    {$\Omega_{\text{AS}}^{p}$};
% Text Node
\draw (216,108.4) node [anchor=north west][inner sep=0.75pt]    {$0$};
% Text Node
\draw (509,108.4) node [anchor=north west][inner sep=0.75pt]    {$0$};
% Text Node
\draw (401,104.4) node [anchor=north west][inner sep=0.75pt]    {$\Omega_{\text{AS}}^{p+1}$};
% Text Node
\draw (456,89.4) node [anchor=north west][inner sep=0.75pt]    {$d_{p+1}$};
% Text Node
\draw (349,89.4) node [anchor=north west][inner sep=0.75pt]    {$d_{p}$};
% Text Node
\draw (248,89.4) node [anchor=north west][inner sep=0.75pt]    {$d_{p-1}$};
\end{tikzpicture}
\end{equation} 
which is a short exact sequence. By general reasoning or by explicit computations follows that $d_p$ is an isomorphism. Looking now at diagram \eqref{topocosa} and reducing the de Rham complexes to the asymptotic symmetries de Rham complexes, Theorem \eqref{THM3.1} can be used to construct $f$ and then the duality map since now diagram \eqref{topocosa} reduces to the upper part of the diagram of Theorem \eqref{THM3.1}. Therefore, the duality map is topological in nature and can be constructed if and only if 
\begin{equation}
    H^p=H^{p+1}=0=H^{q+1}=H^{q}.
    \label{cohom}
\end{equation}
Indeed the vanishing of these cohomology groups is sufficient to reduce the full de Rham complex to the asymptotic symmetries de Rham complex and it is also necessary since to construct the duality map we need that $d_p$ and $d_q$ are isomorphisms.

\section{The $N$-multi-form space}\label{nmulti}
Let us review and clarify the $N$-multi-form space starting with the case of the bi-form space \cite{Medeiros_2003,Francia:2004lbf}. First of all, the basic definition 
\begin{Definition}[Bi-form space]
    \textit{The bi-form space on a $D$-dimensional differential manifold $(M,\mathcal{A})$ with metric $\boldsymbol{g}$, dubbed $\Omega^{p \otimes q}(M)$, is the tensor product space between the space of $p$-forms $\Omega^{p}(M)$ and the space of $q$-form $\Omega^{q}(M)$.}
\end{Definition}
In general the resulting tensor $T$ can be written as
\begin{equation}
    T=\frac{1}{p!q!}T_{[\mu_1...\mu_p][\nu_1...\nu_q]}(dx^{\mu_1} \wedge ... \wedge dx^{\mu_p}) \otimes (dx^{\nu_1} \wedge ... \wedge dx^{\nu_q}),
\end{equation}
and, always in general, carries a reducible representation of $\mathrm{GL}(D)$. To extract the irreducible representation labelled by the Young tableau $\lambda=(p,q)$ we imake us of the Young projector $\Pi_{(p,q)} : \Omega^{p \otimes q}(M) \rightarrow \Omega^{p \otimes q}(M)$. In the space of bi-forms are well defined the left and right differentials 
\begin{Definition}[Left and right differential]
    \textit{The left differential $d_L$ and right differential $d_R$ are the usual de Rham differential that act only on one of the spaces of the differential forms used to construct the space of the bi-forms. Therefore}
    \begin{equation}
        d_L: \Omega^{p \otimes q}(M) \rightarrow \Omega^{p+1\otimes q}(M), \ \ \ \ \ d_R: \Omega^{p \otimes q}(M) \rightarrow \Omega^{p\otimes q+1}(M).
    \end{equation}
\end{Definition}
The following result is easy to show
\begin{Proposition}[Left and right differential basic properties]\label{lrd}
    The left and right differentials are such that $d_L \circ d_L=0$, $d_R \circ d_R=0$, moreover they commute $d_L\circ d_R=d_R\circ d_L$.
\end{Proposition}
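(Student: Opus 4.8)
The plan is to reduce all three identities to the corresponding properties of the ordinary exterior derivative, since by the definition of the left and right differentials the operators $d_L$ and $d_R$ are nothing but the de Rham differential acting on the first and, respectively, the second tensor factor of $\Omega^{p\otimes q}(M)$. Concretely, I would first make their action explicit in coordinates: on a bi-form $T$ as displayed above,
\begin{equation}
d_L T=\frac{1}{p!q!}\,\partial_{\sigma}T_{[\mu_1\ldots\mu_p][\nu_1\ldots\nu_q]}\,(dx^{\sigma}\wedge dx^{\mu_1}\wedge\ldots\wedge dx^{\mu_p})\otimes(dx^{\nu_1}\wedge\ldots\wedge dx^{\nu_q}),
\end{equation}
and symmetrically for $d_R$, where instead the new cotangent leg $dx^{\rho}$ is appended to the second factor. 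This presentation shows that each differential only differentiates the components and decorates its own factor, leaving the other untouched.

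For nilpotency, I would apply $d_L$ twice and observe that the result is the contraction of $\partial_{\tau}\partial_{\sigma}T_{[\mu_1\ldots\mu_p][\nu_1\ldots\nu_q]}$ with $dx^{\tau}\wedge dx^{\sigma}$. Since second partial derivatives are symmetric in $\tau,\sigma$ while the wedge product is antisymmetric, the sum vanishes identically; this is verbatim the classical proof of $d^2=0$ transported to the first factor. The identity $d_R\circ d_R=0$ follows by the same argument applied to the second factor, so no separate computation is required.

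For the commutativity, I would compute $d_L\,d_R T$ and $d_R\,d_L T$ independently and compare. In both orders the outcome carries one new leg $dx^{\sigma}$ in the left factor, one new leg $dx^{\rho}$ in the right factor, and the double derivative $\partial_{\sigma}\partial_{\rho}T_{[\mu_1\ldots\mu_p][\nu_1\ldots\nu_q]}$ of the components, so the two expressions coincide precisely because $\partial_{\sigma}\partial_{\rho}=\partial_{\rho}\partial_{\sigma}$. The one point worth flagging — and the only place where care is genuinely needed — is the absence of a Koszul sign: in the abstract graded tensor product of complexes the operators $d\otimes 1$ and $1\otimes d$ anticommute, the sign arising from moving the degree-one operator past the graded first factor, whereas the coordinate convention adopted here appends each differential's cotangent index to its designated factor regardless of the order of application, so that would-be sign never appears and $d_L$ and $d_R$ commute as stated. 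I expect this sign bookkeeping to be the sole subtle step; everything else is an immediate transcription of $d^2=0$ and of the commutativity of partial derivatives.
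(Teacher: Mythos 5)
Your proof is correct and follows essentially the same route as the paper, which likewise reduces all three identities to the nilpotency of the ordinary de Rham differential and the commutativity of partial derivatives on smooth components. Your explicit coordinate computation and the remark on the absence of a Koszul sign (because each differential appends its cotangent leg to its own factor rather than acting as $d\otimes 1$ and $1\otimes d$ on a graded tensor product) are a useful elaboration of the paper's one-line argument, not a different method.
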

\begin{proof}
    It follows immediately from the nilpotency property of the standard de Rham differential and from the commutativity of derivatives since differential forms have smooth components.
\end{proof}
Since we are interested in the application of this formalism in gauge theories we can define the left and right field strength as follows.
\begin{Definition}[Left and right field strength]
    \textit{Given a gauge field $B$ whose writing in a chart $(\phi,U) \in \mathcal{A}$ is a mixed symmetry tensor carrying the irreducible representation corresponding to the Young tableau $\lambda=(p,q)$, the left field strength $H_L$ and the right field strength $H_R$ are the Young projected bi-forms given by}
    \begin{equation}
        H_L:=d_L B \in \Omega^{p+1\otimes q}(M), \ \ \ \ \ H_R:=d_R B \in \Omega^{p\otimes q+1}(M),
    \end{equation}
    \textit{when they are meaningful, i.e. when the application of Young projector extracts a bi-form with a well defined associated Young tableau.}
\end{Definition}
Roughly speaking, these are partial field strengths 
which are not completely gauge invariant but that are useful to write down lagrangians density; they correspond to the irreducible representation given by the Young tableaux $\lambda_L=(p+1,q)$ and $\lambda_R=(p,q+1)$. The full gauge invariant field strength can be constructed using the left and right differentials
\begin{Definition}[Field strength]
    \textit{Given a gauge field $B$ whose writing in a chart $(\phi,U) \in \mathcal{A}$ is a mixed symmetry tensor carrying the irreducible representation corresponding to the Young tableau $\lambda=(p,q)$, the field strength $H$ is the Young projected bi-forms given by}
    \begin{equation}
        H:=d_L \circ d_R B= d_R \circ d_L B\in \Omega^{p+1\otimes q+1}(M).
    \end{equation}
\end{Definition}
This object is by construction fully gauge invariant due to the properties proven in Proposition \ref{lrd} and its writing in a chart corresponds to a mixed symmetry tensor which carries the irreducible representation associated to the Young tableau $\lambda=(p+1,q+1)$. In a similar way we can introduce the left and right Hodge morphism 
\begin{Definition}[Left and right Hodge morphism]
\textit{The left Hodge morphism $\star_L$ and right Hodge morphism $\star_R$ are the usual Hodge morphism that act only on one of the spaces of the differential forms used to construct the space of the bi-forms. Therefore}
    \begin{equation}
        \star_L: \Omega^{p \otimes q}(M) \rightarrow \Omega^{D-p\otimes q}(M), \ \ \ \ \ \star_R: \Omega^{p \otimes q}(M) \rightarrow \Omega^{p\otimes D-q}(M).
    \end{equation}
 \textit{The full Hodge morphism is simply given by $\star:=\star_L\circ \star_R=\star_R\circ \star_L$.}    
\end{Definition}
This formalism can be generalized to the case of more than two form spaces as follows
\begin{Definition}[$N$-multi-form space]
    \textit{The $N$-multi-form space on a $D$-dimensional differential manifold $(M,\mathcal{A})$ with metric $\boldsymbol{g}$, dubbed $\Omega^{p_1 \otimes ... \otimes p_N}(M)$, is the tensor product space between the spaces of $p_i$-forms $\Omega^{p_i}(M)$ with $i \in [1,N]$.}
\end{Definition}
An object living in the $N$-multi-form space is given, in local coordinates, by a mixed symmetry tensor $T_{[\mu_1^{(1)}...\mu_{p_1}^{(1)}]...[\mu_1^{(N)}...\mu_{p_N}^{(N)}]}$ and to extract the irreducible representation labelled
by the Young tableaux $\lambda=(p_1,...,p_N)$ we need to introduce the Young projector $\Pi_{(p_1,..,p_N)} : \Omega^{p_1 \otimes ... \otimes p_N}(M) \rightarrow \Omega^{p_1 \otimes ... \otimes p_N}(M)$. In the $N$-multi-form space is defined the $i$-th differential
\begin{Definition}[$i$-th differential]
    \textit{The $i$-th differential $d^{(i)}$ is the usual de Rham differential that acts only on one of the spaces of the differential forms used to construct the space of the $N$-multi-forms. Therefore}
    \begin{equation}
        d^{(i)}: \Omega^{p_1 \otimes ... \otimes p_i \otimes ... \otimes p_N}(M) \rightarrow  \Omega^{p_1 \otimes ... \otimes p_i+1 \otimes ... \otimes p_N}(M).
    \end{equation}
\end{Definition}
Results of Proposition \ref{lrd} generalize quite obviously to $d^{(i)} \circ d^{(i)}=0 \ \ \forall i \in [1,N]$ and $d^{(i)} \circ d^{(j)}=d^{(j)} \circ d^{(i)} \ \ \forall i \neq j.$ Thanks to these differentials we can build up the $i$-th field strength
\begin{Definition}[$i$-th field strength]
    \textit{Given a gauge field $B$ whose writing in a chart $(\phi,U) \in \mathcal{A}$ is a mixed symmetry tensor carrying the irreducible representation corresponding to the Young tableau $\lambda=(p_1,...p_N)$, the $i$-th field strength $H^{(i)}$ is the Young projected $N$-multi-forms given by}
    \begin{equation}
        H^{(i)}:=d^{(i)} B \in \Omega^{p_1 \otimes ... \otimes p_i+1 \otimes ... \otimes p_N}(M),
    \end{equation}
    \textit{when they are meaningful, i.e. when the application of Young projector extracts a $N$-multi-form with a well defined associated Young tableau.}
\end{Definition}
As before, these are not fully gauge invariant and their writing in a chart corresponds to a mixed symmetry tensor which carries the irreducible representation associated to the Young tableaux $\lambda^{(i)}=(p_1,...,p_i+1,...,p_N)$. To define the field strength we introduce the following definition 
\begin{Definition}[$k$-cumulative field strength]
\textit{Given a gauge field $B$ whose writing in a chart $(\phi,U) \in \mathcal{A}$ is a mixed symmetry tensor carrying the irreducible representation corresponding to the Young tableau $\lambda=(p_1,...p_N)$, the $k$-cumulative field strength $H^{(k)}$ is the Young projected $N$-multi-forms given by}
    \begin{equation}
        H^{(k)}:=d^{(k)} \circ d^{(k-1)} \circ ... \circ d^{(2)} \circ d^{(1)}B \in \Omega^{p_1+1 \otimes ... \otimes p_k+1 \otimes p_{k+1} \otimes ... \otimes p_N}(M) \qquad k\leq N
    \end{equation} 
when it is meaningful, i.e. when the application of Young projector extracts a $N$-multi-form with a well defined associated Young tableau.
\end{Definition}
These objects will be useful in the description of the duality of HS gauge theories in the next Paragraph. In the end we have the definition of the field strength
\begin{Definition}[Field strength]\label{def2.5.10}
    \textit{Given a gauge field $B$ whose writing in a chart $(\phi,U) \in \mathcal{A}$ is a mixed symmetry tensor carrying the irreducible representation corresponding to the Young tableau $\lambda=(p_1,...p_N)$, the field strength $H$ is the Young projected $N$-multi-forms given by}
    \begin{equation}
        H:=d^{(N)} \circ d^{(N-1)} \circ ... \circ d^{(2)} \circ d^{(1)}B \in \Omega^{p_1+1 \otimes ... \otimes p_i+1 \otimes ... \otimes p_N+1}(M).
    \end{equation}
\textit{Equivalently, the field strength $H$ is the $N$-cumulative field strength.}
\end{Definition}
This object is by construction fully gauge invariant due to the properties generalized from Proposition \ref{lrd} and its writing in a chart correspond to a mixed symmetry tensor which carries the irreducible representation associated to the Young tableau $\lambda=(p_1+1,...,p_N+1)$. We can also introduce the $i$-th Hodge morphism as follows
\begin{Definition}[$i$-th Hodge morphism]
\textit{The $i$-th Hodge morphism $\star^{(i)}$ is the usual Hodge morphism that acts only on one of the spaces of the differential forms used to construct the space of the $N$-multi-forms. Therefore}
    \begin{equation}
        \star^{(i)}: \Omega^{p_1 \otimes ... \otimes p_i \otimes ... \otimes p_N}(M) \rightarrow  \Omega^{p_1 \otimes ... \otimes D-p_i \otimes ... \otimes p_N}(M).
    \end{equation}
 \textit{The full Hodge morphism is simply given by $\star:=\star^{(N)} \circ \star^{(N-1)} \circ ... \circ \star^{(2)} \circ \star^{(1)}$.}    
\end{Definition}
In the case $N=1$, i.e. the tensor product is trivial and we have only a standard space of forms of some degree, all the $i$-th field strength degenerate to the only field strength $H$.\\

\section{Duality for well defined charges in mixed symmetry gauge theories}\label{MSTT}
In this Paragraph we want to extend the Theorem \ref{THM3.1} to the case of mixed symmetry tensors. We suppose we are dealing with a gauge field $B$ whose writing in a chart is a mixed symmetry tensor carrying the irreducible representation corresponding to the Young tableau $\lambda=\{p_1,...,p_N\}$.  We proceed in fully general and abstract way, supposing that we have already calculated the asymptotic charges of both the original description and its $r$ dual descriptions and we assume these are well defined charges. We refer to these charges as $Q_0, ..., Q_r$ where $Q_0$ is the asymptotic charge of the original description while $Q_r$ that of its $r$-th dual description. The point is to look at the mixed symmetry tensor as a Young projected element of the $N$-multi-form space using the Young projector $\Pi_{(p_1,...,p_N)}$. Doing so we can construct de Rham-like complexes for the $N$-multi-form space; moreover, requiring the vanishing of some de Rham-like cohomology groups we can reduce them to the asymptotic symmetries de Rham-like complexes and following the idea of Theorem \ref{THM3.1}' proof the game is over. \\

\subsection{The de Rham-like complexes for differential mixed symmetry tensors}\label{derhamcomgen}
We now generalize the de Rham complex where space of differential forms are replaced by space of differential $N$-multi-form. Specifically, we search for the generalization to differential mixed symmetry tensors where the term differential means that, in chart, components are $C^{\infty}$ functions. Before moving on, let us give an useful definition 
\begin{Definition}[Principal $\lambda$-subspace of the $N$-multi-form space]
    Let $\Omega^{p_1\otimes ... \otimes p_N}(M)$ be the $N$-multi-form space on a differential manifold $(M,\mathcal{A})$ of dimension $D$, its principal $\lambda$-subspace $\Omega_{\lambda}^{p_1\otimes ... \otimes p_N}(M)$ is the subspace of mixed symmetry tensors which carry the irreducible representation labelled by the Young tableau $\lambda=\{\lambda_1,...,\lambda_N\}$ such that $\lambda_i \leq p_i\leq D \ \ \forall i\in [1,N]$.
\end{Definition}
Hence, for example, the principal $\{1,1\}$-subspace of $\Omega^{1 \otimes 1}(M)$ is the subspace containing tensors which carry the irreducible representation labelled by the Young tableau $\lambda=\{1,1\}$, i.e. tensors with the indexes symmetry of the graviton field. The Young projector $\Pi_{\lambda}$ furnishes a natural projection from the $N$-multi-form space to its principal subspace 
\begin{equation}
   \Pi_{\lambda}: \Omega^{p_1\otimes ... \otimes p_N}(M) \rightarrow \Omega_{\lambda}^{p_1\otimes ... \otimes p_N}(M)
\end{equation}
which sends a generic $N$-multi-form in a mixed symmetry tensor carrying the irreducible representation labelled by the Young tableau $\lambda$. 
\begin{comment}
For example, using Young tableaux notation we have
\begin{equation}
    \Pi_{\{\lambda_1,...,\lambda_N\}}:     \begin{tabular}{r@{}l}
    \raisebox{-5.8ex} &
    \begin{ytableau}
    ~    1            \\
    ~  2  \\
      \none[\vdots]                \\
    ~     p_1                 
    \end{ytableau}\\[-1ex]

    \end{tabular} \otimes ... \otimes \begin{tabular}{r@{}l}
    \raisebox{-5.8ex} &
    \begin{ytableau}
    ~   1             \\
    ~   2 \\
      \none[\vdots]                \\
    ~        p_N           \\                    
    \end{ytableau}\\[-1ex]

    \end{tabular} \mapsto \begin{tabular}{r@{}l}
     &
    \begin{ytableau}
    ~    1           & \none[\hdots] & 1 \\
     ~ 2   &  \none[\hdots] &  2 \\
          \none[\vdots]  & \none[\hdots]   & \none[\vdots]          \\
    ~  p_1  & \none & p_N   \\
    \end{ytableau}\\[-1.5ex]
    \end{tabular}
\end{equation}
\end{comment}
In order to formulate a de Rham-like complex for the differential mixed symmetry tensors, we have at our disposal the $i$-th differentials $d^{(i)}$ with $i \in [1,N]$. The main idea is to follow the structure of the Young lattice, or more precisely, the Hasse diagram of the Young lattice we report in the following Figure \ref{hassedia}. 
\begin{figure}[H]
    \centering
    \includegraphics[width = 5 cm]{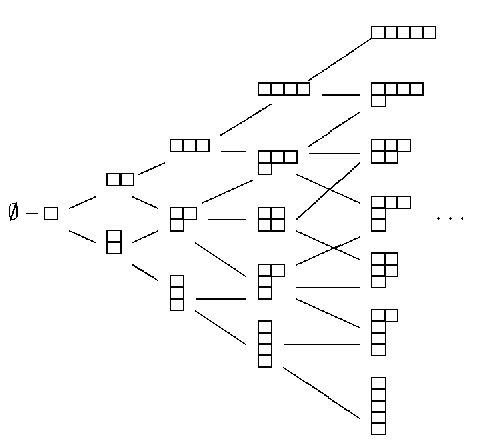}
    \caption{\textit{Hasse diagram of Young's lattice}}
    \label{hassedia}
\end{figure}
In the following and until the end (and also in Appendix \ref{appfasci}), we are going to call the $\{p_1,...,p_N\}$-subspace $\Omega_{\{p_1,...,p_N\}}^{p_1\otimes ... \otimes p_N}(M)$ of $\Omega^{p_1\otimes ... \otimes p_N}(M)$ simply with $\Omega^{p_1\otimes ... \otimes p_N}$ unless explicitly specified. Looking at the Hasse diagram of the Young's lattice we can construct, for every fixed $N$, a $N$-complex where every square commutes. In the case $N=2$, we can draw it as 
\begin{equation}
    \centering
    \tikzset{every picture/.style={line width=0.75pt}} %set default line width to 0.75pt        
\begin{tikzpicture}[x=0.65pt,y=0.65pt,yscale=-1,xscale=1]
%uncomment if require: \path (0,681); %set diagram left start at 0, and has height of 681

%Straight Lines [id:da5838320348249247] 
\draw    (146.72,137.5) -- (190.76,137.7) ;
\draw [shift={(192.76,137.71)}, rotate = 180.26] [color={rgb, 255:red, 0; green, 0; blue, 0 }  ][line width=0.75]    (10.93,-3.29) .. controls (6.95,-1.4) and (3.31,-0.3) .. (0,0) .. controls (3.31,0.3) and (6.95,1.4) .. (10.93,3.29)   ;
%Straight Lines [id:da026516666594942895] 
\draw    (146.34,217.5) -- (190.38,217.7) ;
\draw [shift={(192.38,217.71)}, rotate = 180.26] [color={rgb, 255:red, 0; green, 0; blue, 0 }  ][line width=0.75]    (10.93,-3.29) .. controls (6.95,-1.4) and (3.31,-0.3) .. (0,0) .. controls (3.31,0.3) and (6.95,1.4) .. (10.93,3.29)   ;
%Straight Lines [id:da03873457905263156] 
\draw    (243.34,221.5) -- (287.38,221.7) ;
\draw [shift={(289.38,221.71)}, rotate = 180.26] [color={rgb, 255:red, 0; green, 0; blue, 0 }  ][line width=0.75]    (10.93,-3.29) .. controls (6.95,-1.4) and (3.31,-0.3) .. (0,0) .. controls (3.31,0.3) and (6.95,1.4) .. (10.93,3.29)   ;
%Straight Lines [id:da42231255666153433] 
\draw [color={rgb, 255:red, 65; green, 117; blue, 5 }  ,draw opacity=1 ]   (111,148) -- (110.77,204.36) ;
\draw [shift={(110.76,206.36)}, rotate = 270.24] [color={rgb, 255:red, 65; green, 117; blue, 5 }  ,draw opacity=1 ][line width=0.75]    (10.93,-3.29) .. controls (6.95,-1.4) and (3.31,-0.3) .. (0,0) .. controls (3.31,0.3) and (6.95,1.4) .. (10.93,3.29)   ;
%Straight Lines [id:da17354272015841377] 
\draw    (112,230) -- (111.77,286.36) ;
\draw [shift={(111.76,288.36)}, rotate = 270.24] [color={rgb, 255:red, 0; green, 0; blue, 0 }  ][line width=0.75]    (10.93,-3.29) .. controls (6.95,-1.4) and (3.31,-0.3) .. (0,0) .. controls (3.31,0.3) and (6.95,1.4) .. (10.93,3.29)   ;
%Straight Lines [id:da6175697987771047] 
\draw    (112,314) -- (111.77,370.36) ;
\draw [shift={(111.76,372.36)}, rotate = 270.24] [color={rgb, 255:red, 0; green, 0; blue, 0 }  ][line width=0.75]    (10.93,-3.29) .. controls (6.95,-1.4) and (3.31,-0.3) .. (0,0) .. controls (3.31,0.3) and (6.95,1.4) .. (10.93,3.29)   ;
%Straight Lines [id:da3578291714792352] 
\draw    (110.62,67) -- (110.38,123.36) ;
\draw [shift={(110.37,125.36)}, rotate = 270.24] [color={rgb, 255:red, 0; green, 0; blue, 0 }  ][line width=0.75]    (10.93,-3.29) .. controls (6.95,-1.4) and (3.31,-0.3) .. (0,0) .. controls (3.31,0.3) and (6.95,1.4) .. (10.93,3.29)   ;
%Straight Lines [id:da3505435352327877] 
\draw    (202,147) -- (201.77,203.36) ;
\draw [shift={(201.76,205.36)}, rotate = 270.24] [color={rgb, 255:red, 0; green, 0; blue, 0 }  ][line width=0.75]    (10.93,-3.29) .. controls (6.95,-1.4) and (3.31,-0.3) .. (0,0) .. controls (3.31,0.3) and (6.95,1.4) .. (10.93,3.29)   ;
%Straight Lines [id:da09362619834706654] 
\draw    (203,230) -- (202.77,286.36) ;
\draw [shift={(202.76,288.36)}, rotate = 270.24] [color={rgb, 255:red, 0; green, 0; blue, 0 }  ][line width=0.75]    (10.93,-3.29) .. controls (6.95,-1.4) and (3.31,-0.3) .. (0,0) .. controls (3.31,0.3) and (6.95,1.4) .. (10.93,3.29)   ;
%Straight Lines [id:da16689143177619892] 
\draw    (301,228) -- (300.77,284.36) ;
\draw [shift={(300.76,286.36)}, rotate = 270.24] [color={rgb, 255:red, 0; green, 0; blue, 0 }  ][line width=0.75]    (10.93,-3.29) .. controls (6.95,-1.4) and (3.31,-0.3) .. (0,0) .. controls (3.31,0.3) and (6.95,1.4) .. (10.93,3.29)   ;
%Straight Lines [id:da17726130689707753] 
\draw    (149.34,298.5) -- (193.38,298.7) ;
\draw [shift={(195.38,298.71)}, rotate = 180.26] [color={rgb, 255:red, 0; green, 0; blue, 0 }  ][line width=0.75]    (10.93,-3.29) .. controls (6.95,-1.4) and (3.31,-0.3) .. (0,0) .. controls (3.31,0.3) and (6.95,1.4) .. (10.93,3.29)   ;
%Straight Lines [id:da627542902903835] 
\draw    (244.34,295.5) -- (288.38,295.7) ;
\draw [shift={(290.38,295.71)}, rotate = 180.26] [color={rgb, 255:red, 0; green, 0; blue, 0 }  ][line width=0.75]    (10.93,-3.29) .. controls (6.95,-1.4) and (3.31,-0.3) .. (0,0) .. controls (3.31,0.3) and (6.95,1.4) .. (10.93,3.29)   ;
%Straight Lines [id:da8078543111473254] 
\draw    (341.34,298.5) -- (385.38,298.7) ;
\draw [shift={(387.38,298.71)}, rotate = 180.26] [color={rgb, 255:red, 0; green, 0; blue, 0 }  ][line width=0.75]    (10.93,-3.29) .. controls (6.95,-1.4) and (3.31,-0.3) .. (0,0) .. controls (3.31,0.3) and (6.95,1.4) .. (10.93,3.29)   ;
%Straight Lines [id:da2332536602893689] 
\draw    (202,314) -- (201.77,370.36) ;
\draw [shift={(201.76,372.36)}, rotate = 270.24] [color={rgb, 255:red, 0; green, 0; blue, 0 }  ][line width=0.75]    (10.93,-3.29) .. controls (6.95,-1.4) and (3.31,-0.3) .. (0,0) .. controls (3.31,0.3) and (6.95,1.4) .. (10.93,3.29)   ;
%Straight Lines [id:da6132059992580368] 
\draw    (302,313) -- (301.77,369.36) ;
\draw [shift={(301.76,371.36)}, rotate = 270.24] [color={rgb, 255:red, 0; green, 0; blue, 0 }  ][line width=0.75]    (10.93,-3.29) .. controls (6.95,-1.4) and (3.31,-0.3) .. (0,0) .. controls (3.31,0.3) and (6.95,1.4) .. (10.93,3.29)   ;
%Straight Lines [id:da09291786513935862] 
\draw    (400,313) -- (399.77,369.36) ;
\draw [shift={(399.76,371.36)}, rotate = 270.24] [color={rgb, 255:red, 0; green, 0; blue, 0 }  ][line width=0.75]    (10.93,-3.29) .. controls (6.95,-1.4) and (3.31,-0.3) .. (0,0) .. controls (3.31,0.3) and (6.95,1.4) .. (10.93,3.29)   ;
%Straight Lines [id:da14777299146323908] 
\draw    (428.34,392.5) -- (472.38,392.7) ;
\draw [shift={(474.38,392.71)}, rotate = 180.26] [color={rgb, 255:red, 0; green, 0; blue, 0 }  ][line width=0.75]    (10.93,-3.29) .. controls (6.95,-1.4) and (3.31,-0.3) .. (0,0) .. controls (3.31,0.3) and (6.95,1.4) .. (10.93,3.29)   ;
%Straight Lines [id:da5895276643653773] 
\draw    (130.34,392.5) -- (174.38,392.7) ;
\draw [shift={(176.38,392.71)}, rotate = 180.26] [color={rgb, 255:red, 0; green, 0; blue, 0 }  ][line width=0.75]    (10.93,-3.29) .. controls (6.95,-1.4) and (3.31,-0.3) .. (0,0) .. controls (3.31,0.3) and (6.95,1.4) .. (10.93,3.29)   ;
%Straight Lines [id:da8041923338821693] 
\draw    (231.34,392.5) -- (275.38,392.7) ;
\draw [shift={(277.38,392.71)}, rotate = 180.26] [color={rgb, 255:red, 0; green, 0; blue, 0 }  ][line width=0.75]    (10.93,-3.29) .. controls (6.95,-1.4) and (3.31,-0.3) .. (0,0) .. controls (3.31,0.3) and (6.95,1.4) .. (10.93,3.29)   ;
%Straight Lines [id:da8081675003778568] 
\draw    (326.34,392.5) -- (370.38,392.7) ;
\draw [shift={(372.38,392.71)}, rotate = 180.26] [color={rgb, 255:red, 0; green, 0; blue, 0 }  ][line width=0.75]    (10.93,-3.29) .. controls (6.95,-1.4) and (3.31,-0.3) .. (0,0) .. controls (3.31,0.3) and (6.95,1.4) .. (10.93,3.29)   ;
%Straight Lines [id:da2777790061408678] 
\draw [color={rgb, 255:red, 74; green, 144; blue, 226 }  ,draw opacity=1 ]   (123.98,62.54) -- (195.48,125.22) ;
\draw [shift={(196.98,126.54)}, rotate = 221.24] [color={rgb, 255:red, 74; green, 144; blue, 226 }  ,draw opacity=1 ][line width=0.75]    (10.93,-3.29) .. controls (6.95,-1.4) and (3.31,-0.3) .. (0,0) .. controls (3.31,0.3) and (6.95,1.4) .. (10.93,3.29)   ;
%Straight Lines [id:da37341499728382255] 
\draw [color={rgb, 255:red, 74; green, 144; blue, 226 }  ,draw opacity=1 ]   (217.98,143.54) -- (289.48,206.22) ;
\draw [shift={(290.98,207.54)}, rotate = 221.24] [color={rgb, 255:red, 74; green, 144; blue, 226 }  ,draw opacity=1 ][line width=0.75]    (10.93,-3.29) .. controls (6.95,-1.4) and (3.31,-0.3) .. (0,0) .. controls (3.31,0.3) and (6.95,1.4) .. (10.93,3.29)   ;
%Straight Lines [id:da7052960285956328] 
\draw [color={rgb, 255:red, 74; green, 144; blue, 226 }  ,draw opacity=1 ]   (315.98,225.54) -- (387.48,288.22) ;
\draw [shift={(388.98,289.54)}, rotate = 221.24] [color={rgb, 255:red, 74; green, 144; blue, 226 }  ,draw opacity=1 ][line width=0.75]    (10.93,-3.29) .. controls (6.95,-1.4) and (3.31,-0.3) .. (0,0) .. controls (3.31,0.3) and (6.95,1.4) .. (10.93,3.29)   ;
%Straight Lines [id:da0015837493872976616] 
\draw [color={rgb, 255:red, 74; green, 144; blue, 226 }  ,draw opacity=1 ]   (408.98,305.54) -- (486.21,372.85) ;
\draw [shift={(487.72,374.16)}, rotate = 221.07] [color={rgb, 255:red, 74; green, 144; blue, 226 }  ,draw opacity=1 ][line width=0.75]    (10.93,-3.29) .. controls (6.95,-1.4) and (3.31,-0.3) .. (0,0) .. controls (3.31,0.3) and (6.95,1.4) .. (10.93,3.29)   ;
%Straight Lines [id:da44128861632684313] 
\draw [color={rgb, 255:red, 245; green, 166; blue, 35 }  ,draw opacity=1 ]   (121.98,142.54) -- (193.48,205.22) ;
\draw [shift={(194.98,206.54)}, rotate = 221.24] [color={rgb, 255:red, 245; green, 166; blue, 35 }  ,draw opacity=1 ][line width=0.75]    (10.93,-3.29) .. controls (6.95,-1.4) and (3.31,-0.3) .. (0,0) .. controls (3.31,0.3) and (6.95,1.4) .. (10.93,3.29)   ;
%Straight Lines [id:da2677114892452812] 
\draw [color={rgb, 255:red, 245; green, 166; blue, 35 }  ,draw opacity=1 ]   (219,225.5) -- (289.51,288.17) ;
\draw [shift={(291,289.5)}, rotate = 221.63] [color={rgb, 255:red, 245; green, 166; blue, 35 }  ,draw opacity=1 ][line width=0.75]    (10.93,-3.29) .. controls (6.95,-1.4) and (3.31,-0.3) .. (0,0) .. controls (3.31,0.3) and (6.95,1.4) .. (10.93,3.29)   ;
%Straight Lines [id:da5550853921500889] 
\draw [color={rgb, 255:red, 245; green, 166; blue, 35 }  ,draw opacity=1 ]   (312.98,306.54) -- (384.48,369.22) ;
\draw [shift={(385.98,370.54)}, rotate = 221.24] [color={rgb, 255:red, 245; green, 166; blue, 35 }  ,draw opacity=1 ][line width=0.75]    (10.93,-3.29) .. controls (6.95,-1.4) and (3.31,-0.3) .. (0,0) .. controls (3.31,0.3) and (6.95,1.4) .. (10.93,3.29)   ;
%Straight Lines [id:da8412715561139734] 
\draw [color={rgb, 255:red, 65; green, 117; blue, 5 }  ,draw opacity=1 ]   (121.98,224.54) -- (193.48,287.22) ;
\draw [shift={(194.98,288.54)}, rotate = 221.24] [color={rgb, 255:red, 65; green, 117; blue, 5 }  ,draw opacity=1 ][line width=0.75]    (10.93,-3.29) .. controls (6.95,-1.4) and (3.31,-0.3) .. (0,0) .. controls (3.31,0.3) and (6.95,1.4) .. (10.93,3.29)   ;
%Straight Lines [id:da49639713330378255] 
\draw [color={rgb, 255:red, 65; green, 117; blue, 5 }  ,draw opacity=1 ]   (216.98,308.54) -- (288.48,371.22) ;
\draw [shift={(289.98,372.54)}, rotate = 221.24] [color={rgb, 255:red, 65; green, 117; blue, 5 }  ,draw opacity=1 ][line width=0.75]    (10.93,-3.29) .. controls (6.95,-1.4) and (3.31,-0.3) .. (0,0) .. controls (3.31,0.3) and (6.95,1.4) .. (10.93,3.29)   ;
%Straight Lines [id:da7543909268209998] 
\draw [color={rgb, 255:red, 144; green, 19; blue, 254 }  ,draw opacity=1 ]   (120.98,312.54) -- (192.48,375.22) ;
\draw [shift={(193.98,376.54)}, rotate = 221.24] [color={rgb, 255:red, 144; green, 19; blue, 254 }  ,draw opacity=1 ][line width=0.75]    (10.93,-3.29) .. controls (6.95,-1.4) and (3.31,-0.3) .. (0,0) .. controls (3.31,0.3) and (6.95,1.4) .. (10.93,3.29)   ;
%Straight Lines [id:da3427539427823819] 
\draw    (112,402) -- (111.77,458.36) ;
\draw [shift={(111.76,460.36)}, rotate = 270.24] [color={rgb, 255:red, 0; green, 0; blue, 0 }  ][line width=0.75]    (10.93,-3.29) .. controls (6.95,-1.4) and (3.31,-0.3) .. (0,0) .. controls (3.31,0.3) and (6.95,1.4) .. (10.93,3.29)   ;
%Straight Lines [id:da2099672195842064] 
\draw    (414.34,474.5) -- (458.38,474.7) ;
\draw [shift={(460.38,474.71)}, rotate = 180.26] [color={rgb, 255:red, 0; green, 0; blue, 0 }  ][line width=0.75]    (10.93,-3.29) .. controls (6.95,-1.4) and (3.31,-0.3) .. (0,0) .. controls (3.31,0.3) and (6.95,1.4) .. (10.93,3.29)   ;
%Straight Lines [id:da06544876657559595] 
\draw    (143.34,475.5) -- (187.38,475.7) ;
\draw [shift={(189.38,475.71)}, rotate = 180.26] [color={rgb, 255:red, 0; green, 0; blue, 0 }  ][line width=0.75]    (10.93,-3.29) .. controls (6.95,-1.4) and (3.31,-0.3) .. (0,0) .. controls (3.31,0.3) and (6.95,1.4) .. (10.93,3.29)   ;
%Straight Lines [id:da8332580832246195] 
\draw    (244.34,475.5) -- (288.38,475.7) ;
\draw [shift={(290.38,475.71)}, rotate = 180.26] [color={rgb, 255:red, 0; green, 0; blue, 0 }  ][line width=0.75]    (10.93,-3.29) .. controls (6.95,-1.4) and (3.31,-0.3) .. (0,0) .. controls (3.31,0.3) and (6.95,1.4) .. (10.93,3.29)   ;
%Straight Lines [id:da8330937685100664] 
\draw    (331.34,475.5) -- (375.38,475.7) ;
\draw [shift={(377.38,475.71)}, rotate = 180.26] [color={rgb, 255:red, 0; green, 0; blue, 0 }  ][line width=0.75]    (10.93,-3.29) .. controls (6.95,-1.4) and (3.31,-0.3) .. (0,0) .. controls (3.31,0.3) and (6.95,1.4) .. (10.93,3.29)   ;
%Straight Lines [id:da5558163757418666] 
\draw    (206,401) -- (205.77,457.36) ;
\draw [shift={(205.76,459.36)}, rotate = 270.24] [color={rgb, 255:red, 0; green, 0; blue, 0 }  ][line width=0.75]    (10.93,-3.29) .. controls (6.95,-1.4) and (3.31,-0.3) .. (0,0) .. controls (3.31,0.3) and (6.95,1.4) .. (10.93,3.29)   ;
%Straight Lines [id:da36062743125958163] 
\draw [color={rgb, 255:red, 74; green, 144; blue, 226 }  ,draw opacity=1 ]   (512.97,397.41) -- (580.43,456.39) ;
\draw [shift={(581.94,457.7)}, rotate = 221.16] [color={rgb, 255:red, 74; green, 144; blue, 226 }  ,draw opacity=1 ][line width=0.75]    (10.93,-3.29) .. controls (6.95,-1.4) and (3.31,-0.3) .. (0,0) .. controls (3.31,0.3) and (6.95,1.4) .. (10.93,3.29)   ;
%Straight Lines [id:da49414036331756017] 
\draw    (521.34,475.5) -- (565.38,475.7) ;
\draw [shift={(567.38,475.71)}, rotate = 180.26] [color={rgb, 255:red, 0; green, 0; blue, 0 }  ][line width=0.75]    (10.93,-3.29) .. controls (6.95,-1.4) and (3.31,-0.3) .. (0,0) .. controls (3.31,0.3) and (6.95,1.4) .. (10.93,3.29)   ;
%Straight Lines [id:da6111838214538585] 
\draw [color={rgb, 255:red, 245; green, 166; blue, 35 }  ,draw opacity=1 ]   (413.98,397.54) -- (485.48,460.22) ;
\draw [shift={(486.98,461.54)}, rotate = 221.24] [color={rgb, 255:red, 245; green, 166; blue, 35 }  ,draw opacity=1 ][line width=0.75]    (10.93,-3.29) .. controls (6.95,-1.4) and (3.31,-0.3) .. (0,0) .. controls (3.31,0.3) and (6.95,1.4) .. (10.93,3.29)   ;

% Text Node
\draw (104.38,126.4) node [anchor=north west][inner sep=0.75pt]    {$\Omega ^{1\otimes 0}$};
% Text Node
\draw (152.81,113.4) node [anchor=north west][inner sep=0.75pt]    {$d^{( 2)}$};
% Text Node
\draw (196.69,126.4) node [anchor=north west][inner sep=0.75pt]    {$\Omega ^{1\otimes 1}$};
% Text Node
\draw (104,206.4) node [anchor=north west][inner sep=0.75pt]    {$\Omega ^{2\otimes 0}$};
% Text Node
\draw (152.43,193.4) node [anchor=north west][inner sep=0.75pt]    {$d^{( 2)}$};
% Text Node
\draw (199.31,207.4) node [anchor=north west][inner sep=0.75pt]    {$\Omega ^{2\otimes 1}$};
% Text Node
\draw (106,290.4) node [anchor=north west][inner sep=0.75pt]    {$\Omega ^{3\otimes 0}$};
% Text Node
\draw (249.43,197.4) node [anchor=north west][inner sep=0.75pt]    {$d^{( 2)}$};
% Text Node
\draw (391.81,386.4) node [anchor=north west][inner sep=0.75pt]    {$...$};
% Text Node
\draw (295.81,386.4) node [anchor=north west][inner sep=0.75pt]    {$...$};
% Text Node
\draw (116,163.4) node [anchor=north west][inner sep=0.75pt]    {$d^{( 1)}$};
% Text Node
\draw (117,245.4) node [anchor=north west][inner sep=0.75pt]    {$d^{( 1)}$};
% Text Node
\draw (117,329.4) node [anchor=north west][inner sep=0.75pt]    {$d^{( 1)}$};
% Text Node
\draw (104,45.4) node [anchor=north west][inner sep=0.75pt]    {$\Omega ^{0\otimes 0}$};
% Text Node
\draw (115.62,82.4) node [anchor=north west][inner sep=0.75pt]    {$d^{( 1)}$};
% Text Node
\draw (207,162.4) node [anchor=north west][inner sep=0.75pt]    {$d^{( 1)}$};
% Text Node
\draw (297.31,208.4) node [anchor=north west][inner sep=0.75pt]    {$\Omega ^{2\otimes 2}$};
% Text Node
\draw (198,289.4) node [anchor=north west][inner sep=0.75pt]    {$\Omega ^{3\otimes 1}$};
% Text Node
\draw (208,245.4) node [anchor=north west][inner sep=0.75pt]    {$d^{( 1)}$};
% Text Node
\draw (294,287.4) node [anchor=north west][inner sep=0.75pt]    {$\Omega ^{3\otimes 2}$};
% Text Node
\draw (306,243.4) node [anchor=north west][inner sep=0.75pt]    {$d^{( 1)}$};
% Text Node
\draw (155.43,274.4) node [anchor=north west][inner sep=0.75pt]    {$d^{( 2)}$};
% Text Node
\draw (248.43,273.4) node [anchor=north west][inner sep=0.75pt]    {$d^{( 2)}$};
% Text Node
\draw (347.43,274.4) node [anchor=north west][inner sep=0.75pt]    {$d^{( 2)}$};
% Text Node
\draw (390,289.4) node [anchor=north west][inner sep=0.75pt]    {$\Omega ^{3\otimes 3}$};
% Text Node
\draw (207,329.4) node [anchor=north west][inner sep=0.75pt]    {$d^{( 1)}$};
% Text Node
\draw (307,328.4) node [anchor=north west][inner sep=0.75pt]    {$d^{( 1)}$};
% Text Node
\draw (405,328.4) node [anchor=north west][inner sep=0.75pt]    {$d^{( 1)}$};
% Text Node
\draw (435.43,368.4) node [anchor=north west][inner sep=0.75pt]    {$d^{( 2)}$};
% Text Node
\draw (194.81,386.4) node [anchor=north west][inner sep=0.75pt]    {$...$};
% Text Node
\draw (107.81,386.4) node [anchor=north west][inner sep=0.75pt]    {$...$};
% Text Node
\draw (489.72,386.4) node [anchor=north west][inner sep=0.75pt]    {$...$};
% Text Node
\draw (137.43,368.4) node [anchor=north west][inner sep=0.75pt]    {$d^{( 2)}$};
% Text Node
\draw (238.43,367.4) node [anchor=north west][inner sep=0.75pt]    {$d^{( 2)}$};
% Text Node
\draw (333.43,368.4) node [anchor=north west][inner sep=0.75pt]    {$d^{( 2)}$};
% Text Node
\draw (117,417.4) node [anchor=north west][inner sep=0.75pt]    {$d^{( 1)}$};
% Text Node
\draw (104.38,462.4) node [anchor=north west][inner sep=0.75pt]    {$\Omega ^{D\otimes 0}$};
% Text Node
\draw (389.81,468.4) node [anchor=north west][inner sep=0.75pt]    {$...$};
% Text Node
\draw (301.81,468.4) node [anchor=north west][inner sep=0.75pt]    {$...$};
% Text Node
\draw (420.43,450.4) node [anchor=north west][inner sep=0.75pt]    {$d^{( 2)}$};
% Text Node
\draw (149.43,451.4) node [anchor=north west][inner sep=0.75pt]    {$d^{( 2)}$};
% Text Node
\draw (250.43,450.4) node [anchor=north west][inner sep=0.75pt]    {$d^{( 2)}$};
% Text Node
\draw (337.43,451.4) node [anchor=north west][inner sep=0.75pt]    {$d^{( 2)}$};
% Text Node
\draw (199,462.4) node [anchor=north west][inner sep=0.75pt]    {$\Omega ^{D\otimes 1}$};
% Text Node
\draw (211,416.4) node [anchor=north west][inner sep=0.75pt]    {$d^{( 1)}$};
% Text Node
\draw (571,464.4) node [anchor=north west][inner sep=0.75pt]    {$\Omega ^{D\otimes D}$};
% Text Node
\draw (527.43,451.4) node [anchor=north west][inner sep=0.75pt]    {$d^{( 2)}$};
% Text Node
\draw (465,464.4) node [anchor=north west][inner sep=0.75pt]    {$\Omega ^{D\otimes D-1}$};
\end{tikzpicture}.
\label{derhamcom2multi}
\end{equation}
We note that the first column of this diagram is exactly the standard de Rham complex with differential $d^{(1)}$ thanks to the canonical isomorphism $\Omega^{p \otimes 0} \cong \Omega^p$ for every $p\leq D$.\\
However, for the general $N$ case it is not so easy to draw the diagram, since we need $N$ independent directions to taken into account all the ramifications; however, also in this case there is a standard de Rham complex with differential $d^{(1)}$ thanks to the canonical isomorphism $\Omega^{p \otimes 0 \otimes ... \otimes 0} \cong \Omega^p$ for every $p\leq D$. \\
In order to reproduce and extend Theorem \ref{THM3.1}, a necessary property we want to emulate of the de Rham complex is the fact that given a form $B \in \Omega^p$ for same $p$, the form $dB \in \Omega^{p+1}$ is its field strength. In other words, the space $\Omega^{p+1}$ contains the field strengths of all the forms in $\Omega^{p}$. In the case of a $N$-multi-form whose writing, in a chart, is mixed symmetry tensor carrying the irreducible representation associated to the Young tableau $\lambda$, there is a unique unambiguous way to construct a field strength, that is, acting with the composition of all the $i$-th differential, i.e. Definition \ref{def2.5.10}. Therefore we can define
\begin{Definition}[De Rham-like differential]
    Given the $N$-multi-form space $\Omega^{p_1 \otimes ... \otimes p_N}(M)$ and the $i$-th differential $d^{(i)}$ with $i \in [1,N]$ the de Rham-like differential $d$ is given by the composition of all the $i$-th differentials
    \begin{equation}
        \delta^{(N)}:=d^{(N)} \circ d^{(N-1)} \circ ... \circ d^{(2)} \circ d^{(1)}.
    \end{equation}
\end{Definition}
We have easely the following
\begin{Proposition}[Fundamental property of the de Rham-like differential]
    The de Rham-like differential $\delta^{(N)}$ squares to zero
    \begin{equation}
        \delta^{(N)} \circ \delta^{(N)} =0.
    \end{equation}
\end{Proposition}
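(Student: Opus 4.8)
The plan is to reduce the statement to the two facts already established just after the generalized version of Proposition \ref{lrd}, namely the nilpotency $d^{(i)}\circ d^{(i)}=0$ for every $i\in[1,N]$ and the commutativity $d^{(i)}\circ d^{(j)}=d^{(j)}\circ d^{(i)}$ for every $i\neq j$. Since $\delta^{(N)}$ is by definition the ordered composition $d^{(N)}\circ\cdots\circ d^{(1)}$, I would begin by writing out the double composition explicitly,
\[
\delta^{(N)}\circ\delta^{(N)}=d^{(N)}\circ\cdots\circ d^{(2)}\circ d^{(1)}\circ d^{(N)}\circ\cdots\circ d^{(2)}\circ d^{(1)},
\]
where each factor is understood as the relevant member of the multi-degree-indexed family of $i$-th differentials, so that sources and targets match all along the chain. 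Here $\delta^{(N)}\circ\delta^{(N)}$ sends $\Omega^{p_1\otimes\cdots\otimes p_N}$ to $\Omega^{p_1+2\otimes\cdots\otimes p_N+2}$, so the composite is well defined.

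The key step is a sliding argument. The copy of $d^{(1)}$ coming from the left factor of $\delta^{(N)}$ sits immediately to the left of the block $d^{(N)}\circ\cdots\circ d^{(2)}$ coming from the right factor, and every differential in that block carries an index $\geq 2$, hence commutes with $d^{(1)}$. I would therefore transport this $d^{(1)}$ rightward through the entire block, one commutation at a time, until it becomes adjacent to the $d^{(1)}$ of the right factor, yielding
\[
\delta^{(N)}\circ\delta^{(N)}=d^{(N)}\circ\cdots\circ d^{(2)}\circ d^{(N)}\circ\cdots\circ d^{(2)}\circ\bigl(d^{(1)}\circ d^{(1)}\bigr).
\]
The factor $d^{(1)}\circ d^{(1)}=0$ then annihilates the whole expression, which proves the claim.

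This argument is essentially routine, so there is no genuine obstacle; the only point requiring a little care is that the $d^{(i)}$ are not single operators but families of maps depending on the multi-degree at which they act. One must check that the commutation relation being invoked at each stage of the slide is applied between the correct members of these families. This is harmless because the generalized relations of Proposition \ref{lrd} hold as operator identities on every $N$-multi-form space, so each intermediate rewriting is legitimate and the composite remains well defined throughout. As an alternative, I could phrase the same reasoning as an induction on $N$, using $\delta^{(N)}=d^{(N)}\circ\delta^{(N-1)}=\delta^{(N-1)}\circ d^{(N)}$ (valid since $d^{(N)}$ commutes with every differential occurring in $\delta^{(N-1)}$) together with the inductive hypothesis $\delta^{(N-1)}\circ\delta^{(N-1)}=0$; but the direct sliding argument above is the most transparent.
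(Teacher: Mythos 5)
Your proposal is correct and follows essentially the same route as the paper's proof: both invoke the nilpotency $d^{(i)}\circ d^{(i)}=0$ and the pairwise commutativity of the $i$-th differentials to rearrange the double composition until two copies of the same differential become adjacent. The paper pairs up every $d^{(i)}$ with its twin, whereas you only need to slide one $d^{(1)}$ into place, but this is an inessential difference.
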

\begin{proof}
    Since the $i$-th differentials satisfy $d^{(i)} \circ d^{(i)}=0 \ \ \forall i \in [1,N]$ and $d^{(i)} \circ d^{(j)}=d^{(j)} \circ d^{(i)} \ \ \forall i \neq j$ we get
    \begin{equation}
    \begin{aligned}
        \delta^{(N)} \circ \delta^{(N)}&=d^{(N)} \circ d^{(N-1)} \circ ... \circ d^{(2)} \circ d^{(1)} \circ d^{(N)} \circ d^{(N-1)} \circ ... \circ d^{(2)} \circ d^{(1)}=\\
        &=d^{(N)} \circ d^{(N)} \circ d^{(N-1)} \circ d^{(N-1)} \circ ... d^{(1)} \circ d^{(1)} =0.
    \end{aligned}
    \end{equation}
\end{proof}
Let us focus on the case $N=2$ where the de Rham-like differential is given by $\delta^{(2)}:=d^{(2)}\circ d^{(1)}$. Therefore we have the 2-de Rham complex
\begin{Definition}[2-de Rham-like complex]
    A de Rham-like complex for the biform space on a differential manifold $(M,\mathcal{A})$ such that $D=dim(M)$, or 2-de Rham complex, is the cochain complex with differential given by the de Rham-like differential $\delta^{(2)}$.
\end{Definition}
We stress that in the perspective of this definition, the standard de Rham complex given by the first column of diagram \ref{derhamcom2multi} could be defined as a 1-de Rham-like complex with differential $d^{(1)}=\delta^{(1)}$. \\
In diagram \ref{derhamcom2multi} some 2-de Rham-like complexes are highlighted with colored arrows. However, we note that the 2-de Rham-like complex with blue arrows have length\footnote{We mean the number of arrows between non-trivial modules.} $D$ while the one with orange arrows $D-1$ and go on. Hence, we are going to consider augmented cochain complexes in order to have cochain complexes all of the same length $D$ and all starting from $\Omega^{0 \otimes 0}$. Therefore we have the following 
\begin{Definition}[$k$-augmented 2-de Rham-like complex]
  We define the $k$-augmented 2-de Rham-like complex as the 2-de Rham-like complex with length $D-k$ augmented by the first $k$ terms of the  1-de Rham-like complex with differential $\delta^{(1)}$, or equivalently, as the 2-de Rham-like complex with length $D-k$ augmented by the first $k$ terms of the standard de Rham complex with differential $d^{(1)}$.
\end{Definition}
To give an example of $k$-augmented 2-de Rham-like complex let us consider the 2-de Rham complex with orange arrows, it has length $D-1$ for every fixed $D$ and so we have the 
1-augmented 2-de Rham-like complex
\begin{equation}
\begin{tikzpicture}[x=0.65pt,y=0.75pt,yscale=-1,xscale=1]
%uncomment if require: \path (0,681); %set diagram left start at 0, and has height of 681

%Straight Lines [id:da3578291714792352] 
\draw    (56.62,53) -- (122.33,52.78) ;
\draw [shift={(124.33,52.77)}, rotate = 179.81] [color={rgb, 255:red, 0; green, 0; blue, 0 }  ][line width=0.75]    (10.93,-3.29) .. controls (6.95,-1.4) and (3.31,-0.3) .. (0,0) .. controls (3.31,0.3) and (6.95,1.4) .. (10.93,3.29)   ;
%Straight Lines [id:da481403758576321] 
\draw [color={rgb, 255:red, 245; green, 166; blue, 35 }  ,draw opacity=1 ]   (168.62,54) -- (234.33,53.78) ;
\draw [shift={(236.33,53.77)}, rotate = 179.81] [color={rgb, 255:red, 245; green, 166; blue, 35 }  ,draw opacity=1 ][line width=0.75]    (10.93,-3.29) .. controls (6.95,-1.4) and (3.31,-0.3) .. (0,0) .. controls (3.31,0.3) and (6.95,1.4) .. (10.93,3.29)   ;
%Straight Lines [id:da608569843816587] 
\draw [color={rgb, 255:red, 245; green, 166; blue, 35 }  ,draw opacity=1 ]   (283.62,54) -- (349.33,53.78) ;
\draw [shift={(351.33,53.77)}, rotate = 179.81] [color={rgb, 255:red, 245; green, 166; blue, 35 }  ,draw opacity=1 ][line width=0.75]    (10.93,-3.29) .. controls (6.95,-1.4) and (3.31,-0.3) .. (0,0) .. controls (3.31,0.3) and (6.95,1.4) .. (10.93,3.29)   ;
%Straight Lines [id:da4868164732995859] 
\draw [color={rgb, 255:red, 245; green, 166; blue, 35 }  ,draw opacity=1 ]   (395.62,54) -- (461.33,53.78) ;
\draw [shift={(463.33,53.77)}, rotate = 179.81] [color={rgb, 255:red, 245; green, 166; blue, 35 }  ,draw opacity=1 ][line width=0.75]    (10.93,-3.29) .. controls (6.95,-1.4) and (3.31,-0.3) .. (0,0) .. controls (3.31,0.3) and (6.95,1.4) .. (10.93,3.29)   ;
%Straight Lines [id:da5215570814330761] 
\draw [color={rgb, 255:red, 245; green, 166; blue, 35 }  ,draw opacity=1 ]   (488.62,54) -- (554.33,53.78) ;
\draw [shift={(556.33,53.77)}, rotate = 179.81] [color={rgb, 255:red, 245; green, 166; blue, 35 }  ,draw opacity=1 ][line width=0.75]    (10.93,-3.29) .. controls (6.95,-1.4) and (3.31,-0.3) .. (0,0) .. controls (3.31,0.3) and (6.95,1.4) .. (10.93,3.29)   ;

% Text Node
\draw (129.38,45.4) node [anchor=north west][inner sep=0.75pt]    {$\Omega ^{1\otimes 0}$};
% Text Node
\draw (470.81,55.4) node [anchor=north west][inner sep=0.75pt]    {$...$};
% Text Node
\draw (16,45.4) node [anchor=north west][inner sep=0.75pt]    {$\Omega ^{0\otimes 0}$};
% Text Node
\draw (77.62,34.4) node [anchor=north west][inner sep=0.75pt]    {$\delta^{( 1)}$};
% Text Node
\draw (241.38,45.4) node [anchor=north west][inner sep=0.75pt]    {$\Omega ^{2\otimes 1}$};
% Text Node
\draw (193.62,35.4) node [anchor=north west][inner sep=0.75pt]    {$\delta^{( 2)}$};
% Text Node
\draw (356.38,45.4) node [anchor=north west][inner sep=0.75pt]    {$\Omega ^{3\otimes 2}$};
% Text Node
\draw (308.62,35.4) node [anchor=north west][inner sep=0.75pt]    {$\delta^{( 2)}$};
% Text Node
\draw (420.62,35.4) node [anchor=north west][inner sep=0.75pt]    {$\delta^{( 2)}$};
% Text Node
\draw (513.62,35.4) node [anchor=north west][inner sep=0.75pt]    {$\delta^{( 2)}$};
% Text Node
\draw (560.38,45.4) node [anchor=north west][inner sep=0.75pt]    {$\Omega ^{D\otimes D-1}$};
\end{tikzpicture}
;
\end{equation}
in an analog way, considering the 2-de Rham complex with green arrows we have the 2-augmented 2-de Rham-like complex
\begin{equation}
    \begin{tikzpicture}[x=0.65pt,y=0.75pt,yscale=-1,xscale=1]
%uncomment if require: \path (0,681); %set diagram left start at 0, and has height of 681

%Straight Lines [id:da3578291714792352] 
\draw    (56.62,53) -- (122.33,52.78) ;
\draw [shift={(124.33,52.77)}, rotate = 179.81] [color={rgb, 255:red, 0; green, 0; blue, 0 }  ][line width=0.75]    (10.93,-3.29) .. controls (6.95,-1.4) and (3.31,-0.3) .. (0,0) .. controls (3.31,0.3) and (6.95,1.4) .. (10.93,3.29)   ;
%Straight Lines [id:da608569843816587] 
\draw [color={rgb, 255:red, 65; green, 117; blue, 5 }  ,draw opacity=1 ]   (283.62,54) -- (349.33,53.78) ;
\draw [shift={(351.33,53.77)}, rotate = 179.81] [color={rgb, 255:red, 65; green, 117; blue, 5 }  ,draw opacity=1 ][line width=0.75]    (10.93,-3.29) .. controls (6.95,-1.4) and (3.31,-0.3) .. (0,0) .. controls (3.31,0.3) and (6.95,1.4) .. (10.93,3.29)   ;
%Straight Lines [id:da4868164732995859] 
\draw [color={rgb, 255:red, 65; green, 117; blue, 5 }  ,draw opacity=1 ]   (395.62,54) -- (461.33,53.78) ;
\draw [shift={(463.33,53.77)}, rotate = 179.81] [color={rgb, 255:red, 65; green, 117; blue, 5 }  ,draw opacity=1 ][line width=0.75]    (10.93,-3.29) .. controls (6.95,-1.4) and (3.31,-0.3) .. (0,0) .. controls (3.31,0.3) and (6.95,1.4) .. (10.93,3.29)   ;
%Straight Lines [id:da5215570814330761] 
\draw [color={rgb, 255:red, 65; green, 117; blue, 5 }  ,draw opacity=1 ]   (488.62,54) -- (554.33,53.78) ;
\draw [shift={(556.33,53.77)}, rotate = 179.81] [color={rgb, 255:red, 65; green, 117; blue, 5 }  ,draw opacity=1 ][line width=0.75]    (10.93,-3.29) .. controls (6.95,-1.4) and (3.31,-0.3) .. (0,0) .. controls (3.31,0.3) and (6.95,1.4) .. (10.93,3.29)   ;
%Straight Lines [id:da1301617586518764] 
\draw    (169.62,54) -- (235.33,53.78) ;
\draw [shift={(237.33,53.77)}, rotate = 179.81] [color={rgb, 255:red, 0; green, 0; blue, 0 }  ][line width=0.75]    (10.93,-3.29) .. controls (6.95,-1.4) and (3.31,-0.3) .. (0,0) .. controls (3.31,0.3) and (6.95,1.4) .. (10.93,3.29)   ;

% Text Node
\draw (129.38,45.4) node [anchor=north west][inner sep=0.75pt]    {$\Omega ^{1\otimes 0}$};
% Text Node
\draw (470.81,55.4) node [anchor=north west][inner sep=0.75pt]    {$...$};
% Text Node
\draw (16,45.4) node [anchor=north west][inner sep=0.75pt]    {$\Omega ^{0\otimes 0}$};
% Text Node
\draw (77.62,34.4) node [anchor=north west][inner sep=0.75pt]    {$\delta^{( 1)}$};
% Text Node
\draw (241.38,45.4) node [anchor=north west][inner sep=0.75pt]    {$\Omega ^{2\otimes 0}$};
% Text Node
\draw (356.38,45.4) node [anchor=north west][inner sep=0.75pt]    {$\Omega ^{3\otimes 1}$};
% Text Node
\draw (308.62,35.4) node [anchor=north west][inner sep=0.75pt]    {$\delta^{( 2)}$};
% Text Node
\draw (420.62,35.4) node [anchor=north west][inner sep=0.75pt]    {$\delta^{( 2)}$};
% Text Node
\draw (513.62,35.4) node [anchor=north west][inner sep=0.75pt]    {$\delta^{( 2)}$};
% Text Node
\draw (560.38,45.4) node [anchor=north west][inner sep=0.75pt]    {$\Omega ^{D\otimes D-2}$};
% Text Node
\draw (190.62,35.4) node [anchor=north west][inner sep=0.75pt]    {$\delta^{( 1)}$};
\end{tikzpicture}.
\end{equation}
At this point we can generalize our definitions to the general $N$ case; therefore, we have the following
\begin{Definition}[$N$-de Rham-like complex]
    A de Rham-like complex for the $N$-multi-form space on a differential manifold $(M,\mathcal{A})$ such that $D=dim(M)$, or $N$-de Rham complex, is the cochain complex with differential given by the de Rham-like differential $\delta^{( N)}$.
\end{Definition}
and, for the same reasons as before, their augmented cochain complexes
\begin{Definition}[$(k_1,...,k_{N-1})$-augmented $N$-de Rham-like complex]\label{Naugcomp}
The $(k_1,...,k_{N-1})$-augmented $N$-de Rham-like complex is the $N$-de Rham-like complex with length $D-(k_1+...+k_{N-1})$ augmented by the first $k_1+...+k_{N-1}$ terms of the $(k_1,...,k_{N-2})$-augmented $(N-1)$-de Rham-like complex.
\end{Definition}
Let us see consider the following clarifying example.
\subsubsection*{Example 1: the augmenting of the $N$-de Rham-like complex passing for $\Omega^{q_1 \otimes ... \otimes q_N}$}
Let us consider the $N$-de Rham-like complex passing for $\Omega^{q_1 \otimes ... \otimes q_N}$, this is given by
\begin{equation}
\tikzset{every picture/.style={line width=0.75pt}} %set default line width to 0.75pt        
\begin{tikzpicture}[x=0.65pt,y=0.75pt,yscale=-1,xscale=1]
%uncomment if require: \path (0,339); %set diagram left start at 0, and has height of 339

%Straight Lines [id:da42303916295469124] 
\draw    (159.82,184.28) -- (197.35,184.28) ;
\draw [shift={(199.35,184.28)}, rotate = 180] [color={rgb, 255:red, 0; green, 0; blue, 0 }  ][line width=0.75]    (10.93,-3.29) .. controls (6.95,-1.4) and (3.31,-0.3) .. (0,0) .. controls (3.31,0.3) and (6.95,1.4) .. (10.93,3.29)   ;
%Straight Lines [id:da3406066295614856] 
\draw    (228.82,184.28) -- (266.35,184.28) ;
\draw [shift={(268.35,184.28)}, rotate = 180] [color={rgb, 255:red, 0; green, 0; blue, 0 }  ][line width=0.75]    (10.93,-3.29) .. controls (6.95,-1.4) and (3.31,-0.3) .. (0,0) .. controls (3.31,0.3) and (6.95,1.4) .. (10.93,3.29)   ;
%Straight Lines [id:da16581005968190976] 
\draw    (350.82,184.28) -- (388.35,184.28) ;
\draw [shift={(390.35,184.28)}, rotate = 180] [color={rgb, 255:red, 0; green, 0; blue, 0 }  ][line width=0.75]    (10.93,-3.29) .. controls (6.95,-1.4) and (3.31,-0.3) .. (0,0) .. controls (3.31,0.3) and (6.95,1.4) .. (10.93,3.29)   ;
%Straight Lines [id:da5102693791891441] 
\draw    (420.82,184.28) -- (458.35,184.28) ;
\draw [shift={(460.35,184.28)}, rotate = 180] [color={rgb, 255:red, 0; green, 0; blue, 0 }  ][line width=0.75]    (10.93,-3.29) .. controls (6.95,-1.4) and (3.31,-0.3) .. (0,0) .. controls (3.31,0.3) and (6.95,1.4) .. (10.93,3.29)   ;

% Text Node
\draw (9.97,172.65) node [anchor=north west][inner sep=0.75pt]    {$\Omega ^{q_{1} -q_{N} \otimes ...\otimes q_{N-1} -q_{N} \otimes 0}$};
% Text Node
\draw (173.22,161.54) node [anchor=north west][inner sep=0.75pt]    {$\delta ^{( N)}$};
% Text Node
\draw (208.92,180.37) node [anchor=north west][inner sep=0.75pt]    {$...$};
% Text Node
\draw (233.12,161.54) node [anchor=north west][inner sep=0.75pt]    {$\delta ^{( N)}$};
% Text Node
\draw (277,172.65) node [anchor=north west][inner sep=0.75pt]    {$\Omega ^{q_{1} \otimes ...\otimes q_{N}}$};
% Text Node
\draw (355.12,160.54) node [anchor=north west][inner sep=0.75pt]    {$\delta ^{( N)}$};
% Text Node
\draw (397.92,180.37) node [anchor=north west][inner sep=0.75pt]    {$...$};
% Text Node
\draw (425.12,160.54) node [anchor=north west][inner sep=0.75pt]    {$\delta ^{( N)}$};
% Text Node
\draw (466.59,172.68) node [anchor=north west][inner sep=0.75pt]    {$\Omega ^{D\otimes D-( q_{1} -q_{2}) \otimes ...\otimes D-( q_{1} -q_{N})}$};
\end{tikzpicture}.
\label{Nderhames}
\end{equation}
This complex has length $D-(q_1-q_N)$ and we want to construct the $(q_1-q_2,...,q_{N-1}-q_N)$-augmented $N$-de Rham-like complex. Therefore we need to add the first $\sum_{i=1}^{N-1} q_i-q_{i+1}=q_1-q_N$ terms of the $(q_1-q_2,...,q_{N-2}-q_{N-1})$-augmented $(N-1)$-de Rham-like complex. These first $q_1-q_N$ terms are 
\begin{equation}
\tikzset{every picture/.style={line width=0.75pt}} %set default line width to 0.75pt        
\begin{tikzpicture}[x=0.75pt,y=0.75pt,yscale=-1,xscale=1]
%uncomment if require: \path (0,339); %set diagram left start at 0, and has height of 339

%Straight Lines [id:da42303916295469124] 
\draw    (192.82,53.28) -- (230.35,53.28) ;
\draw [shift={(232.35,53.28)}, rotate = 180] [color={rgb, 255:red, 0; green, 0; blue, 0 }  ][line width=0.75]    (10.93,-3.29) .. controls (6.95,-1.4) and (3.31,-0.3) .. (0,0) .. controls (3.31,0.3) and (6.95,1.4) .. (10.93,3.29)   ;
%Straight Lines [id:da3406066295614856] 
\draw    (266.82,53.28) -- (304.35,53.28) ;
\draw [shift={(306.35,53.28)}, rotate = 180] [color={rgb, 255:red, 0; green, 0; blue, 0 }  ][line width=0.75]    (10.93,-3.29) .. controls (6.95,-1.4) and (3.31,-0.3) .. (0,0) .. controls (3.31,0.3) and (6.95,1.4) .. (10.93,3.29)   ;
%Straight Lines [id:da13578067891356482] 
\draw    (348.82,65.28) -- (348.68,101.37) ;
\draw [shift={(348.67,103.37)}, rotate = 270.22] [color={rgb, 255:red, 0; green, 0; blue, 0 }  ][line width=0.75]    (10.93,-3.29) .. controls (6.95,-1.4) and (3.31,-0.3) .. (0,0) .. controls (3.31,0.3) and (6.95,1.4) .. (10.93,3.29)   ;
%Straight Lines [id:da015908347194896866] 
\draw    (306.67,123.37) -- (266.67,123.37) ;
\draw [shift={(264.67,123.37)}, rotate = 360] [color={rgb, 255:red, 0; green, 0; blue, 0 }  ][line width=0.75]    (10.93,-3.29) .. controls (6.95,-1.4) and (3.31,-0.3) .. (0,0) .. controls (3.31,0.3) and (6.95,1.4) .. (10.93,3.29)   ;
%Straight Lines [id:da5169464774145766] 
\draw    (236.67,123.37) -- (196.67,123.37) ;
\draw [shift={(194.67,123.37)}, rotate = 360] [color={rgb, 255:red, 0; green, 0; blue, 0 }  ][line width=0.75]    (10.93,-3.29) .. controls (6.95,-1.4) and (3.31,-0.3) .. (0,0) .. controls (3.31,0.3) and (6.95,1.4) .. (10.93,3.29)   ;
%Straight Lines [id:da7878917703151889] 
\draw    (133.82,135.28) -- (133.68,171.37) ;
\draw [shift={(133.67,173.37)}, rotate = 270.22] [color={rgb, 255:red, 0; green, 0; blue, 0 }  ][line width=0.75]    (10.93,-3.29) .. controls (6.95,-1.4) and (3.31,-0.3) .. (0,0) .. controls (3.31,0.3) and (6.95,1.4) .. (10.93,3.29)   ;
%Straight Lines [id:da9617910207175783] 
\draw    (193.82,193.28) -- (231.35,193.28) ;
\draw [shift={(233.35,193.28)}, rotate = 180] [color={rgb, 255:red, 0; green, 0; blue, 0 }  ][line width=0.75]    (10.93,-3.29) .. controls (6.95,-1.4) and (3.31,-0.3) .. (0,0) .. controls (3.31,0.3) and (6.95,1.4) .. (10.93,3.29)   ;
%Straight Lines [id:da1046879171433831] 
\draw    (267.82,193.28) -- (305.35,193.28) ;
\draw [shift={(307.35,193.28)}, rotate = 180] [color={rgb, 255:red, 0; green, 0; blue, 0 }  ][line width=0.75]    (10.93,-3.29) .. controls (6.95,-1.4) and (3.31,-0.3) .. (0,0) .. controls (3.31,0.3) and (6.95,1.4) .. (10.93,3.29)   ;
%Straight Lines [id:da590417155228171] 
\draw    (309.67,264.37) -- (269.67,264.37) ;
\draw [shift={(267.67,264.37)}, rotate = 360] [color={rgb, 255:red, 0; green, 0; blue, 0 }  ][line width=0.75]    (10.93,-3.29) .. controls (6.95,-1.4) and (3.31,-0.3) .. (0,0) .. controls (3.31,0.3) and (6.95,1.4) .. (10.93,3.29)   ;
%Straight Lines [id:da6920887521120258] 
\draw    (231.67,264.37) -- (191.67,264.37) ;
\draw [shift={(189.67,264.37)}, rotate = 360] [color={rgb, 255:red, 0; green, 0; blue, 0 }  ][line width=0.75]    (10.93,-3.29) .. controls (6.95,-1.4) and (3.31,-0.3) .. (0,0) .. controls (3.31,0.3) and (6.95,1.4) .. (10.93,3.29)   ;

% Text Node
\draw (113.97,40.65) node [anchor=north west][inner sep=0.75pt]    {$\Omega ^{0\otimes ...\otimes 0\otimes 0}$};
% Text Node
\draw (197.12,29.54) node [anchor=north west][inner sep=0.75pt]    {$\delta ^{( 1)}$};
% Text Node
\draw (243.92,50.37) node [anchor=north west][inner sep=0.75pt]    {$...$};
% Text Node
\draw (271.12,29.54) node [anchor=north west][inner sep=0.75pt]    {$\delta ^{( 1)}$};
% Text Node
\draw (315,39.65) node [anchor=north west][inner sep=0.75pt]    {$\Omega ^{q_{1} -q_{2} \otimes 0\otimes ...\otimes 0}$};
% Text Node
\draw (314,110.65) node [anchor=north west][inner sep=0.75pt]    {$\Omega ^{q_{1} -q_{2} +1\otimes 1\otimes 0\otimes ...\otimes 0}$};
% Text Node
\draw (356.12,73.54) node [anchor=north west][inner sep=0.75pt]    {$\delta ^{( 2)}$};
% Text Node
\draw (279.12,99.54) node [anchor=north west][inner sep=0.75pt]    {$\delta ^{( 2)}$};
% Text Node
\draw (243.92,120.37) node [anchor=north west][inner sep=0.75pt]    {$...$};
% Text Node
\draw (209.12,99.54) node [anchor=north west][inner sep=0.75pt]    {$\delta ^{( 2)}$};
% Text Node
\draw (71,110.65) node [anchor=north west][inner sep=0.75pt]    {$\Omega ^{q_{1} -q_{3} \otimes q_{2} -q_{3} \otimes ...\otimes 0}$};
% Text Node
\draw (8,180.65) node [anchor=north west][inner sep=0.75pt]    {$\Omega ^{q_{1} -q_{3} +1\otimes q_{2} -q_{3} +1\otimes 1\otimes 0\otimes ...\otimes 0}$};
% Text Node
\draw (141.12,143.54) node [anchor=north west][inner sep=0.75pt]    {$\delta ^{( 3)}$};
% Text Node
\draw (391.5,217.96) node [anchor=north west][inner sep=0.75pt]  [rotate=-89.87]  {$...$};
% Text Node
\draw (0.09,251.68) node [anchor=north west][inner sep=0.75pt]    {$\Omega ^{q_{1} -q_{N} \otimes q_{2} -q_{N} \otimes ...\otimes q_{N-1} -q_{N} \otimes 0}$};
% Text Node
\draw (198.12,169.54) node [anchor=north west][inner sep=0.75pt]    {$\delta ^{( 3)}$};
% Text Node
\draw (243.92,189.37) node [anchor=north west][inner sep=0.75pt]    {$...$};
% Text Node
\draw (272.12,169.54) node [anchor=north west][inner sep=0.75pt]    {$\delta ^{( 3)}$};
% Text Node
\draw (318,179.65) node [anchor=north west][inner sep=0.75pt]    {$\Omega ^{q_{1} -q_{4} \otimes q_{2} -q_{4} \otimes q_{3} -q_{4} \otimes ...\otimes 0}$};
% Text Node
\draw (270.12,240.54) node [anchor=north west][inner sep=0.75pt]    {$\delta ^{( N-1)}$};
% Text Node
\draw (192.12,240.54) node [anchor=north west][inner sep=0.75pt]    {$\delta ^{( N-1)}$};
% Text Node
\draw (243.92,259.37) node [anchor=north west][inner sep=0.75pt]    {$...$};
% Text Node
\draw (299.5,217.96) node [anchor=north west][inner sep=0.75pt]  [rotate=-89.87]  {$...$};
% Text Node
\draw (262.5,217.96) node [anchor=north west][inner sep=0.75pt]  [rotate=-89.87]  {$...$};
% Text Node
\draw (218.5,217.96) node [anchor=north west][inner sep=0.75pt]  [rotate=-89.87]  {$...$};
% Text Node
\draw (128.5,217.96) node [anchor=north west][inner sep=0.75pt]  [rotate=-89.87]  {$...$};
% Text Node
\draw (322,251.68) node [anchor=north west][inner sep=0.75pt]    {$\Omega ^{q_{1} -q_{N-1} +1\otimes q_{2} -q_{N-1} +1\otimes ...\otimes 1\otimes 0}$};
\end{tikzpicture}.
\label{augcompl}
\end{equation}
In the end, adding the augmented complex \eqref{augcompl} to the original $N$-de Rham-like complex \eqref{Nderhames} we get the $(q_1-q_2,...,q_{N-1}-q_N)$-augmented $N$-de Rham-like complex. 
\\

For every $N$-de Rham complex we can define the de Rham-like cohomology groups as
\begin{Definition}[De Rham-like cohomology groups of a $k$-augmented $N$-de Rham-like complex]\label{coholike}
    Given a $(k_1,...,k_{N-1})$-augmented $N$-de Rham-like complex its de Rham-like cohomology groups are
    \begin{equation}
    H^{p_1,...,p_N}:=\frac{Z^{p_1,...,p_N}}{B^{p_1,...,p_N}},
    \end{equation}
    where 
    \begin{equation}
    \begin{aligned}
        Z^{p_1,...,p_N}&:=\{X \in \Omega^{p_1 \otimes ...\otimes p_N} | \delta^{(N)}X=0\},\\
        B^{p_1,...,p_N}&:=\{X=\delta^{(N)}Y \in \Omega^{p_1\otimes ...\otimes p_N} | Y \in \Omega^{p_1-1\otimes ...\otimes p_N-1}\}.
        \end{aligned}    
    \end{equation}
if $p_1,...,p_N >0$ and 
\begin{equation}
    H^{p_1,...,p_i,0,...,0}:=\frac{Z^{p_1,...,p_i,0,...,0}}{B^{p_1,...,p_i,0,...,0}},
\end{equation}
 where 
    \begin{equation}
    \begin{aligned}
        Z^{p_1,...,p_i,0,...,0}&:=\{X \in \Omega^{p_1 \otimes ...\otimes p_i\otimes 0 \otimes ... \otimes 0} | \delta^{(i+1)}X=0\},\\
        B^{p_1,...,p_i,0,...,0}&:=\{X=\delta^{(i)}Y \in \Omega^{p_1 \otimes ...\otimes p_i\otimes 0 \otimes ... \otimes 0} | Y \in \Omega^{p_1-1 \otimes ...\otimes p_i-1\otimes 0 \otimes ... \otimes 0}\}.
        \end{aligned}    
    \end{equation}
if only $p_1,...,p_i >0$     
\end{Definition}
%We note that for a $k$-augmented $N$-de Rham-like complex the first $k$ de Rham cohomology-like groups coincide with the standard de Rham cohomology groups. 
These groups can be considered, for differential manifolds, as the generalization of de Rham cohomology groups; however, to make fully meaningful this conclusion we should prove that de Rham-like cohomology groups are topological invariants. In this perspective we first prove a Poincaré-like lemma for differential mixed symmetry tensors and then the main theorem about the isomorphism between de Rham-like cohomology groups and de Rham cohomology groups using abstract de Rham theorem  (see Appendix \ref{appfasci} for a review). 

\begin{Lemma}[Poincaré-like lemma]
    Let $U \subset \mathbb{R}^n$ a open polyinterval (product of $n$ open intervals even unlimited of $\mathbb{R}$). For every $(k_1,...,k_{N-1})$-augmented $N$-de Rham-like complex and $(p_1,...,p_N)\neq (0,...,0)$ then every closed differential mixed symmetry tensor $T$ is exact.
\end{Lemma}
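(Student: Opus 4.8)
The plan is to reduce the statement to the classical Poincaré lemma applied separately in each of the $N$ tensor factors, and then to reassemble the factorwise primitives into a single \emph{diagonal} primitive $\delta^{(N)}Y$ by exploiting the Young irreducibility of $T$. I would argue by induction on $N$, the base case $N=1$ being exactly the classical Poincaré lemma for the de Rham differential $d^{(1)}=\delta^{(1)}$ on the contractible polyinterval $U$. The only analytic input is the following: since $U$ is a product of intervals, each exterior derivative $d^{(i)}$, acting on the $i$-th slot of $\Omega^{p_1\otimes\cdots\otimes p_N}(U)$, carries the standard contracting homotopy $h^{(i)}$ obeying
\[
d^{(i)}\circ h^{(i)}+h^{(i)}\circ d^{(i)}=\mathrm{id}\qquad\text{on the subspace with }p_i>0,
\]
and, because distinct slots involve disjoint families of coordinate one-forms, the generalization of Proposition \ref{lrd} gives $h^{(i)}d^{(j)}=d^{(j)}h^{(i)}$ and $h^{(i)}h^{(j)}=h^{(j)}h^{(i)}$ for $i\neq j$.

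First I would dispose of the augmenting (trailing-zero) positions of the complex. At a position $\Omega^{p_1\otimes\cdots\otimes p_i\otimes 0\otimes\cdots\otimes 0}$ the canonical isomorphism $\Omega^{p_1\otimes\cdots\otimes p_i\otimes 0\otimes\cdots\otimes 0}\cong\Omega^{p_1\otimes\cdots\otimes p_i}$ identifies closedness under $\delta^{(i+1)}$ and exactness under $\delta^{(i)}$ with the same notions for the $i$-multi-form complex; these cases therefore follow from the inductive hypothesis at level $i<N$. It remains to treat a generic main-diagonal position $(p_1,\ldots,p_N)$ with all $p_j>0$ and $\delta^{(N)}T=d^{(N)}\circ\cdots\circ d^{(1)}T=0$.

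For this I would run a zig-zag descent through the $N$-fold complex. The element $d^{(1)}T$ is $d^{(N)}\cdots d^{(2)}$-closed, and applying the homotopies $h^{(N)},\ldots,h^{(2)}$ in turn, the commutation relations together with the vanishing of the composite $\delta^{(N)}T$ convert each closedness statement in one slot into an exactness statement, yielding partial primitives $V_j$ with $T=\sum_{j}d^{(j)}(\cdots)$. In other words, the descent shows that on the contractible $U$ the tensor $T$ lies in the image of the \emph{total} (Künneth) differential $d^{(1)}+\cdots+d^{(N)}$, which is exactly the expected triviality of a tensor product of acyclic complexes.

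The hard part — and the step where the Young projection $\Pi_\lambda$ is indispensable — is to upgrade this ``sum of single derivatives'' into the single diagonal image $T=\delta^{(N)}Y$ with $Y\in\Omega^{p_1-1\otimes\cdots\otimes p_N-1}$. This upgrade is genuinely false without irreducibility: a generic $\delta^{(N)}$-closed multi-form is only in the image of the total differential, and a direct homotopy computation leaves a remainder of the schematic form $\sum_i h^{(i)}d^{(i)}T$. The resolution is that $T$ is $\Pi_\lambda$-projected, so I would re-project the partial primitives and show that $\Pi_\lambda$ intertwines the descent, forcing the reconstructed potential $Y$ into the principal $\lambda$-subspace while the cross-terms — which carry Young symmetries transverse to $\lambda$ — are annihilated, making the remainder drop out. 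I expect this cohomological bookkeeping with the Young symmetrizers, essentially the mixed-symmetry generalized Poincaré lemma in the spirit of \cite{Medeiros_2003}, to be the real obstacle of the proof, all the analytic content being carried by the classical base case.
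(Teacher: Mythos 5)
Your route is genuinely different from the paper's. The paper proves the lemma by a double induction (on $N$ and on the number of coordinates appearing in the coefficients), building the primitive by explicit iterated integration of the coefficient functions in one distinguished variable per slot; it never introduces the contracting homotopies $h^{(i)}$ as operators and, notably, never uses the Young projector $\Pi_\lambda$ anywhere in the argument. You instead run slot-wise homotopies and then try to pass from the total differential $d^{(1)}+\cdots+d^{(N)}$ to the composite $\delta^{(N)}$ by appealing to irreducibility. Up to the point where you observe that the zig-zag only places $T$ in the image of the total differential, your analysis is sound, and you have correctly isolated the crux of the problem.

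But the crux is exactly where your proposal stops being a proof. The step ``re-project the partial primitives and show that $\Pi_\lambda$ intertwines the descent so that the cross-terms drop out'' is announced, not performed, and it cannot be performed: the implication $\delta^{(N)}T=0\Rightarrow T=\delta^{(N)}Y$ fails even for Young-irreducible $T$ on a contractible polyinterval. Take $N=2$, $U=\mathbb{R}^2$, and the symmetric ($\lambda=\{1,1\}$-projected) tensor $T=x_2\,(dx^1\otimes dx^2+dx^2\otimes dx^1)\in\Omega^{1\otimes 1}$, i.e.\ $T_{12}=T_{21}=x_2$, $T_{11}=T_{22}=0$. This is pure gauge, $T_{\mu\nu}=\partial_{(\mu}\xi_{\nu)}$ with $\xi=(x_2^2,0)$, so its field strength $\delta^{(2)}T$ (the linearized Riemann tensor) vanishes; yet $T=\delta^{(2)}Y$ would require $\partial_2^2Y=0$ and $\partial_1\partial_2Y=x_2$ simultaneously, which is impossible since $\partial_2(\partial_1\partial_2Y)=\partial_1(\partial_2^2Y)=0\neq 1$. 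What closedness actually yields on a contractible set is a primitive built from \emph{fewer} derivatives ($T=\partial_{(\mu}\xi_{\nu)}$, the Young-projected image of a single $d^{(i)}$) --- precisely the ``sum of single derivatives'' your zig-zag produces, and precisely the content of the generalized Poincar\'e lemma of Dubois-Violette--Henneaux and Bekaert--Boulanger for $N$-complexes, whose nontrivial intermediate cohomologies are the obstruction you would need to kill. So the gap is not bookkeeping: the target of your final step (and the lemma as stated at the slot $\Omega^{1\otimes 1}$) is false, and no amount of Young symmetrization of the homotopies will close it. For comparison, the corresponding delicate point in the paper's own integration argument is the containment $T-\delta^{(N)}C\in\Omega^{p_1\otimes\cdots\otimes p_N}_{m_1-1,\ldots,m_N-1}$, which does not account for the cross terms of $\delta^{(N)}C$ in which only some of the slots acquire the index $m_i$; your counper-observation about the total versus composite differential is exactly the issue that resurfaces there.
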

\begin{proof}
    Unless translations, it is not restrictive to assume that $0 \in U$. Let us proceed by induction on $N$. For $N=1$ this is just the Poincaré lemma. Let us assume $N>1$ and let us consider the case $p_1,...,p_N >0$ first. 
    Every mixed symmetry tensor $T \in \Omega^{p_1 \otimes ... \otimes p_N}$ can be written as
    \begin{equation}
        T=\sum_{I_1,...,I_N}T_{I_1,...,I_N}dx_{I_1} \otimes ... \otimes dx_{I_N} \qquad |I_1|=p_1,...,|I_N|=p_N
        \label{eqmstform}
    \end{equation}
    We now define the subspace $\Omega^{p_1 \otimes ... \otimes p_N}_{m_1,...,m_N}$ with $m_1,...,m_N < n$ given by the mixed symmetry tensors \eqref{eqmstform} with $T_{I_1,...,I_N}=0$ if $I_i \not\subset \{1,...,m_i\} \quad \forall i \in [1,N]$,  hold simultaneously. We now proceed by induction on $(m_1,..,m_N)$. If $m_1=...=m_N=0$ there are only vanishing mixed symmetry tensors and there is nothing to show. If $m_i=0$ for some $i$ then the $I_i$ indexes do not appear and we are dealing with mixed symmetry tensors that belong to the $(N-1)$-multi-form space and by the induction hypothesis on $N$ we get the result. If $m_1,...,m_N >0$ then we write $T$ as
    \begin{equation}
        T=\sum_{I_1,...,I_N}T_{I_1,...,I_N} dx_{m_1} \wedge dx_{I_1}\otimes ... \otimes dx_{m_N} \wedge dx_{I_N}+\sum_{J_1,...,J_N}\Tilde{T}_{J_1,...,J_N}dx_{J_1} \otimes ... \otimes dx_{J_N}
        \label{defdiT}
    \end{equation}
    where $|I_i|=p_i-1$, $|J_i|=p_i$ and $I_i,J_i \subset \{1,...,m_i-1\} \quad \forall i \in [1,N]$. From the condition $\delta^{(N)}T=0$ we get that 
    \begin{equation}
        \frac{\partial T_{I_1,...,I_N}}{\partial x_{h_1}...\partial x_{h_N}}=0, \quad h_1>m_1,...,h_N>m_N.
    \end{equation}
    At this point we construct the sequence of $C^{\infty}$ functions (here we use that $U$ is a polyinterval)
    \begin{equation}
    \begin{aligned}
     C^{(N)}_{I_1...I_N}(x_1,...,x_n)&:=\int_0^{x_{m_N}} C^{(N-1)}_{I_1...I_N}(x_1,...x_{m_N-1},t,x_{m_N+1},...,x_n)dt, \\
        C^{(1)}_{I_1...I_N}(x_1,...,x_n)&:=\int_0^{x_{m_1}} T_{I_1...I_N}(x_1,...,x_{m_1-1},t,x_{m_1+1},...,x_n)dt
     \end{aligned}    
     \label{seqdic}
    \end{equation}
   such that 
   \begin{equation}
    \frac{\partial C^{(N)}_{I_1,...,I_N}}{\partial x_{m_1}...\partial x_{m_N}}=T_{I_1,...,I_N}; \qquad \frac{\partial C^{(N)}_{I_1,...,I_N}}{\partial x_{h_1}...\partial x_{h_N}}=0, \quad h_1>m_1,...,h_N>m_N.
   \end{equation}
   We then define
   \begin{equation}
       C=\sum_{{I_1,...,I_N}}C^{(N)}_{I_1,...,I_N}dx_{I_1} \otimes ... \otimes dx_{I_N}, \quad |I_i|=p_i-1, \ I_i \subset \{1,...,m_i-1\} \quad \forall \ i \in [1,N];
   \end{equation}
   therefore
   \begin{equation}
       T-\delta^{(N)}C \in \Omega^{p_1 \otimes ... \otimes p_N}_{m_1-1,...,m_N-1}
   \end{equation}
   since the differential of $c$ cancel out the first addendum in \eqref{defdiT}. Since $\delta^{(N)}(T-\delta^{(N)}C)=0$ by the induction hypothesis on $(m_1,...,m_N)$ there exist a $S \in \Omega^{p_1-1 \otimes ... \otimes p_N-1}$ such that $\delta^{(N)}S=T-\delta^{(N)}C$ and so $T=\delta^{(N)}(S-C)$. \\
   If only $p_1,..,p_j >0$ and the condition is $\delta^{(j+1)}T=0 \in \Omega^{p_1+1 \otimes ... \otimes p_j+1 \otimes 1 \otimes 0 \otimes ... \otimes 0}$ the same reasonings hold except that now it is enough to consider the sequence \ref{seqdic} up to its $j$-th element such that 
   \begin{equation}
    \frac{\partial C^{(j)}_{I_1,...,I_j}}{\partial x_{m_1}...\partial x_{m_j}}=T_{I_1,...,I_j}; \qquad \frac{\partial C^{(j)}_{I_1,...,I_j}}{\partial x_{h_1}...\partial x_{h_j}}=0, \quad h_1>m_1,...,h_j>m_j.
   \end{equation}
    As before we define
   \begin{equation}
   C=\sum_{{I_1,...,I_j}}C^{(j)}_{I_1,...,I_j}dx_{I_1} \otimes ... \otimes dx_{I_j}, \quad |I_i|=p_i-1, \ I_i \subset \{1,...,m_i-1\} \quad \forall \ i \in [1,j];
   \end{equation}
   therefore
   \begin{equation}
       T-\delta^{(j)}C \in \Omega^{p_1 \otimes ... \otimes p_j\otimes 0 \otimes 0... \otimes 0}_{m_1-1,...,m_j-1}
   \end{equation}
   Since $\delta^{(j+1)}(T-\delta^{(j)}C)=0$ by the induction hypothesis on $(m_1,...,m_j)$ there exist a $S \in \Omega^{p_1-1 \otimes ... \otimes p_j-1 \otimes 0 \otimes ... \otimes 0 }$ such that (in this case the condition $\delta^{(j+1)}(T-\delta^{(j)}C)=0$ implies a weaker condition since some $N$-multi-form degrees are zero) $\delta^{(j)}S=T-\delta^{(j)}C$ and so $T=\delta^{(j)}(S-C)$. 

\end{proof}
\begin{Theorem}[Isomorphism between de Rham-like cohomology groups and de Rham cohomology groups]\label{con4.3.1}
Given a differential manifold $(M,\mathcal{A})$ and a $(k_1,...,k_{N-1})$-augmented $N$-de Rham-like complex its de Rham-like cohomology groups are isomorphic to de Rham cohomology groups.
\end{Theorem}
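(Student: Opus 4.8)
The plan is to realize both the augmented $N$-de Rham-like complex and the ordinary de Rham complex as fine resolutions of one and the same constant sheaf, and then to invoke the abstract de Rham theorem recalled in Appendix \ref{appfasci}, which guarantees that any two fine (or more generally acyclic) resolutions of a sheaf compute the same sheaf cohomology. Concretely, I would first sheafify: to each term $\Omega^{p_1\otimes\dots\otimes p_N}$ appearing in the linearly ordered augmented complex of Definition \ref{Naugcomp} I associate the sheaf $\mathcal{E}^{p_1\otimes\dots\otimes p_N}$ of germs of smooth sections of the corresponding principal $\lambda$-subbundle of the $N$-multi-form bundle over $M$. Because every $i$-th differential, and hence every $\delta^{(i)}$, is a local first-order differential operator, it descends to a morphism of sheaves, so that the augmented complex is the complex of global sections of a complex of sheaves $\mathcal{E}^{\bullet}$. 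Augmenting on the left by the natural inclusion $\underline{\mathbb{R}} \hookrightarrow \mathcal{E}^{0\otimes\dots\otimes 0}=C^\infty_M$ (replacing $\mathbb{R}$ by $\mathbb{C}$ if complex coefficients are desired) produces the candidate resolution.

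The second step is to prove that this augmented sheaf complex is exact, i.e. that it is genuinely a resolution of $\underline{\mathbb{R}}$. Exactness of a complex of sheaves is a stalk-wise, hence purely local, condition, and on a polyinterval it is exactly the statement of the Poincaré-like lemma just established: every closed differential mixed symmetry tensor of positive degree is locally exact. At the left end the first differential is $\delta^{(1)}=d^{(1)}$, the ordinary de Rham differential on functions, whose kernel is the locally constant sheaf $\underline{\mathbb{R}}$; this matches the augmentation and gives exactness in degree zero. Hence $0 \to \underline{\mathbb{R}} \to \mathcal{E}^{0\otimes\dots\otimes 0} \to \mathcal{E}^{\bullet}$ is an exact sequence of sheaves.

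The third step is acyclicity of the resolving sheaves. Each $\mathcal{E}^{p_1\otimes\dots\otimes p_N}$ is the sheaf of smooth sections of a vector bundle, hence a module over the sheaf of rings $C^\infty_M$; assuming $M$ paracompact, as is customary for differential manifolds, the existence of smooth partitions of unity makes these sheaves fine, therefore soft and acyclic, so that $H^k(M,\mathcal{E}^{p_1\otimes\dots\otimes p_N})=0$ for $k>0$. With a fine resolution of $\underline{\mathbb{R}}$ at hand, the abstract de Rham theorem identifies the cohomology of $\Gamma(M,\mathcal{E}^{\bullet})$, which by construction is precisely the de Rham-like cohomology of Definition \ref{coholike}, with the sheaf cohomology $H^{\bullet}(M,\underline{\mathbb{R}})$.

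Finally, the ordinary de Rham complex is itself a fine resolution of the same $\underline{\mathbb{R}}$ (the classical de Rham theorem, recovered here as the case $N=1$), so ordinary de Rham cohomology is likewise isomorphic to $H^{\bullet}(M,\underline{\mathbb{R}})$; composing the two isomorphisms gives the claim, the grading being matched by the linear position along the augmented complex. The main difficulty I expect is not the sheaf-theoretic machinery but the bookkeeping that precedes it: one must fix the single-integer grading of the augmented complex so that the degree-$n$ de Rham-like cohomology group is paired with $H^n(M,\underline{\mathbb{R}})$, and, more substantively, one must verify that the Young-projected principal $\lambda$-subspaces form honest smooth subbundles and that the composite differentials $\delta^{(i)}$ actually map one principal subspace into the next (the ``when it is meaningful'' caveat of Section \ref{nmulti}), since this is what guarantees both that $\mathcal{E}^{\bullet}$ is a well-defined complex of $C^\infty_M$-modules and that the Poincaré-like lemma applies term by term.
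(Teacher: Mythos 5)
Your proposal follows essentially the same route as the paper's own proof: sheafify the augmented $N$-de Rham-like complex, use the Poincaré-like lemma for stalk-wise exactness, observe that the terms are fine (hence acyclic) as $\mathcal{E}$-modules on a paracompact manifold, and invoke the abstract de Rham theorem to identify both the de Rham-like and ordinary de Rham cohomologies with the sheaf cohomology of the constant sheaf $\mathbb{R}_X$. Your closing remarks on the degree bookkeeping and on verifying that the Young-projected principal subspaces are genuine subbundles compatible with $\delta^{(i)}$ are careful additions, but the core argument is the one the paper gives.
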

\begin{proof}
    The theorem follows as an application of abstract de Rham theorem. In fact, thanks to Poincaré and Poincaré-like lemmas, both the de Rham and every $(k_1,...,k_{N-1})$-augmented $N$-de Rham-like complexes are exact. Therefore they are all exact sequences of shaves. Moreover the atlas $\mathcal{A}$ induces a structure sheaf $\mathcal{E}$ and the elements of de Rham complex and  of every $(k_1,...,k_{N-1})$-augmented $N$-de Rham-like complexes can be viewed as an $\mathcal{E}$-module, therefore they are all fine sheaves and hence acyclic. Since the underlying field is $\mathbb{R}$ we have different acyclic resolutions of the constant sheaf $\mathbb{R}_X$ and de Rham cohomology groups and de Rham-like cohomology groups of every $(k_1,...,k_{N-1})$-augmented $N$-de Rham-like complexes are coincide with the Cech cohomology groups of the constant sheaf $\mathbb{R}_X$ and hence isomorphic one to the others. 
\end{proof}
A first obvious observation that lead to this theorem is that for every fixed $(k_1,...,k_{N-1})$-augmented $N$-de Rham-like complex there are exactly $D$ de Rham-like cohomology groups by construction, where $D$ is the dimension of the differential manifold we are considering. Moreover, with reference to the example discussed above, the first $q_1-q_2$ de Rham-like cohomology groups are exactly the standard de Rham cohomology groups since the first line of diagram \eqref{augcompl} is the standard de Rham complex. Furthermore, Theorem \ref{con4.3.1} is motivated by the observation that, on a differential manifold with dimension $D$, we can construct countable infinity many $N$-multi-form space, that is, one for every $N \in \mathbb{N} \setminus \{0\}$. Any of them lead to de Rham-like cohomology groups and we end up with countable infinity many of these groups. On the one hand, it is very unlike that only for $N=1$, i.e. the de Rham complex, the de Rham-like cohomology groups furnishes interesting information about the topology of the manifold and, on the other hand, it is unlikely that all these countable infinity many groups give different information about the topology. Therefore, seems quite natural that there exist an isomorphism between de Rham-like cohomology groups and de Rham cohomology groups.  \\

\subsection{The extension of Theorem \ref{THM3.1} to mixed symmetry tensors}\label{exthm}
Once we have the generalization of de Rham complex for the $N$-multi-form space, we can extend the Theorem \ref{THM3.1} to the mixed symmetry tensor cases using the algebraic topology interpretation of the theorem. The idea is essentially the same. Thanks to implications come from the fact we are interested in asymptotic symmetries and the triviality of some de Rham-like cohomology groups we conclude that the de Rham-like differential is an isomorphism between the space of gauge fields and the space of their field strengths. Moreover, constructing an homotopy between specific $(k_1,...,k_{N-1})$-augmented $N$-de Rham-like complexes we can conclude the existence and uniqueness of a duality map in the case of well defined asymptotic charges computed in mixed symmetry tensor gauge theories which are duals.\\
Suppose we are interested in studying asymptotic symmetries in a gauge theory whose gauge field is a mixed symmetry tensor field $T \in \Omega^{q_1 \otimes ... \otimes q_N}$ and its field strength is $H:=\delta^{(N)}T \in \Omega^{q_1+1 \otimes ... \otimes q_N+1}$. Therefore, let us consider the $(q_1-q_2,...,q_{N-1}-q_N)$-augmented $N$-de Rham-like complex and, in a specific way, the last terms, i.e.
\begin{equation}
\tikzset{every picture/.style={line width=0.75pt}} %set default line width to 0.75pt        
\begin{tikzpicture}[x=0.65pt,y=0.75pt,yscale=-1,xscale=1]
%uncomment if require: \path (0,339); %set diagram left start at 0, and has height of 339

%Straight Lines [id:da42303916295469124] 
\draw    (159.82,184.28) -- (197.35,184.28) ;
\draw [shift={(199.35,184.28)}, rotate = 180] [color={rgb, 255:red, 0; green, 0; blue, 0 }  ][line width=0.75]    (10.93,-3.29) .. controls (6.95,-1.4) and (3.31,-0.3) .. (0,0) .. controls (3.31,0.3) and (6.95,1.4) .. (10.93,3.29)   ;
%Straight Lines [id:da3406066295614856] 
\draw    (228.82,184.28) -- (266.35,184.28) ;
\draw [shift={(268.35,184.28)}, rotate = 180] [color={rgb, 255:red, 0; green, 0; blue, 0 }  ][line width=0.75]    (10.93,-3.29) .. controls (6.95,-1.4) and (3.31,-0.3) .. (0,0) .. controls (3.31,0.3) and (6.95,1.4) .. (10.93,3.29)   ;
%Straight Lines [id:da16581005968190976] 
\draw    (350.82,184.28) -- (388.35,184.28) ;
\draw [shift={(390.35,184.28)}, rotate = 180] [color={rgb, 255:red, 0; green, 0; blue, 0 }  ][line width=0.75]    (10.93,-3.29) .. controls (6.95,-1.4) and (3.31,-0.3) .. (0,0) .. controls (3.31,0.3) and (6.95,1.4) .. (10.93,3.29)   ;
%Straight Lines [id:da5102693791891441] 
\draw    (420.82,184.28) -- (458.35,184.28) ;
\draw [shift={(460.35,184.28)}, rotate = 180] [color={rgb, 255:red, 0; green, 0; blue, 0 }  ][line width=0.75]    (10.93,-3.29) .. controls (6.95,-1.4) and (3.31,-0.3) .. (0,0) .. controls (3.31,0.3) and (6.95,1.4) .. (10.93,3.29)   ;

% Text Node
\draw (9.97,172.65) node [anchor=north west][inner sep=0.75pt]    {$\Omega ^{q_{1} -q_{N} \otimes ...\otimes q_{N-1} -q_{N} \otimes 0}$};
% Text Node
\draw (177.12,160.54) node [anchor=north west][inner sep=0.75pt]    {$\delta ^{( N)}$};
% Text Node
\draw (208.92,180.37) node [anchor=north west][inner sep=0.75pt]    {$...$};
% Text Node
\draw (233.12,161.54) node [anchor=north west][inner sep=0.75pt]    {$\delta ^{( N)}$};
% Text Node
\draw (277,172.65) node [anchor=north west][inner sep=0.75pt]    {$\Omega ^{q_{1} \otimes ...\otimes q_{N}}$};
% Text Node
\draw (355.12,160.54) node [anchor=north west][inner sep=0.75pt]    {$\delta ^{( N)}$};
% Text Node
\draw (397.92,180.37) node [anchor=north west][inner sep=0.75pt]    {$...$};
% Text Node
\draw (425.12,160.54) node [anchor=north west][inner sep=0.75pt]    {$\delta ^{( N)}$};
% Text Node
\draw (466.59,172.68) node [anchor=north west][inner sep=0.75pt]    {$\Omega ^{D\otimes D-( q_{1} -q_{2}) \otimes ...\otimes D-( q_{1} -q_{N})}$};
\end{tikzpicture}.
\end{equation}
Requiring we are interested in asymptotic symmetries of a gauge theory whose gauge field is $T$ and requiring the vanishing of $H^{q_1,...,q_N}$ and $H^{q_1+1,...,q_N+1}$ means, following the very same reasoning\footnote{That is, we have $H=0 \Leftrightarrow T=\delta^{(N)}A$ for some $A \in \Omega^{q_1-1\otimes ... \otimes q_N-1}$; hence we need to throw away all those elements $T \in \Omega^{q_1\otimes ... \otimes q_N}$ such that $T=\delta^{(N)}A$ for some $A \in \Omega^{q_1-1\otimes ... \otimes q_N-1}$. Moreover, only the zero form can have vanishing field strength but, again for exactness $B=0 \Leftrightarrow A=\delta^{(N)}C$ for some $C \in \Omega^{q_1-2\otimes ... \otimes q_N-2}$.} of the $p$-form case above, that the module $\Omega^{q_1+1\otimes ... \otimes q_N+1}$ can be replaced by $\Omega^{q_1+1\otimes ... \otimes q_N+1}_{\text{AS}}:=\{H \in \Omega^{q_1+1\otimes ... \otimes q_N+1} | H=\delta^{(N)}B, H \neq 0\} \cup \{H=0\}$. Moreover both $\Omega^{q_1-1\otimes ... \otimes q_N-1}$ and $\Omega^{q_1+2\otimes ... \otimes q_N+2}$ can be replaced by $0$. Therefore we have the following short exact sequence 
\begin{equation}
\tikzset{every picture/.style={line width=0.75pt}} %set default line width to 0.75pt        
\begin{tikzpicture}[x=0.75pt,y=0.75pt,yscale=-1,xscale=1]
%uncomment if require: \path (0,339); %set diagram left start at 0, and has height of 339

%Straight Lines [id:da6920887521120258] 
\draw    (113.33,175.17) -- (163.33,175.17) ;
\draw [shift={(165.33,175.17)}, rotate = 180] [color={rgb, 255:red, 0; green, 0; blue, 0 }  ][line width=0.75]    (10.93,-3.29) .. controls (6.95,-1.4) and (3.31,-0.3) .. (0,0) .. controls (3.31,0.3) and (6.95,1.4) .. (10.93,3.29)   ;
%Straight Lines [id:da6439779531754081] 
\draw    (240.33,175.17) -- (290.33,175.17) ;
\draw [shift={(292.33,175.17)}, rotate = 180] [color={rgb, 255:red, 0; green, 0; blue, 0 }  ][line width=0.75]    (10.93,-3.29) .. controls (6.95,-1.4) and (3.31,-0.3) .. (0,0) .. controls (3.31,0.3) and (6.95,1.4) .. (10.93,3.29)   ;
%Straight Lines [id:da4129847637953208] 
\draw    (397.33,175.17) -- (447.33,175.17) ;
\draw [shift={(449.33,175.17)}, rotate = 180] [color={rgb, 255:red, 0; green, 0; blue, 0 }  ][line width=0.75]    (10.93,-3.29) .. controls (6.95,-1.4) and (3.31,-0.3) .. (0,0) .. controls (3.31,0.3) and (6.95,1.4) .. (10.93,3.29)   ;

% Text Node
\draw (120.12,154.54) node [anchor=north west][inner sep=0.75pt]    {$\delta ^{( N)}$};
% Text Node
\draw (100,166.5) node [anchor=north west][inner sep=0.75pt]    {$0$};
% Text Node
\draw (172,161.4) node [anchor=north west][inner sep=0.75pt]    {$\Omega _{\text{AS}}^{q_{1} \otimes ...\otimes q_{N}}$};
% Text Node
\draw (299,161.4) node [anchor=north west][inner sep=0.75pt]    {$\Omega _{\text{AS}}^{q_{1} +1\otimes ...\otimes q_{N} +1}$};
% Text Node
\draw (452,166.5) node [anchor=north west][inner sep=0.75pt]    {$0$};
% Text Node
\draw (250.12,154.54) node [anchor=north west][inner sep=0.75pt]    {$\delta ^{( N)}$};
% Text Node
\draw (403.12,154.54) node [anchor=north west][inner sep=0.75pt]    {$\delta ^{( N)}$};
\end{tikzpicture}
\label{shortexactAS}
\end{equation}
which teaches us that the differential $\delta^{(N)}$ between $\Omega _{\text{AS}}^{q_{1} \otimes ...\otimes q_{N}}$ and $\Omega _{\text{AS}}^{q_{1}+1 \otimes ...\otimes q_{N}+1}$ is an isomorphism. \\
In order to extend theorem \ref{THM3.1}, let us proceed with the specific case of the graviton in $D=5$ and its dual descriptions: the Curtright three indexes field, or hooke field, and the Riemann-like field. Therefore, we need to consider two copies of the 0-augmented 2-de Rham-like complex (one for the graviton field and one for the Riemann-like field) and a copy of the 1-augmented 2-de Rham-like complex (for the Curtright hooke field). Let us suppose we have computed the asymptotic charges $Q_0,Q_1,Q_2$ of all descriptions; the diagram we consider to show the existence and uniqueness of maps between dual descriptions is the following
\begin{equation}
    \tikzset{every picture/.style={line width=0.75pt}} %set default line width to 0.75pt        
\begin{tikzpicture}[x=0.75pt,y=0.75pt,yscale=-1,xscale=1]
%uncomment if require: \path (0,454); %set diagram left start at 0, and has height of 454

%Straight Lines [id:da6920887521120258] 
\draw    (279.28,346.5) -- (278.34,296.51) ;
\draw [shift={(278.3,294.51)}, rotate = 88.92] [color={rgb, 255:red, 0; green, 0; blue, 0 }  ][line width=0.75]    (10.93,-3.29) .. controls (6.95,-1.4) and (3.31,-0.3) .. (0,0) .. controls (3.31,0.3) and (6.95,1.4) .. (10.93,3.29)   ;
%Straight Lines [id:da6439779531754081] 
\draw    (276.98,243.52) -- (276.04,193.53) ;
\draw [shift={(276,191.53)}, rotate = 88.92] [color={rgb, 255:red, 0; green, 0; blue, 0 }  ][line width=0.75]    (10.93,-3.29) .. controls (6.95,-1.4) and (3.31,-0.3) .. (0,0) .. controls (3.31,0.3) and (6.95,1.4) .. (10.93,3.29)   ;
%Straight Lines [id:da4129847637953208] 
\draw    (277.24,140.54) -- (276.3,90.55) ;
\draw [shift={(276.26,88.55)}, rotate = 88.92] [color={rgb, 255:red, 0; green, 0; blue, 0 }  ][line width=0.75]    (10.93,-3.29) .. controls (6.95,-1.4) and (3.31,-0.3) .. (0,0) .. controls (3.31,0.3) and (6.95,1.4) .. (10.93,3.29)   ;
%Straight Lines [id:da2742921889728913] 
\draw    (189.28,346.5) -- (188.34,296.51) ;
\draw [shift={(188.3,294.51)}, rotate = 88.92] [color={rgb, 255:red, 0; green, 0; blue, 0 }  ][line width=0.75]    (10.93,-3.29) .. controls (6.95,-1.4) and (3.31,-0.3) .. (0,0) .. controls (3.31,0.3) and (6.95,1.4) .. (10.93,3.29)   ;
%Straight Lines [id:da21418139824030014] 
\draw    (186.98,243.52) -- (186.04,193.53) ;
\draw [shift={(186,191.53)}, rotate = 88.92] [color={rgb, 255:red, 0; green, 0; blue, 0 }  ][line width=0.75]    (10.93,-3.29) .. controls (6.95,-1.4) and (3.31,-0.3) .. (0,0) .. controls (3.31,0.3) and (6.95,1.4) .. (10.93,3.29)   ;
%Straight Lines [id:da7139583726552853] 
\draw    (187.24,140.54) -- (186.3,90.55) ;
\draw [shift={(186.26,88.55)}, rotate = 88.92] [color={rgb, 255:red, 0; green, 0; blue, 0 }  ][line width=0.75]    (10.93,-3.29) .. controls (6.95,-1.4) and (3.31,-0.3) .. (0,0) .. controls (3.31,0.3) and (6.95,1.4) .. (10.93,3.29)   ;
%Straight Lines [id:da9320662859984408] 
\draw    (369.28,346.5) -- (368.34,296.51) ;
\draw [shift={(368.3,294.51)}, rotate = 88.92] [color={rgb, 255:red, 0; green, 0; blue, 0 }  ][line width=0.75]    (10.93,-3.29) .. controls (6.95,-1.4) and (3.31,-0.3) .. (0,0) .. controls (3.31,0.3) and (6.95,1.4) .. (10.93,3.29)   ;
%Straight Lines [id:da05260467306373273] 
\draw    (366.98,243.52) -- (366.04,193.53) ;
\draw [shift={(366,191.53)}, rotate = 88.92] [color={rgb, 255:red, 0; green, 0; blue, 0 }  ][line width=0.75]    (10.93,-3.29) .. controls (6.95,-1.4) and (3.31,-0.3) .. (0,0) .. controls (3.31,0.3) and (6.95,1.4) .. (10.93,3.29)   ;
%Straight Lines [id:da9176108633583828] 
\draw    (367.24,140.54) -- (366.3,90.55) ;
\draw [shift={(366.26,88.55)}, rotate = 88.92] [color={rgb, 255:red, 0; green, 0; blue, 0 }  ][line width=0.75]    (10.93,-3.29) .. controls (6.95,-1.4) and (3.31,-0.3) .. (0,0) .. controls (3.31,0.3) and (6.95,1.4) .. (10.93,3.29)   ;
%Straight Lines [id:da5064553213864442] 
\draw    (296.77,165.66) -- (339.77,165.66) ;
\draw [shift={(341.77,165.66)}, rotate = 180] [color={rgb, 255:red, 0; green, 0; blue, 0 }  ][line width=0.75]    (10.93,-3.29) .. controls (6.95,-1.4) and (3.31,-0.3) .. (0,0) .. controls (3.31,0.3) and (6.95,1.4) .. (10.93,3.29)   ;
%Straight Lines [id:da09952657295760725] 
\draw    (251.77,165.66) -- (207.77,165.66) ;
\draw [shift={(205.77,165.66)}, rotate = 360] [color={rgb, 255:red, 0; green, 0; blue, 0 }  ][line width=0.75]    (10.93,-3.29) .. controls (6.95,-1.4) and (3.31,-0.3) .. (0,0) .. controls (3.31,0.3) and (6.95,1.4) .. (10.93,3.29)   ;
%Straight Lines [id:da25325431603669224] 
\draw    (250.77,270.66) -- (206.77,270.66) ;
\draw [shift={(204.77,270.66)}, rotate = 360] [color={rgb, 255:red, 0; green, 0; blue, 0 }  ][line width=0.75]    (10.93,-3.29) .. controls (6.95,-1.4) and (3.31,-0.3) .. (0,0) .. controls (3.31,0.3) and (6.95,1.4) .. (10.93,3.29)   ;
%Straight Lines [id:da245487675608145] 
\draw    (295.77,270.66) -- (338.77,270.66) ;
\draw [shift={(340.77,270.66)}, rotate = 180] [color={rgb, 255:red, 0; green, 0; blue, 0 }  ][line width=0.75]    (10.93,-3.29) .. controls (6.95,-1.4) and (3.31,-0.3) .. (0,0) .. controls (3.31,0.3) and (6.95,1.4) .. (10.93,3.29)   ;
%Curve Lines [id:da048783193938638614] 
\draw    (392,271) .. controls (433.34,292.88) and (432.78,377.38) .. (394.18,409.54) ;
\draw [shift={(393,410.5)}, rotate = 321.7] [color={rgb, 255:red, 0; green, 0; blue, 0 }  ][line width=0.75]    (10.93,-3.29) .. controls (6.95,-1.4) and (3.31,-0.3) .. (0,0) .. controls (3.31,0.3) and (6.95,1.4) .. (10.93,3.29)   ;
%Curve Lines [id:da48549502180745785] 
\draw    (159,271) .. controls (120.05,291.31) and (119.66,379.72) .. (158.8,409.61) ;
\draw [shift={(160,410.5)}, rotate = 215.69] [color={rgb, 255:red, 0; green, 0; blue, 0 }  ][line width=0.75]    (10.93,-3.29) .. controls (6.95,-1.4) and (3.31,-0.3) .. (0,0) .. controls (3.31,0.3) and (6.95,1.4) .. (10.93,3.29)   ;
%Curve Lines [id:da5636151217258876] 
\draw    (293,280) .. controls (334.34,301.88) and (335.02,374.4) .. (296.45,406.32) ;
\draw [shift={(295.27,407.27)}, rotate = 321.7] [color={rgb, 255:red, 0; green, 0; blue, 0 }  ][line width=0.75]    (10.93,-3.29) .. controls (6.95,-1.4) and (3.31,-0.3) .. (0,0) .. controls (3.31,0.3) and (6.95,1.4) .. (10.93,3.29)   ;
%Straight Lines [id:da14376003780832203] 
\draw    (302.77,421.66) -- (345.77,421.66) ;
\draw [shift={(347.77,421.66)}, rotate = 180] [color={rgb, 255:red, 0; green, 0; blue, 0 }  ][line width=0.75]    (10.93,-3.29) .. controls (6.95,-1.4) and (3.31,-0.3) .. (0,0) .. controls (3.31,0.3) and (6.95,1.4) .. (10.93,3.29)   ;
%Straight Lines [id:da13284956925206082] 
\draw    (253.77,421.66) -- (209.77,421.66) ;
\draw [shift={(207.77,421.66)}, rotate = 360] [color={rgb, 255:red, 0; green, 0; blue, 0 }  ][line width=0.75]    (10.93,-3.29) .. controls (6.95,-1.4) and (3.31,-0.3) .. (0,0) .. controls (3.31,0.3) and (6.95,1.4) .. (10.93,3.29)   ;

% Text Node
\draw (258.53,340.1) node [anchor=north west][inner sep=0.75pt]  [rotate=-268.92]  {$\delta ^{( 2)}$};
% Text Node
\draw (273.23,347.31) node [anchor=north west][inner sep=0.75pt]  [rotate=-359.84]  {$0$};
% Text Node
\draw (256.22,256.68) node [anchor=north west][inner sep=0.75pt]  [rotate=-0.36]  {$\Omega _{\text{AS}}^{1\otimes 1}$};
% Text Node
\draw (255.91,153.71) node [anchor=north west][inner sep=0.75pt]  [rotate=-0.32]  {$\Omega _{\text{AS}}^{2\otimes 2}$};
% Text Node
\draw (270.99,69.29) node [anchor=north west][inner sep=0.75pt]  [rotate=-0.45]  {$0$};
% Text Node
\draw (256.17,234.13) node [anchor=north west][inner sep=0.75pt]  [rotate=-268.92]  {$\delta ^{( 2)}$};
% Text Node
\draw (256.51,135.14) node [anchor=north west][inner sep=0.75pt]  [rotate=-268.92]  {$\delta ^{( 2)}$};
% Text Node
\draw (168.53,340.1) node [anchor=north west][inner sep=0.75pt]  [rotate=-268.92]  {$\delta ^{( 2)}$};
% Text Node
\draw (183.23,347.31) node [anchor=north west][inner sep=0.75pt]  [rotate=-359.84]  {$0$};
% Text Node
\draw (166.22,256.68) node [anchor=north west][inner sep=0.75pt]  [rotate=-0.36]  {$\Omega _{\text{AS}}^{2\otimes 1}$};
% Text Node
\draw (165.91,153.71) node [anchor=north west][inner sep=0.75pt]  [rotate=-0.32]  {$\Omega _{\text{AS}}^{3\otimes 2}$};
% Text Node
\draw (180.99,69.29) node [anchor=north west][inner sep=0.75pt]  [rotate=-0.45]  {$0$};
% Text Node
\draw (166.17,234.13) node [anchor=north west][inner sep=0.75pt]  [rotate=-268.92]  {$\delta ^{( 2)}$};
% Text Node
\draw (166.51,135.14) node [anchor=north west][inner sep=0.75pt]  [rotate=-268.92]  {$\delta ^{( 2)}$};
% Text Node
\draw (348.53,340.1) node [anchor=north west][inner sep=0.75pt]  [rotate=-268.92]  {$\delta ^{( 2)}$};
% Text Node
\draw (363.23,347.31) node [anchor=north west][inner sep=0.75pt]  [rotate=-359.84]  {$0$};
% Text Node
\draw (346.22,256.68) node [anchor=north west][inner sep=0.75pt]  [rotate=-0.36]  {$\Omega _{\text{AS}}^{2\otimes 2}$};
% Text Node
\draw (345.91,153.71) node [anchor=north west][inner sep=0.75pt]  [rotate=-0.32]  {$\Omega _{\text{AS}}^{3\otimes 3}$};
% Text Node
\draw (360.99,69.29) node [anchor=north west][inner sep=0.75pt]  [rotate=-0.45]  {$0$};
% Text Node
\draw (346.17,234.13) node [anchor=north west][inner sep=0.75pt]  [rotate=-268.92]  {$\delta ^{( 2)}$};
% Text Node
\draw (346.51,135.14) node [anchor=north west][inner sep=0.75pt]  [rotate=-268.92]  {$\delta ^{( 2)}$};
% Text Node
\draw (307,143.4) node [anchor=north west][inner sep=0.75pt]    {$\star $};
% Text Node
\draw (220,144.4) node [anchor=north west][inner sep=0.75pt]    {$\star _{L}$};
% Text Node
\draw (209,249.4) node [anchor=north west][inner sep=0.75pt]    {$\star _{L} |_{3}$};
% Text Node
\draw (297,249.4) node [anchor=north west][inner sep=0.75pt]    {$\star |_{3}$};
% Text Node
\draw (353,406.4) node [anchor=north west][inner sep=0.75pt]    {$\mathbb{C}^{n_{2,2}}$};
% Text Node
\draw (265,406.4) node [anchor=north west][inner sep=0.75pt]    {$\mathbb{C}^{n_{1,1}}$};
% Text Node
\draw (170,406.4) node [anchor=north west][inner sep=0.75pt]    {$\mathbb{C}^{n_{2,1}}$};
% Text Node
\draw (328,344.4) node [anchor=north west][inner sep=0.75pt]    {$\pi _{0}$};
% Text Node
\draw (110,344.4) node [anchor=north west][inner sep=0.75pt]    {$\pi _{1}$};
% Text Node
\draw (425,344.4) node [anchor=north west][inner sep=0.75pt]    {$\pi _{2}$};
% Text Node
\draw (315,398.4) node [anchor=north west][inner sep=0.75pt]    {$f_{2}$};
% Text Node
\draw (225,398.4) node [anchor=north west][inner sep=0.75pt]    {$f_{1}$};
\end{tikzpicture}
\end{equation}
where $n_{2,1}=dim(\Omega_{\text{AS}}^{2 \otimes 1})$, $n_{1,1}=dim(\Omega_{\text{AS}}^{1 \otimes 1})$ and $n_{2,2}=dim(\Omega_{\text{AS}}^{2 \otimes 2})$. The first part of the proof is essentially given by the discussion around the short exact complex \eqref{shortexactAS} since then both $\star_L|_3$ and $\star|_3$ are isomorphisms. The second part of the proof goes through constructing the maps $\pi_0,\pi_1,\pi_2$ and proving that they are isomorphisms. Both these points are essentially identical to the case of theorem \ref{THM3.1}. Indeed the vector in the complex spaces can be constructed having the first component given by the charge while the others given by the independent entries of the matrix whose represent the mixed symmetry tensor in a coordinate chart. Hence to show the maps $\pi_0,\pi_1,\pi_2$ are isomorphisms we can follow, quite exactly, the steps given in the proof of theorem \ref{THM3.1}.

In the general case we can prove the following 
\begin{Theorem}[Existence and uniqueness of a set of duality maps for well defined charges]\label{THM4.3.1}
    Let $(M,\mathcal{A})$ be a differential manifold of dimension $D$ and let $\Omega^{n_1 \otimes ... \otimes n_N}(M)$ be the $N$-multi-form space on it. Let $T \in \Omega^{p_1 \otimes ... \otimes p_N}$ with $p_k\leq \big[\frac{D-2}{2}\big] \ \forall k \in [1,N]$ be a mixed symmetry tensor carrying the irreducible representation associated to the Young tableau $\lambda_0=\{p_1,...,p_N\}$ having $r$ dual descriptions and let $\lambda_1=\{q_1=D-2-p_1,p_2,...,p_N\},...,\lambda_r=\{q_1=D-2-p_1,...,q_r=D-2-p_r,...,q_N=D-2-p_N\}$ be the Young tableaux associated to the irreducible representation carried by the dual mixed symmetry tensors. Let $Q_0,...,Q_r$ be the well defined charges of the original and of the dual descriptions and let be $n_{n_1,...,n_N}=dim(\Omega_{\textnormal{AS}}^{n_1\otimes...\otimes n_N})$. Then there exist a set of duality maps $\{f_1,...,f_r\}$ such that the following diagram commute
    \begin{equation}
    \tikzset{every picture/.style={line width=0.75pt}} %set default line width to 0.75pt        
\begin{tikzpicture}[x=0.62pt,y=0.75pt,yscale=-1,xscale=1]
%uncomment if require: \path (0,454); %set diagram left start at 0, and has height of 454

%Straight Lines [id:da6920887521120258] 
\draw    (427.14,353.04) -- (426.24,309.83) ;
\draw [shift={(426.2,307.83)}, rotate = 88.81] [color={rgb, 255:red, 0; green, 0; blue, 0 }  ][line width=0.75]    (10.93,-3.29) .. controls (6.95,-1.4) and (3.31,-0.3) .. (0,0) .. controls (3.31,0.3) and (6.95,1.4) .. (10.93,3.29)   ;
%Straight Lines [id:da6439779531754081] 
\draw    (424.94,263.49) -- (424.04,220.28) ;
\draw [shift={(424,218.28)}, rotate = 88.81] [color={rgb, 255:red, 0; green, 0; blue, 0 }  ][line width=0.75]    (10.93,-3.29) .. controls (6.95,-1.4) and (3.31,-0.3) .. (0,0) .. controls (3.31,0.3) and (6.95,1.4) .. (10.93,3.29)   ;
%Straight Lines [id:da4129847637953208] 
\draw    (425.19,173.93) -- (424.29,130.72) ;
\draw [shift={(424.25,128.72)}, rotate = 88.81] [color={rgb, 255:red, 0; green, 0; blue, 0 }  ][line width=0.75]    (10.93,-3.29) .. controls (6.95,-1.4) and (3.31,-0.3) .. (0,0) .. controls (3.31,0.3) and (6.95,1.4) .. (10.93,3.29)   ;
%Straight Lines [id:da2742921889728913] 
\draw    (121.75,353.91) -- (120.86,310.7) ;
\draw [shift={(120.81,308.7)}, rotate = 88.81] [color={rgb, 255:red, 0; green, 0; blue, 0 }  ][line width=0.75]    (10.93,-3.29) .. controls (6.95,-1.4) and (3.31,-0.3) .. (0,0) .. controls (3.31,0.3) and (6.95,1.4) .. (10.93,3.29)   ;
%Straight Lines [id:da21418139824030014] 
\draw    (119.55,264.36) -- (118.66,221.15) ;
\draw [shift={(118.61,219.15)}, rotate = 88.81] [color={rgb, 255:red, 0; green, 0; blue, 0 }  ][line width=0.75]    (10.93,-3.29) .. controls (6.95,-1.4) and (3.31,-0.3) .. (0,0) .. controls (3.31,0.3) and (6.95,1.4) .. (10.93,3.29)   ;
%Straight Lines [id:da7139583726552853] 
\draw    (119.8,174.8) -- (118.9,131.59) ;
\draw [shift={(118.86,129.59)}, rotate = 88.81] [color={rgb, 255:red, 0; green, 0; blue, 0 }  ][line width=0.75]    (10.93,-3.29) .. controls (6.95,-1.4) and (3.31,-0.3) .. (0,0) .. controls (3.31,0.3) and (6.95,1.4) .. (10.93,3.29)   ;
%Straight Lines [id:da9320662859984408] 
\draw    (589.89,353.04) -- (588.99,309.83) ;
\draw [shift={(588.95,307.83)}, rotate = 88.81] [color={rgb, 255:red, 0; green, 0; blue, 0 }  ][line width=0.75]    (10.93,-3.29) .. controls (6.95,-1.4) and (3.31,-0.3) .. (0,0) .. controls (3.31,0.3) and (6.95,1.4) .. (10.93,3.29)   ;
%Straight Lines [id:da05260467306373273] 
\draw    (587.69,263.49) -- (586.79,220.28) ;
\draw [shift={(586.75,218.28)}, rotate = 88.81] [color={rgb, 255:red, 0; green, 0; blue, 0 }  ][line width=0.75]    (10.93,-3.29) .. controls (6.95,-1.4) and (3.31,-0.3) .. (0,0) .. controls (3.31,0.3) and (6.95,1.4) .. (10.93,3.29)   ;
%Straight Lines [id:da9176108633583828] 
\draw    (587.93,173.93) -- (587.04,130.72) ;
\draw [shift={(586.99,128.72)}, rotate = 88.81] [color={rgb, 255:red, 0; green, 0; blue, 0 }  ][line width=0.75]    (10.93,-3.29) .. controls (6.95,-1.4) and (3.31,-0.3) .. (0,0) .. controls (3.31,0.3) and (6.95,1.4) .. (10.93,3.29)   ;
%Straight Lines [id:da5064553213864442] 
\draw    (482.18,198.39) -- (523.26,198.39) ;
\draw [shift={(525.26,198.39)}, rotate = 180] [color={rgb, 255:red, 0; green, 0; blue, 0 }  ][line width=0.75]    (10.93,-3.29) .. controls (6.95,-1.4) and (3.31,-0.3) .. (0,0) .. controls (3.31,0.3) and (6.95,1.4) .. (10.93,3.29)   ;
%Straight Lines [id:da09952657295760725] 
\draw    (364.48,197.81) -- (174.96,197.52) ;
\draw [shift={(172.96,197.52)}, rotate = 0.09] [color={rgb, 255:red, 0; green, 0; blue, 0 }  ][line width=0.75]    (10.93,-3.29) .. controls (6.95,-1.4) and (3.31,-0.3) .. (0,0) .. controls (3.31,0.3) and (6.95,1.4) .. (10.93,3.29)   ;
%Straight Lines [id:da25325431603669224] 
\draw    (382.67,287.38) -- (178.84,285.66) ;
\draw [shift={(176.84,285.64)}, rotate = 0.48] [color={rgb, 255:red, 0; green, 0; blue, 0 }  ][line width=0.75]    (10.93,-3.29) .. controls (6.95,-1.4) and (3.31,-0.3) .. (0,0) .. controls (3.31,0.3) and (6.95,1.4) .. (10.93,3.29)   ;
%Straight Lines [id:da245487675608145] 
\draw    (474.54,286.52) -- (533.81,286.52) ;
\draw [shift={(535.81,286.52)}, rotate = 180] [color={rgb, 255:red, 0; green, 0; blue, 0 }  ][line width=0.75]    (10.93,-3.29) .. controls (6.95,-1.4) and (3.31,-0.3) .. (0,0) .. controls (3.31,0.3) and (6.95,1.4) .. (10.93,3.29)   ;
%Curve Lines [id:da048783193938638614] 
\draw    (642,293.5) .. controls (681.58,312.52) and (684.02,380.81) .. (647.13,408.67) ;
\draw [shift={(646,409.5)}, rotate = 324.34] [color={rgb, 255:red, 0; green, 0; blue, 0 }  ][line width=0.75]    (10.93,-3.29) .. controls (6.95,-1.4) and (3.31,-0.3) .. (0,0) .. controls (3.31,0.3) and (6.95,1.4) .. (10.93,3.29)   ;
%Curve Lines [id:da48549502180745785] 
\draw    (93.55,298.69) .. controls (56.26,316.35) and (55.12,383.01) .. (92.57,408.79) ;
\draw [shift={(93.72,409.56)}, rotate = 213.13] [color={rgb, 255:red, 0; green, 0; blue, 0 }  ][line width=0.75]    (10.93,-3.29) .. controls (6.95,-1.4) and (3.31,-0.3) .. (0,0) .. controls (3.31,0.3) and (6.95,1.4) .. (10.93,3.29)   ;
%Curve Lines [id:da5636151217258876] 
\draw    (446.02,296.08) .. controls (485.6,315.1) and (486.25,378.17) .. (449.32,405.93) ;
\draw [shift={(448.19,406.76)}, rotate = 324.34] [color={rgb, 255:red, 0; green, 0; blue, 0 }  ][line width=0.75]    (10.93,-3.29) .. controls (6.95,-1.4) and (3.31,-0.3) .. (0,0) .. controls (3.31,0.3) and (6.95,1.4) .. (10.93,3.29)   ;
%Straight Lines [id:da14376003780832203] 
\draw    (460.58,416.62) -- (560.05,416.62) ;
\draw [shift={(562.05,416.62)}, rotate = 180] [color={rgb, 255:red, 0; green, 0; blue, 0 }  ][line width=0.75]    (10.93,-3.29) .. controls (6.95,-1.4) and (3.31,-0.3) .. (0,0) .. controls (3.31,0.3) and (6.95,1.4) .. (10.93,3.29)   ;
%Straight Lines [id:da13284956925206082] 
\draw    (395.11,417.82) -- (159.64,417.53) ;
\draw [shift={(157.64,417.53)}, rotate = 0.07] [color={rgb, 255:red, 0; green, 0; blue, 0 }  ][line width=0.75]    (10.93,-3.29) .. controls (6.95,-1.4) and (3.31,-0.3) .. (0,0) .. controls (3.31,0.3) and (6.95,1.4) .. (10.93,3.29)   ;
%Straight Lines [id:da929979068786173] 
\draw    (394.15,176.07) -- (327.59,140.93) ;
\draw [shift={(325.82,140)}, rotate = 27.83] [color={rgb, 255:red, 0; green, 0; blue, 0 }  ][line width=0.75]    (10.93,-3.29) .. controls (6.95,-1.4) and (3.31,-0.3) .. (0,0) .. controls (3.31,0.3) and (6.95,1.4) .. (10.93,3.29)   ;
%Straight Lines [id:da20885549384316815] 
\draw    (395.11,273.47) -- (327.59,237.93) ;
\draw [shift={(325.82,237)}, rotate = 27.76] [color={rgb, 255:red, 0; green, 0; blue, 0 }  ][line width=0.75]    (10.93,-3.29) .. controls (6.95,-1.4) and (3.31,-0.3) .. (0,0) .. controls (3.31,0.3) and (6.95,1.4) .. (10.93,3.29)   ;
%Straight Lines [id:da43906319246380154] 
\draw    (404.68,409.13) -- (317.57,360.97) ;
\draw [shift={(315.82,360)}, rotate = 28.94] [color={rgb, 255:red, 0; green, 0; blue, 0 }  ][line width=0.75]    (10.93,-3.29) .. controls (6.95,-1.4) and (3.31,-0.3) .. (0,0) .. controls (3.31,0.3) and (6.95,1.4) .. (10.93,3.29)   ;
%Straight Lines [id:da9850888405775782] 
\draw    (310.82,293) -- (309.85,241.74) ;
\draw [shift={(309.81,239.74)}, rotate = 88.91] [color={rgb, 255:red, 0; green, 0; blue, 0 }  ][line width=0.75]    (10.93,-3.29) .. controls (6.95,-1.4) and (3.31,-0.3) .. (0,0) .. controls (3.31,0.3) and (6.95,1.4) .. (10.93,3.29)   ;
%Straight Lines [id:da25494003803937704] 
\draw    (308.82,202) -- (307.86,145) ;
\draw [shift={(307.82,143)}, rotate = 89.03] [color={rgb, 255:red, 0; green, 0; blue, 0 }  ][line width=0.75]    (10.93,-3.29) .. controls (6.95,-1.4) and (3.31,-0.3) .. (0,0) .. controls (3.31,0.3) and (6.95,1.4) .. (10.93,3.29)   ;
%Straight Lines [id:da5207847165229053] 
\draw    (308.8,105.85) -- (307.9,62.64) ;
\draw [shift={(307.86,60.64)}, rotate = 88.81] [color={rgb, 255:red, 0; green, 0; blue, 0 }  ][line width=0.75]    (10.93,-3.29) .. controls (6.95,-1.4) and (3.31,-0.3) .. (0,0) .. controls (3.31,0.3) and (6.95,1.4) .. (10.93,3.29)   ;
%Curve Lines [id:da18948115287989653] 
\draw    (278.55,238.73) .. controls (277.3,292.45) and (242.38,315.33) .. (279.15,340.24) ;
\draw [shift={(280.29,340.99)}, rotate = 213.13] [color={rgb, 255:red, 0; green, 0; blue, 0 }  ][line width=0.75]    (10.93,-3.29) .. controls (6.95,-1.4) and (3.31,-0.3) .. (0,0) .. controls (3.31,0.3) and (6.95,1.4) .. (10.93,3.29)   ;

% Text Node
\draw (406.94,349.45) node [anchor=north west][inner sep=0.75pt]  [rotate=-268.92]  {$\delta ^{( N)}$};
% Text Node
\draw (421.09,352.75) node [anchor=north west][inner sep=0.75pt]  [rotate=-359.84]  {$0$};
% Text Node
\draw (399.8,273.26) node [anchor=north west][inner sep=0.75pt]  [rotate=-0.36]  {$\Omega _{\textnormal{AS}}^{p_{1} \otimes ...\otimes p_{N}}$};
% Text Node
\draw (418.95,110.98) node [anchor=north west][inner sep=0.75pt]  [rotate=-0.45]  {$0$};
% Text Node
\draw (404.69,257.3) node [anchor=north west][inner sep=0.75pt]  [rotate=-268.92]  {$\delta ^{( N)}$};
% Text Node
\draw (405.01,171.22) node [anchor=north west][inner sep=0.75pt]  [rotate=-268.92]  {$\delta ^{( N)}$};
% Text Node
\draw (101.55,350.32) node [anchor=north west][inner sep=0.75pt]  [rotate=-268.92]  {$\delta ^{( N)}$};
% Text Node
\draw (115.7,353.62) node [anchor=north west][inner sep=0.75pt]  [rotate=-359.84]  {$0$};
% Text Node
\draw (96.33,274.13) node [anchor=north west][inner sep=0.75pt]  [rotate=-0.36]  {$\Omega _{\textnormal{AS}}^{q_{1} \otimes ...\otimes p_{N}}$};
% Text Node
\draw (113.56,111.85) node [anchor=north west][inner sep=0.75pt]  [rotate=-0.45]  {$0$};
% Text Node
\draw (99.3,258.16) node [anchor=north west][inner sep=0.75pt]  [rotate=-268.92]  {$\delta ^{( N)}$};
% Text Node
\draw (99.62,172.09) node [anchor=north west][inner sep=0.75pt]  [rotate=-268.92]  {$\delta ^{( N)}$};
% Text Node
\draw (569.69,349.19) node [anchor=north west][inner sep=0.75pt]  [rotate=-268.92]  {$\delta ^{( N)}$};
% Text Node
\draw (583.84,352.75) node [anchor=north west][inner sep=0.75pt]  [rotate=-359.84]  {$0$};
% Text Node
\draw (581.7,110.98) node [anchor=north west][inner sep=0.75pt]  [rotate=-0.45]  {$0$};
% Text Node
\draw (567.43,257.3) node [anchor=north west][inner sep=0.75pt]  [rotate=-268.92]  {$\delta ^{( N)}$};
% Text Node
\draw (567.75,171.22) node [anchor=north west][inner sep=0.75pt]  [rotate=-268.92]  {$\delta ^{( N)}$};
% Text Node
\draw (492.52,178.04) node [anchor=north west][inner sep=0.75pt]    {$\star ^{( r)}$};
% Text Node
\draw (187.02,177.11) node [anchor=north west][inner sep=0.75pt]    {$\star ^{( 1)}$};
% Text Node
\draw (188.91,263.07) node [anchor=north west][inner sep=0.75pt]    {$\star ^{( 1)} |_{D-2}$};
% Text Node
\draw (480.5,262.74) node [anchor=north west][inner sep=0.75pt]    {$\star ^{( r)} |_{D-2}$};
% Text Node
\draw (405.68,403.49) node [anchor=north west][inner sep=0.75pt]    {$\mathbb{C}^{n_{p_{1} ,...,p_{N}}}$};
% Text Node
\draw (94.55,404.36) node [anchor=north west][inner sep=0.75pt]    {$\mathbb{C}^{n_{q_{1} ,...,p_{N}}}$};
% Text Node
\draw (504.9,394.31) node [anchor=north west][inner sep=0.75pt]    {$f_{r}$};
% Text Node
\draw (256.06,394.31) node [anchor=north west][inner sep=0.75pt]    {$f_{1}$};
% Text Node
\draw (375.27,185.42) node [anchor=north west][inner sep=0.75pt]  [rotate=-0.36]  {$\Omega _{\textnormal{AS}}^{p_{1} +1\otimes ...\otimes p_{N} +1}$};
% Text Node
\draw (69.89,186.29) node [anchor=north west][inner sep=0.75pt]  [rotate=-0.36]  {$\Omega _{\textnormal{AS}}^{q_{1} +1\otimes ...\otimes p_{N} +1}$};
% Text Node
\draw (341.05,130.28) node [anchor=north west][inner sep=0.75pt]    {$\star ^{( i)}$};
% Text Node
\draw (336.58,226.06) node [anchor=north west][inner sep=0.75pt]    {$\star ^{( i)} |_{D-2}$};
% Text Node
\draw (356.73,358.31) node [anchor=north west][inner sep=0.75pt]    {$f_{i}$};
% Text Node
\draw (290.55,281.37) node [anchor=north west][inner sep=0.75pt]  [rotate=-268.92]  {$\delta ^{( N)}$};
% Text Node
\draw (304.7,297.67) node [anchor=north west][inner sep=0.75pt]  [rotate=-359.84]  {$0$};
% Text Node
\draw (264.33,210.17) node [anchor=north west][inner sep=0.75pt]  [rotate=-0.36]  {$\Omega _{\textnormal{AS}}^{q_{1} \otimes ...\otimes q_{i} \otimes ...\otimes p_{N}}$};
% Text Node
\draw (302.56,42.9) node [anchor=north west][inner sep=0.75pt]  [rotate=-0.45]  {$0$};
% Text Node
\draw (288.3,189.21) node [anchor=north west][inner sep=0.75pt]  [rotate=-268.92]  {$\delta ^{( N)}$};
% Text Node
\draw (288.62,103.13) node [anchor=north west][inner sep=0.75pt]  [rotate=-268.92]  {$\delta ^{( N)}$};
% Text Node
\draw (283.55,335.4) node [anchor=north west][inner sep=0.75pt]    {$\mathbb{C}^{n_{q_{1} ,...,q_{i} ,...,p_{N}}}$};
% Text Node
\draw (191.38,90.04) node [anchor=north west][inner sep=0.75pt]  [rotate=-329.38]  {$...$};
% Text Node
\draw (493.8,82.66) node [anchor=north west][inner sep=0.75pt]  [rotate=-34.08]  {$...$};
% Text Node
\draw (238.33,112.17) node [anchor=north west][inner sep=0.75pt]  [rotate=-0.36]  {$\Omega _{\textnormal{AS}}^{q_{1} +1\otimes ...\otimes q_{i} +1\otimes ...\otimes p_{N} +1}$};
% Text Node
\draw (479,345.4) node [anchor=north west][inner sep=0.75pt]    {$\pi _{0}$};
% Text Node
\draw (677,342.4) node [anchor=north west][inner sep=0.75pt]    {$\pi _{r}$};
% Text Node
\draw (38,345.4) node [anchor=north west][inner sep=0.75pt]    {$\pi _{1}$};
% Text Node
\draw (237,306.4) node [anchor=north west][inner sep=0.75pt]    {$\pi _{i}$};
% Text Node
\draw (533.33,185.17) node [anchor=north west][inner sep=0.75pt]  [rotate=-0.36]  {$\Omega _{\textnormal{AS}}^{q_{1} +1\otimes ...\otimes q_{r} +1\otimes ...\otimes p_{N} +1}$};
% Text Node
\draw (544.33,272.17) node [anchor=north west][inner sep=0.75pt]  [rotate=-0.36]  {$\Omega _{\textnormal{AS}}^{q_{1} \otimes ...\otimes q_{r} \otimes ...\otimes p_{N}}$};
% Text Node
\draw (567.55,403.4) node [anchor=north west][inner sep=0.75pt]    {$\mathbb{C}^{n_{q_{1} ,...,q_{r} ,...,p_{N}}}$};
\end{tikzpicture}
\end{equation}
Moreover, every $f_i \in \{f_1,...,f_r\}$ admits a unique restriction to a 1-dimensional subspace such that 
\begin{equation}
    f_i|_{Q_i}: Q_0 \mapsto Q_i, \qquad f_i^{-1}|_{Q_i}: Q_i \mapsto Q_0. 
\end{equation}    
\end{Theorem}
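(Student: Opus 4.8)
The plan is to transcribe the algebraic-topology proof of Theorem \ref{THM3.1} into the $N$-multi-form language, with the single de Rham complex replaced by the $(q_1-q_2,\dots,q_{N-1}-q_N)$-augmented $N$-de Rham-like complexes and the Hodge operator replaced by the family $\star^{(i)}$. First I would reduce each column of the diagram to a short exact sequence. By Theorem \ref{con4.3.1} the de Rham-like cohomology groups coincide with the ordinary de Rham cohomology, so on a contractible (Minkowski-type) background every $H^{p_1,\dots,p_N}$ with $(p_1,\dots,p_N)\neq(0,\dots,0)$ vanishes. The vanishing of $H^{q_1,\dots,q_N}$ and $H^{q_1+1,\dots,q_N+1}$ (together with the corresponding groups for each dual tableau) then lets me repeat verbatim the argument around \eqref{shortexactAS}: a field strength $H=\delta^{(N)}T$ is non-trivial precisely when $T$ is not itself $\delta^{(N)}$-exact, and only $T=0$ can have vanishing field strength. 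Each column therefore collapses to the short exact sequence \eqref{shortexactAS}, and $\delta^{(N)}$ restricts to an isomorphism $\Omega_{\mathrm{AS}}^{p_1\otimes\cdots\otimes p_N}\xrightarrow{\sim}\Omega_{\mathrm{AS}}^{p_1+1\otimes\cdots\otimes p_N+1}$ for the original description and for all $r$ duals.

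Next I would treat the horizontal arrows. Each $\star^{(i)}$ acts as an ordinary Hodge star on the $i$-th form factor, hence is a linear bijection, and since $\binom{D}{p}=\binom{D}{D-p}$ the source and target have equal dimension. The hypothesis $p_k\leq[\tfrac{D-2}{2}]$ guarantees $q_k=D-2-p_k\geq p_k$, so each dualized Young tableau $\lambda_i$ is admissible and $\star^{(i)}$ carries the principal $\lambda_0$-subspace onto the principal $\lambda_i$-subspace. The point requiring genuine work, and the step I expect to be the main obstacle, is that $\star^{(i)}$ does not manifestly commute with the Young projector $\Pi_\lambda$: unlike the pure $p$-form case of Theorem \ref{THM3.1}, here one must verify that on the \emph{projected} asymptotic-symmetry subspaces the Hodge map intertwines the two de Rham-like differentials, i.e. that $\Pi_{\lambda_i}\circ\star^{(i)}\circ\delta^{(N)}=\delta^{(N)}\circ\Pi_{\lambda_i}\circ\star^{(i)}$ up to the coefficient bookkeeping of the star. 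I would establish this in a coordinate chart, using that the metric together with $\star^{(i)}$ intertwines the $\mathrm{GL}(D)$-irreducible content carried by conjugate Young tableaux, so that the upper squares of the diagram commute and $\star^{(i)}|_{D-2}$ is an isomorphism of the field-strength spaces $\Omega_{\mathrm{AS}}^{\bullet+1}$.

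With the horizontal and vertical isomorphisms in hand, I would construct the projections $\pi_i\colon\Omega_{\mathrm{AS}}^{q_1\otimes\cdots\otimes q_i\otimes\cdots\otimes p_N}\to\mathbb{C}^{n_{q_1,\dots,q_i,\dots,p_N}}$ exactly as in Theorem \ref{THM3.1}: send a gauge field to the vector whose first entry is its well-defined asymptotic charge $Q_i$ and whose remaining entries are the independent components of the representing mixed symmetry tensor in the chart. Because $\delta^{(N)}|_{\mathrm{AS}}$ is an isomorphism, the charge is a non-vanishing linear functional on $\Omega_{\mathrm{AS}}^{\bullet}$ and the counting of independent components matches across dual descriptions (the Hodge bijection preserving it), so each $\pi_i$ is an isomorphism of finite-dimensional vector spaces, precisely as in the $p$-form proof.

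Finally I would assemble the maps. Defining each $f_i$ as the composite running around the corresponding face of the diagram (the path joining $\mathbb{C}^{n_{p_1,\dots,p_N}}$ to $\mathbb{C}^{n_{q_1,\dots,q_i,\dots,p_N}}$ through $\pi_0^{-1}$, the inverse of $\delta^{(N)}|_{\mathrm{AS}}$, the isomorphism $\star^{(i)}|_{D-2}$, the differential, and $\pi_i$) produces an element $f_i\in\mathrm{GL}(n_{q_1,\dots,q_i,\dots,p_N},\mathbb{C})$, and the commutativity of the full diagram reduces to the commutativity of its constituent squares verified above. For the uniqueness of the $1$-dimensional restriction, I would argue as in Theorem \ref{THM3.1}: the charge occupies a single distinguished coordinate of each $\mathbb{C}^{n_{\dots}}$, so $f_i$ maps the charge line of the source to some $1$-dimensional subspace of the target; imposing that this image be the charge line $\langle Q_i\rangle$ pins down the restriction $f_i|_{Q_i}\colon Q_0\mapsto Q_i$ uniquely, with $f_i^{-1}|_{Q_i}\colon Q_i\mapsto Q_0$ its inverse. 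This completes the plan.
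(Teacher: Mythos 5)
Your proposal follows essentially the same route as the paper's proof: collapse each column to the short exact sequence \eqref{shortexactAS} so that $\delta^{(N)}$ restricts to an isomorphism, realize the horizontal map on gauge fields as $\star^{(i)}|_{D-2}=(\delta^{(N)})^{-1}\circ\star^{(i)}\circ\delta^{(N)}$, build each $\pi_i$ as the charge-plus-independent-components bijection, and set $f_i=\pi_i\circ\star^{(i)}|_{D-2}\circ\pi_0^{-1}$ with the charge line singled out by the first coordinate. The only substantive difference is that you explicitly flag, and sketch how to close, the compatibility of $\star^{(i)}$ with the Young projector $\Pi_\lambda$ — a point the paper's proof passes over in silence — which is a refinement of the same argument rather than a divergence from it.
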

\begin{proof}
    First of all, from the exact short sequences we get that $\delta^{(N)}$ is an isomorphism from the space of gauge fields and the space of their field strengths. Since for every $i \in [1,r]$, $\star^{(i)}$ is an isomorphism we can define 
    \begin{equation}
        \star|_{D-2}^{(i)}:=(\delta^{(N)})^{-1} \circ \star^{(i)} \circ \delta^{(N)},
    \end{equation}
which sends $\Omega _{\textnormal{AS}}^{p_{1} \otimes ...\otimes p_{N}}$ in a subspace of $\Omega _{\textnormal{AS}}^{q_{1}\otimes ...\otimes q_{i}\otimes ...\otimes p_{N}}$ of dimension $n_{p_1,...,p_N}$ since it is an isomorphism. Let us construct the $\pi_i$ maps as 
\begin{equation}
    v^{(i)}:=\begin{cases}
        \pi_i(T_i)=0 \ \ \ \textnormal{if} \  T_i\equiv 0;\\
        \pi_i(T_i)=Q^{(T_i)}_{i}e^{(i)}_1+\sum_{k=2}^{n_{p_1,...,p_N}} b^{(T_i)}_ke^{(i)}_k \ \ \ \textnormal{otherwise},
    \end{cases}
\end{equation}
where $T_i \in \Omega _{\textnormal{AS}}^{q_{1}\otimes ...\otimes q_{i}\otimes ...\otimes p_{N}}$, $e^{(i)}_1,...,e^{(i)}_{n_{p_1,...,p_N}}$ is an orthonormal base of $\mathbb{C}^{n_{p_1,...,p_N}}$ and $b^{(T_i)}_k\in \mathbb{R}$ are some of the independent entries of the in coordinate representation of the form chosen in such a way that different mixed symmetry tensors have a different string objects. These maps are all linear bijections; therefore the $f_i$ maps are well defined as $f_i:=\pi_i \circ \star^{(i)}_{D-2} \circ \pi_0^{-1}$. Hence, $f_i \in Aut(\mathbb{C}^{n_{p_1,...,p_N}})=\mathrm{GL}(\mathbb{C}^{n_{p_1,...,p_N}}) \cong \mathrm{GL}(n_{p_1,...,p_N},\mathbb{C})$. By a general theorem of linear algebra, every $f_i$ can be chosen to be lower triangular; calling $\eta_i \in \mathbb{C} \setminus \{0\}$ the top left element of $f_i$, we have
\begin{equation}
\eta_i=\frac{Q^{(T_i)}_{i}}{Q^{(T_0)}_{0}};
    \label{etai}
\end{equation}
written in other way 
\begin{equation}
Q^{(T_i)}_{i}=f_i|_{Q_i}(Q^{(T_0)}_{0})
\end{equation}
where $f_i|_{Q_i}(\bullet)=\eta_i \bullet$. 
\end{proof}

\section{Physical applications}\label{phy}
Here we discuss some physical applications of the existence and uniqueness of the duality maps and of the isomorphism between de Rham and de Rham-like cohomology groups. 
\subsection{Mixed symmetry memory effects}
Memory effects \cite{Strominger2018,Godazgar2022,Kumar2022,Mitman2024,Pasterski_2016,Comp_re_2020} are a class of observable phenomena that characterize the passage of radiation that interferes with a test charge and whose effect persists even when the radiation is extinguished. For example, a pair of test masses may undergo a non-vanishing relative displacement after the passage of gravitational radiation or a small electric charge initially at rest, may display a non-vanishing velocity after it is hit by electromagnetic radiation. These effects have also been proposed in Yang-Mills theories and in the context of the interaction of a 2-form with a test string.

Let \( T \) be a mixed symmetry tensor, and
\( H = \delta^{(N)} T \) its gauge–invariant field strength. We assume linear source–free equations of motion
\begin{equation}\label{eq:eom}
\bigl(\delta^{(N)}\bigr)^{\dagger} H = 0 , \qquad \delta^{(N)} H = 0 \, .
\end{equation}
The first one is the analogue of the equation on motion while the second one is the Bianchi identity ; for example it reduces  to the Maxwell equation without sources if we are considering a $p=1$ form. We can impose a radiative fall–off of the components with one \( u \) index defining  an object analogue to the Bondi news tensor. This mixed news is a smooth angular tensor \( N(u,x) \) with the same Young symmetry of the field strength $H$. The natural choice for the radiative component of the field strength $H_{\text{rad}}$ is a radiation fall-off of the kind $1/r^{\frac{D-2}{2}}$; in Bondi coordinates and using an orthonormal basis in the sphere we have
\begin{equation}\label{eq:newsfalloff}
H_{\text{rad}}(u,r,x)
= \frac{1}{r^{\frac{D-2}{2}}}\, N(u,x)
+ \mathcal{O}\!\left(r^{-\frac{D}{2}}\right).
\end{equation}

We define the electric–like charge at retarded time \( u \) as the pairing between \( \varepsilon \) and \( H \) on a large radius \( S^{D-2} \)
\begin{equation}\label{eq:charge}
Q(u) := \int_{S^{D-2}_{u}} \langle \varepsilon(x) \,,\, H_{\text{rad}}(r,u,x) \rangle,
\end{equation}
where \( \varepsilon \) is an appropriate combination of asymptotic symmetry parameters\footnote{For a mixed symmetry  tensor there can be more than one gauge parameter.} on the sphere, such that its leading order term satisfy \( \partial_{u}\varepsilon=0 \) and  \( \langle \cdot,\cdot \rangle \) is the natural contraction induced by the metric on \( S^{D-2} \), the Levi-Civita tensor and the Young structure of the mixed field. Using Stokes’ theorem in a null tube between \( u=u_i \) and \( u=u_f \) yields the balance law
\begin{equation}\label{eq:balance}
\Delta Q = Q(u_f)-Q(u_i)
= \int_{\mathcal{I}^{+} \times [u_i,u_f]} \langle \varepsilon(x) \,,\,  \partial_{u} H_{\text{rad}}(r,u,x)\rangle\, \mathrm{d}u\, \mathrm{d}\Omega_{D-2} \, .
\end{equation}
Using the ansatz \eqref{eq:newsfalloff} and integrating by parts in $u$, we finds directly
\begin{equation}
\int_{u_i}^{u_f} \langle \varepsilon(x), \partial_u N(u,x) \rangle \, du
= \Big[ \langle \varepsilon(x), N(u,x) \rangle \Big]_{u_i}^{u_f}
- \int_{u_i}^{u_f} \langle \partial_u \varepsilon(x), N(u,x) \rangle \, du;
\end{equation}
by assumption \(\partial_u \varepsilon(x) = 0\) so the second term vanishes, 
\begin{equation}
\int_{u_i}^{u_f} \langle \varepsilon(x), \partial_u N(u,x) \rangle \, du
= \langle \varepsilon(x), N(u_f,x) \rangle
- \langle \varepsilon(x), N(u_i,x) \rangle \, .
\end{equation}
Therefore, in the end we get
\begin{equation}\label{eq:balanceNews}
\Delta Q
= \int_{S^{D-2}} \left\langle \varepsilon(x) \,,\, \int_{u_i}^{u_f} N(u,x)\,\mathrm{d}u \right\rangle \mathrm{d}\Omega_{D-2} \, .
\end{equation}
When the news has support in a finite interval of retarded time (a burst), we generally have \( \Delta Q\neq 0 \).  
This can be considered as the mixed memory effect at the level of the charge. In Theorem \ref{THM4.3.1}, when the relevant de Rham–like cohomologies vanish and the short exact sequence properties hold, the operator \( \delta^{(N)} \) induces an isomorphism between the space of asymptotic potentials and field strengths. The duality map \( f \) then carries \eqref{eq:balanceNews} into the corresponding identity in the dual channel. For example:
\begin{equation}\label{eq:duality}
\Delta Q_0
= Q_0(u_f)-Q_0(u_i)
\;\;\Longleftrightarrow\;\;
\Delta Q_j
=  Q_j(u_f)-Q_j(u_i)=f_j(Q_j(u_f))-f_j(Q_j(u_i)).
\end{equation}
The charge of the $j$-th description will be expressed in term of the its radiative data and asymptotic parameters. In other words, the existence and uniqueness of duality maps implies that the memory measured in one channel uniquely determines the memory in the dual channel. This could open the door to an experimental verification of extra dimensions: by measuring the gravitational memory effect, one could deduce the existence, based on the number of dimensions, of a dual memory effect which, if measured in turn, would indicate an amount of extra dimensions. For example, if we lived in $D=5$ dimensions, then the graviton would have dual memory effects for a Curtright and a Riemann-like field. Measuring the dual memory effects would confirm $D=5$ dimensions.
\subsection{Higher symmetries and defects}
Higher symmetries \cite{Luo2023,Barkeshli2023,ArmasJain2024,Yuan2023,SchaferNameki2023,AxionMaxwell2024} are a new frontier of theoretical physics. Here, the idea is that, in the presence of extended defects, the asymptotic charges of fields with mixed symmetry are classified by de Rham-like cohomology groups which are isomorphic to the ordinary de Rham cohomology of the "punctured" space. This provides a natural quantization of charges. 

Let \(M = \mathbb{R}^D \setminus W
\)
where \( W \subset \mathbb{R}^D \) is a smooth closed submanifold of dimension 
\(\dim (W) = D-k\), hence of codimension \(k \geq 2\).  By deformation, any tubular neighborhood of \( W \) in \( M \) retracts onto the \( S^{k-1} \)-bundle over \( W \). For local computations in a punctured normal slice, we can use the model
\begin{equation}
M \cong_{\text{loc}} \big(\mathbb{R}^{D-k} \times (\mathbb{R}^k \setminus \{0\})\big) \cong \mathbb{R}^{D-k} \times S^{k-1} \times \mathbb{R}_+ .
\end{equation}
We have
\begin{equation}
H^m_{\mathrm{dR}}(M) \cong_{\text{loc}} H^m_{\mathrm{dR}}(S^{k-1}) ,
\end{equation}
and therefore
\begin{equation}
H^m_{\mathrm{dR}}(M) \cong_{\text{loc}} 
\begin{cases}
\mathbb{R}, & m=0 \ \text{or} \ m=k-1 , \\
0, & \text{otherwise} .
\end{cases}
\end{equation}
The generator in degree \(k-1\) is the normalized volume form \(\omega_{S^{k-1}}\) on the spheres surrounding \( W \). From Theorem \ref{con4.3.1} we have a natural isomorphism
\begin{equation}\label{eqcom}
H^{(\cdot,...)}_{\mathrm{dR-like}}(M) \cong H^{(\cdot)}_{\mathrm{dR}}(M) ,
\end{equation}
so that the topological classes counting the mixed charges coincide with the usual ones in the appropriate degree. 

Now, let $T$ a mixed symmetry  gauge field and consider the gauge-invariant field strength $H = \delta^{(N)} T $. 
In the absence of sources we have \(\delta^{(N)} H=0\); with a defect \( W \) carrying integer charge \( n\in \mathbb{Z}\), the Bianchi identity deforms to
\begin{equation}
\delta^{(N)} H = 2\pi n \, \Delta_W ,
\end{equation}
where \(\Delta_W\) is the Poincaré current of \( W \), Young-projected consistently with the symmetry of the mixed symmetry field. This is the analogue of the most elementary case in $\mathbb{R}^3$ with a pointlike magnetic monopole at the origin. The Bianchi identity acquire a non homogeneous term of the form $2\pi n \, \delta^{(3)}(x) $ with
the delta has support only at the origin. Therefore $H$ is no longer exact globally. Furthermore, since $M$ has no trivial de Rham cohomolgy in degree $k-1$ we can consider
\begin{equation}
\frac{1}{2\pi}\int_{S^{k-1}} [H] = n \in \mathbb{Z}.
\end{equation}
Where $[H]$ is the cohomology class associated to the mixed symmetry tensor field strength which is in fact a de Rham cohomology class due \eqref{eqcom}. This quantity depends only on the cohomology class in \(H^{k-1}_{\mathrm{dR-like}}(M) \cong \mathbb{R}\) and is therefore topologically protected. This point in the direction of mixed symmetry tensor symmetries where the charged object are extended operators that can be considered much similar to those charged under higher form symmetries.

\subsection{Holography}
In AdS/CFT \cite{Natsuume:2014sfa,Maldacena_1999,Skenderis:2002wp,Aharony:1999ti,Witten:1998qj,Gubser:1998bc}, bulk asymptotic charges correspond to global symmetries and conserved currents of the dual boundary CFT. The generalization to mixed symmetry tensors implies that new classes of conserved quantities arise, associated with boundary operators supported on submanifolds of various codimensions.

Consider $\mathrm{AdS}_{D+1}$ with Poincar\'e patch metric
\begin{equation}
  ds^{2} \;=\; \frac{L^{2}}{z^{2}}\bigl(dz^{2} + \eta_{\mu\nu}\,dx^{\mu}dx^{\nu}\bigr) \, , 
  \qquad \mu=0,\ldots,D-1 \, , \quad z>0 \, .
\end{equation}
Let $T$ be a mixed symmetry tensor in the bulk and let $H= \delta^{(N)} T$ be its field strength. Following the standard lore of the holographic dictionary, near the boundary $z\to 0$, solutions of the equation of motion can be expanded as
\begin{equation}
  T(z,x) \;\sim\; z^{\Delta_{-}}\, t^{(0)}(x) \;+\; z^{\Delta_{+}}\, t^{(1)}(x) \;+\; \cdots \, ,
\end{equation}
with $\Delta_{\pm}$ appropriate exponents which could rise from a generalized mass dimension relation depending on the Young tableau of $T$. The leading term $t^{(0)}$ sources a boundary operator, while $t^{(1)}$ is related to its vacuum expectation value.

The holographic dictionary identifies
\begin{equation}
  Z_{\text{bulk}}\bigl[t^{(0)}\bigr]
  \;=\; \Bigl\langle \exp\!\Bigl(\int_{\partial \mathrm{AdS}} \!\!\langle t^{(0)}, \, O \rangle \Bigr) \Bigr\rangle_{\!\text{CFT}},
\end{equation}
hence the bulk partition function with boundary data $t^{(0)}$ coincides with the generating functional of CFT correlators in the presence of a source $t^{(0)}$ coupled to the operator $O$. Here $\langle t^{(0)}, O \rangle$ denotes the natural pairing between indices consistent with the Young symmetries of the field. Gauge invariance in the bulk, expressed by a differential operators acting on gauge parameters, induces invariance of the generating functional. This invariance is expressed by a boundaru gauge variation of the boundary data. As a result, the CFT operator satisfies the Ward identities implemented by the boundary versions of the differential operators. The asymptotic charge in the bulk is
\begin{equation}
  Q\;=\; \int_{\partial \mathrm{AdS}} \!\langle \varepsilon (x), H(z,x) \rangle \, ,
\end{equation}
with $\varepsilon$ a combination of asymptotic gauge parameters. Its variation under bulk dynamics yields
\begin{equation}
  \delta Q \;=\; \int_{\partial \mathrm{AdS}} \!\langle \varepsilon , \, \delta_{\partial} O \rangle \, .
\end{equation}
where the operator $\delta_\partial$ contains all the boundary operators inherited from the differential operators implementing the gauge transformations. 
By the Ward identities, $\delta Q_{\varepsilon}=0$ and thus, the bulk conservation of $Q_{\varepsilon}$ is holographically equivalent to the CFT Ward identity. This is true if the theory is anomalous-free and there are no boundary or contact terms that break the symmetry.

Let us now focus on $D=5$, where there exist a duality between the graviton and the Curtright field. In a $D=5$ AdS bulk (dual to a $4d$ CFT), the graviton $h_{\mu\nu}$ is holographically dual to the stress tensor $T_{\mu\nu}$, while the Curtright field $C_{\mu\nu\mid\rho}$ is dual to a mixed symmetry operator $O_{\mu\nu\mid\rho}$ with Young tableau $(2,1)$. Since the universal covering of AdS$_5$, which is the physically relevant space free from closed time-like curves, is topologically $\mathbb{R}^5$, theorems \ref{THM4.3.1} holds and implies a unique map
\begin{equation}
  Q_{h} \;\Longleftrightarrow\; Q_{C} \, .
\end{equation}
Holographically speaking, this translates into the relation of two Ward identities
\begin{equation}
  \partial_{\mu} T^{\mu}{}_{\nu} \;=\; 0 
  \;\Longleftrightarrow\; 
  \delta_{\partial}^{(2)} O_{\mu\nu\mid\rho} \;=\; 0 \, .
\end{equation}
Thus, the duality in the bulk enforces non-trivial constraints among distinct operator sectors of the boundary CFT, coupling the stress tensor with exotic mixed-symmetry operators. In general this relations may implies that certain operator algebras in the CFT must close into each other, thereby reducing the freedom in bootstrap equations.
\subsection{Condensed matter and fractons}
Fracton phases of matter \cite{PretkoChenYou2020,NandkishoreHermele2019,Gromov2019,Prem2019,Shirley2018,Vijay2016} are characterized by higher-rank conservation laws that restrict the mobility of excitations. In continuum dual descriptions, these conservation laws are naturally written in terms of symmetric or mixed-symmetry tensor gauge fields. The generalized de Rham-like formalism provides a clean cohomological interpretation of such conservation laws.

A prototypical example is the scalar-charge rank-2 $U(1)$ theory, with symmetric tensor gauge potential $A_{ij}$ and Gauss law
\begin{equation}
\partial_i \partial_j E^{ij} = \rho \, .
\label{eq:gauss}
\end{equation}
Here $E^{ij}$ is the generalized electric field, symmetric in $i,j$. In absence of sources we have
\begin{equation}
\partial_i \partial_j E^{ij} = 0 \, .
\label{eq:constraint}
\end{equation}
This implies not only charge conservation
\begin{equation}
\frac{d}{dt} \int d^3x \, \rho(x) = 0 \, ,
\label{eq:charge_cons}
\end{equation}
but also dipole conservation
\begin{equation}
\frac{d}{dt} \int d^3x \, x^k \rho(x) = 0 \, .
\label{eq:dipole_cons}
\end{equation}
The constraint \eqref{eq:constraint} is exactly of the form
\begin{equation}
\delta^{(2)} T = 0 \, , \qquad T \in \Omega^{1 \otimes 1} \, ,
\end{equation}
where $T$ plays the role of the gauge field (a bi-form, projected to the symmetric sector). Thus, the immobility of isolated fractons is rephrased as the closure condition in the de Rham-like complex.

More exotic fracton models (e.i. with vector charges or multipole conservation beyond dipole) could be described by higher mixed-symmetry tensors
\begin{equation}
T \in \Omega^{q_1} \otimes \cdots \otimes \Omega^{q_N} \,
\end{equation}
properly Young projected and the closure condition
\begin{equation}
\delta^{(N)} T = 0
\end{equation}
implies conservation of higher multipole moments. This cohomological rewriting could lead to a purely cohomological formulation of the allowed ground-state degeneracy of a given fracton model.
\subsection{String theory}
In the tensionless limit of string theory, an infinite tower of massless fields emerges, including not only 
$p$-form gauge fields (Kalb--Ramond, RR forms) but also mixed-symmetry tensors \cite{Grana2005,Douglas2007,VanRietZoccarato2023,Shukla2021,SagnottiTsulaia2003,LindstromZabzine2003}. Their asymptotic charges can be naturally classified by the generalized de Rham-like cohomology introduced in this work.

Consider type II string theory compactified on a torus $T^d$. A $p$-brane wrapping a cycle 
$\Sigma_p \subset T^d$ carries a charge
\begin{equation}
Q^{(p)} = \int_{\Sigma_p} F_{p+1} \,,
\end{equation}
where $F_{p+1} = dC_p$ is the RR flux. The integrality condition
\begin{equation}
Q^{(p)} \in 2\pi \mathbb{Z}
\end{equation}
comes from the fact that $[F_{p+1}] \in H^{p+1}(T^d,\mathbb{Z})$. For a mixed-symmetry tensor $T$ the field strength
\begin{equation}
H = \delta^{(N)} T
\end{equation}
defines a generalized flux. $H$ defines a de Rham-like cohomology class which by Theorem \ref{con4.3.1} is also a de Rham cohomology class. Therefore could exist also mixed symmetry tensor flux in string theory compactification and they can be classified in the usual way.

In type II theories, T-duality acts by reshuffling cohomology classes on $T^d$. From the above theorems, it follows that:
\begin{itemize}
  \item brane wrapping numbers (topological charges) in one channel map uniquely into mixed-symmetry charges in the dual channel;
  \item the spectrum of allowed charges is determined entirely by $H^\bullet_{\mathrm{dR}}(T^d,\mathbb{Z})$, independent of the field representation.
\end{itemize}

For instance, on $T^2$ a string wrapping along $y_1$ (an element of $H_{\text{dR}}^1$) can dualize into a mixed bi-form flux (an element of $H^{(2,2)}_{\text{dR}-\text{like} }\cong H^2_{\text{dR}}$), in agreement with T-duality predictions.

The classification of mixed fluxes provides a systematic way to include exotic brane states predicted in the tensionless limit. The cohomological framework prevents overcounting: all exotic charges reduce to integer classes in the ordinary cohomology of the compactification manifold.

\section{Conclusions}
In this work, we discussed the asymptotic symmetries and the Young machinery duality in the realm of mixed symmetry tensor gauge theories where the gauge field is a mixed symmetry tensor $T$. We discuss how to extend Theorem \ref{THM3.1} to mixed symmetry tensors case. In Paragraph \ref{derhamcomgen}, we discuss and develop a generalization of the de Rham complex to the case of mixed symmetry tensors. This leads to the Definition \ref{Naugcomp} of $(k_1,...,k_{N-1})$-augmented $N$-de Rham-like complex and to Definition \ref{coholike} of de Rham-like cohomology groups which characterize the topology of the differential manifold under consideration thanks to Theorem \ref{con4.3.1}.  These abstract mathematical tools are employed, in Paragraph \ref{exthm}, to prove Theorem \ref{THM4.3.1} on the existence and uniqueness of a set of duality maps for well defined charges. This theorem is the direct generalization of Theorem \ref{THM3.1} and, once we have at our disposal the $(k_1,...,k_{N-1})$-augmented $N$-de Rham-like complexes, can be prove using the same ideas of Theorem \ref{THM3.1}. The physical meaning is that given a description and its asymptotic charge, this charge has access to physical information related to asymptotic charges of the dual formulations. This means that the symmetries of a gauge theory are intimately related to the symmetries of the dual formulations, and under suitable topological conditions there exists a unique way to associate the charges of these symmetries with each other. Therefore, to understand the physics and the physical information associated with a particular gauge theory it is essential to know the mathematical properties of the space on which the theory is built and not only the content in fields and the properties of those fields. The property of being able to uniquely map the asymptotic charges of the dual descriptions of a specific formulation of a mixed symmetry tensor gauge theory could allow a mathematical classification of the spaces on which these theories are formulated. In this sense, it would be essential to construct particular cohomology classes that, when vanishing, make possible the existence and uniqueness of the duality map and, possibly, a vice versa.\\
Furthermore we provide some physical applications that can shed new light in various fields of physics. We discuss mixed symmetry memory effects, which naturally emerge at the intersection of geometry, field theory and quantum gravity. Thanks to the duality theorem, these effects are not just formal artifacts but acquire a striking physical relevance: they open the intriguing possibility of probing extra dimensions through the very act of measuring memory effects in their dual descriptions. This idea resonates with the long-standing dream of unveiling hidden structures of spacetime by means of subtle, yet robust, physical imprints.
The applications to higher symmetries and string theory become particularly compelling once one recalls the isomorphism between de Rham-like cohomology and the traditional de Rham cohomology. In essence, extended defects do not merely couple to conventional fields but can also source mixed symmetry fields. Remarkably, this spectrum is could be fully dictated by the de Rham cohomology of the compactification manifold, encoding in purely topological terms the algebra of possible excitations and their consistency.
Furthermore, the closure property of the de Rham-like complex seems to provide the natural framework for a cohomological reinterpretation of fracton dynamics. Within this setting, the immobility and restricted motion of fractons could emerge as a reflection of hidden cohomological structures.
Finally, the holographic perspective elevates the discussion to an even broader stage. In the dual conformal field theory, the operator algebras must close upon themselves, enforcing nontrivial consistency conditions. This closure principle could reduces the arbitrariness in the bootstrap equations, carving out a more rigid and predictive landscape of possible CFTs. In this way, memory effects, higher symmetries, and holography are interlaced into a unified narrative: they point towards a profound cohomological skeleton underlying quantum field theories and string dynamics, a skeleton that may ultimately guide us toward the architecture of spacetime itself. All these physical applications can be explored separately and lead to a deeper understanding of these fields of physics.
\appendix 

\section{Basic elements of shaves theory}\label{appfasci}
We give a brief review of shaves theory and their cohomology with the aim to proof the abstract de Rham theorem. In the following we consider X as a Hausdorff topological space where every open set is paracompact. An example are all the metrizable spaces.
\subsection{Sheaves}
Intuitively, a sheaf is a tool for systematically tracking data (such as sets, abelian groups, rings) attached to the open sets of a topological space and defined locally with regard to them. More formally
\begin{Definition}[Sheaf]
A sheaf of abelian groups $\mathcal{F}$ on X is the datum of an abelian group $\mathcal{F}(U)$ for every open set $U \subset X$ and a group homomorphism (called restrictions) $\rho_{UV}: \mathcal{F}(U) \rightarrow \mathcal{F}(V)$ for every inclusion $V \subset U$. The couple $(U,s)$ with $U$ open set of $X$ and $s \in \mathcal{F}(U)$ is called a section. The datum has to satisfy the following conditions:
\begin{itemize}
    \item $\mathcal{F}(\emptyset)=\emptyset$;
    \item $\rho_{UU}: \mathcal{F}(U) \rightarrow \mathcal{F}(U)$ is the identity for every $U$;
    \item if $W \subset V \subset U$ then $\rho_{UW}=\rho_{UV} \circ \rho_{VW}$;
    \item if $U=\bigcup_i U_i$ and $s \in \mathcal{F}(U)$ than $s=0$ if and only if $\rho_{UU_i}(s)=0 \  \forall U_i$; 
    \item given, for every $i$, a section $s_i \in \mathcal{F}(U_i)$ such that $\rho_{U_iU_i \cap U_j}(s_i)=\rho_{U_jU_i \cap U_j}(s_j)$ than there exist a section $s \in \mathcal{F}(U)$ such that $\rho_{UU_i}(s)=s_i \ \forall i$.
\end{itemize}
\end{Definition}
Examples of sheaves are the sheaf of locally constant functions on a field $\mathbb{K}$, denoted by $\mathbb{K}_X$, and the sheaf of discontinues sections of a sheaf $\mathcal{F}$, denoted by $\mathcal{DF}$ and defined by $\mathcal{DF}(U):=\prod_{x \in U} \mathcal{F}_x$.
\begin{Definition}[Germs and stalks of a sheaf]
    Given two sections $(U,s)$ and $(V,t)$ they are equivalent if there exist an open $W\subset U \cap V$ such that $\rho_{UW}(s)=\rho_{VW}(t)$. The equivalence class $[U,s]$ is called a germ and the abelian group $\mathcal{F}_x$ on the set of germs around the point $x \in X$ is called a stalk.
\end{Definition}

\begin{Definition}[Shaves morphism and its support]
    A  shave morphism of abelian groups $f:\mathcal{F} \rightarrow \mathcal{F}$ is a family of group homomorphisms $f_U : \mathcal{F}(U) \rightarrow \mathcal{F}(U)$ for every open $U \subset X$ which commutes with the restrictions. A  shave morphism naturally induces a morphism on the stalks $f_x : \mathcal{F}_x \rightarrow \mathcal{F}_x $ and the support of a shave morphism is given by 
    \begin{equation}
        Supp(f)=\{x \in X \ | \  f_x(x) \neq 0\}
    \end{equation}
\end{Definition}

\begin{Definition}[Exact sequence of shaves]
    A sequence of shaves on $X$
\begin{equation}
    \tikzset{every picture/.style={line width=0.75pt}} %set default line width to 0.75pt        
\begin{tikzpicture}[x=0.75pt,y=0.75pt,yscale=-1,xscale=1]
%uncomment if require: \path (0,699); %set diagram left start at 0, and has height of 699

%Straight Lines [id:da022187689160346746] 
\draw    (80,191) -- (138.73,190.84) ;
\draw [shift={(140.73,190.83)}, rotate = 179.84] [color={rgb, 255:red, 0; green, 0; blue, 0 }  ][line width=0.75]    (10.93,-3.29) .. controls (6.95,-1.4) and (3.31,-0.3) .. (0,0) .. controls (3.31,0.3) and (6.95,1.4) .. (10.93,3.29)   ;
%Straight Lines [id:da6142137785302545] 
\draw    (260,191) -- (318.73,190.84) ;
\draw [shift={(320.73,190.83)}, rotate = 179.84] [color={rgb, 255:red, 0; green, 0; blue, 0 }  ][line width=0.75]    (10.93,-3.29) .. controls (6.95,-1.4) and (3.31,-0.3) .. (0,0) .. controls (3.31,0.3) and (6.95,1.4) .. (10.93,3.29)   ;
%Straight Lines [id:da8277820837232078] 
\draw    (170,191) -- (228.73,190.84) ;
\draw [shift={(230.73,190.83)}, rotate = 179.84] [color={rgb, 255:red, 0; green, 0; blue, 0 }  ][line width=0.75]    (10.93,-3.29) .. controls (6.95,-1.4) and (3.31,-0.3) .. (0,0) .. controls (3.31,0.3) and (6.95,1.4) .. (10.93,3.29)   ;

% Text Node
\draw (147,181.4) node [anchor=north west][inner sep=0.75pt]    {$\mathcal{F}$};
% Text Node
\draw (238,181.4) node [anchor=north west][inner sep=0.75pt]    {$\mathcal{E}$};
% Text Node
\draw (101,162.4) node [anchor=north west][inner sep=0.75pt]    {$f$};
% Text Node
\draw (277,166.4) node [anchor=north west][inner sep=0.75pt]    {$h$};
% Text Node
\draw (190,163.4) node [anchor=north west][inner sep=0.75pt]    {$g$};
% Text Node
\draw (49,187.4) node [anchor=north west][inner sep=0.75pt]    {$...$};
% Text Node
\draw (334,187.4) node [anchor=north west][inner sep=0.75pt]    {$...$};
\end{tikzpicture}
\end{equation}
is called exact if $\forall x \in X$ the sequence of stalk
\begin{equation}
\tikzset{every picture/.style={line width=0.75pt}} %set default line width to 0.75pt        
\begin{tikzpicture}[x=0.75pt,y=0.75pt,yscale=-1,xscale=1]
%uncomment if require: \path (0,699); %set diagram left start at 0, and has height of 699

%Straight Lines [id:da022187689160346746] 
\draw    (80,191) -- (138.73,190.84) ;
\draw [shift={(140.73,190.83)}, rotate = 179.84] [color={rgb, 255:red, 0; green, 0; blue, 0 }  ][line width=0.75]    (10.93,-3.29) .. controls (6.95,-1.4) and (3.31,-0.3) .. (0,0) .. controls (3.31,0.3) and (6.95,1.4) .. (10.93,3.29)   ;
%Straight Lines [id:da6142137785302545] 
\draw    (260,191) -- (318.73,190.84) ;
\draw [shift={(320.73,190.83)}, rotate = 179.84] [color={rgb, 255:red, 0; green, 0; blue, 0 }  ][line width=0.75]    (10.93,-3.29) .. controls (6.95,-1.4) and (3.31,-0.3) .. (0,0) .. controls (3.31,0.3) and (6.95,1.4) .. (10.93,3.29)   ;
%Straight Lines [id:da8277820837232078] 
\draw    (170,191) -- (228.73,190.84) ;
\draw [shift={(230.73,190.83)}, rotate = 179.84] [color={rgb, 255:red, 0; green, 0; blue, 0 }  ][line width=0.75]    (10.93,-3.29) .. controls (6.95,-1.4) and (3.31,-0.3) .. (0,0) .. controls (3.31,0.3) and (6.95,1.4) .. (10.93,3.29)   ;

% Text Node
\draw (147,180.4) node [anchor=north west][inner sep=0.75pt]    {$\mathcal{F}_{x}$};
% Text Node
\draw (238,181.4) node [anchor=north west][inner sep=0.75pt]    {$\mathcal{E}_{x}$};
% Text Node
\draw (101,162.4) node [anchor=north west][inner sep=0.75pt]    {$f_{x}$};
% Text Node
\draw (277,166.4) node [anchor=north west][inner sep=0.75pt]    {$h_{x}$};
% Text Node
\draw (190,163.4) node [anchor=north west][inner sep=0.75pt]    {$g_{x}$};
% Text Node
\draw (49,187.4) node [anchor=north west][inner sep=0.75pt]    {$...$};
% Text Node
\draw (334,187.4) node [anchor=north west][inner sep=0.75pt]    {$...$};
\end{tikzpicture}
\end{equation}
is exact.

\end{Definition}

\begin{Definition}[Partition of the identity]
    Let $\mathcal{F}$ be a sheaf and $\mathcal{U}=\{U_i\}_{i \in I}$ an open covering of $X$. A partition of the identity of $\mathcal{F}$ subordinated to $\mathcal{U}$ is a family of shaves morphisms $f_i: \mathcal{F} \rightarrow \mathcal{F}$ such that
    \begin{itemize}
        \item $\overline{Supp(f_i)} \subset U_i \ \forall i$; 
        \item $\{Supp(f_i)\}_{i\in I}$ forms a locally finite covering of $X$;
        \item $\sum_{i \in I} f_i=id_{\mathcal{F}}$.
    \end{itemize}
\end{Definition}

\begin{Definition}[fine sheaf]
    A sheaf $\mathcal{F}$ on $X$ is fine if it admits a partition of the identity subordinate to every open covering $\mathcal{U}$ of $X$.
\end{Definition}

\subsection{Shaves cohomology}
Čech cohomology is a cohomology theory based on the intersection properties of open covers of a topological space. Given an open covering $\mathcal{U}=\{U_i\}_{i \in I}$ we call $U_{i_0...i_q}=U_{i_0} \cap ... \cap U_{i_q}$. 
\begin{Definition}[Čech cochain and Čech differential]
    We define the Čech $q$-cochain (with $q>0$) as an element of 
\begin{equation}
    \Check{C}^q(\mathcal{U},\mathcal{F}):=\prod_{(i_0,...,i_q) \in I^{q+1}}\mathcal{F}(U_{i_0...i_q}),
\end{equation}
and Čech differential $\Check{\delta} : \Check{C}^q(\mathcal{U},\mathcal{F}) \rightarrow \Check{C}^{q+1} (\mathcal{U},\mathcal{F})$ as
\begin{equation}
    (\Check{\delta} c)_{i_0...i_q+1}:=\sum_{k=0}^{q+1}(-1)^kc_{i_0...i_{k-1}i_{k+1}...i_q+1}|U_{i_0...i_q+1}.
\end{equation}
\end{Definition}
Is a matter of computations to show that $\Check{\delta} \circ \Check{\delta}=0$. Hence $(\Check{C}^*,\Check{\delta})$ is a cochain complex known as Čech complex and its cohomology is known as Čech or shaves cohomology
\begin{Definition}[Čech cohomology]
    The q-th Čech cohomology group is 
    \begin{equation}
\Check{H}^q(\mathcal{U},\mathcal{F}):=\frac{\Check{Z}^q(\mathcal{U},\mathcal{F})}{\Check{B}^q(\mathcal{U},\mathcal{F})}=\frac{\{c \in \Check{C}^q(\mathcal{U},\mathcal{F}) | \Check{\delta} c=0\}}{\{\Check{\delta} c \in \Check{C}^{q}(\mathcal{U},\mathcal{F}) | c \in \Check{C}^{q-1}(\mathcal{U},\mathcal{F})\}}.
    \end{equation}
\end{Definition}
Thanks to a mathematical procedure known as colimit \cite{lane1998categories} we can eliminate the dependence on the open covering of $X$ in the Čech complex (and hence also in its cohomology groups) by replacing it directly with the dependence on the topological space $X$.
\begin{Definition}[Acyclic sheaf]
    A sheaf $\mathcal{F}$ on $X$ is called acyclic if $\Check{H}^q(X,\mathcal{F})=0 \ \forall q>0$.
\end{Definition}
\begin{Theorem}[Every fine sheaf is acyclic]
    Let $\mathcal{F}$ be a fine sheaf on $X$ then $\mathcal{F} $ is acyclic.
\end{Theorem}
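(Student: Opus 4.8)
The plan is to exhibit an explicit contracting homotopy for the Čech complex $(\check{C}^{*}(\mathcal{U},\mathcal{F}),\check{\delta})$ attached to an arbitrary open covering $\mathcal{U}=\{U_i\}_{i\in I}$, built directly out of the partition of the identity $\{f_i\}$ whose existence is guaranteed by fineness. Since acyclicity is the statement that $\check{H}^q(X,\mathcal{F})=0$ for all $q>0$, and the $X$-cohomology is obtained as a colimit over refinements of coverings, it suffices to prove $\check{H}^q(\mathcal{U},\mathcal{F})=0$ for each fixed $\mathcal{U}$ and then pass to the colimit.

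First I would define the homotopy operator $K:\check{C}^q(\mathcal{U},\mathcal{F})\to\check{C}^{q-1}(\mathcal{U},\mathcal{F})$ by
\[
(Kc)_{i_0\dots i_{q-1}}:=\sum_{j\in I} f_j\bigl(c_{j\,i_0\dots i_{q-1}}\bigr),
\]
where each summand is to be read as a section of $\mathcal{F}(U_{i_0\dots i_{q-1}})$ after extension by zero. The point that has to be checked carefully here is exactly where fineness enters: the section $c_{j\,i_0\dots i_{q-1}}$ lives only on $U_{j\,i_0\dots i_{q-1}}=U_j\cap U_{i_0\dots i_{q-1}}$, but since $\overline{\mathrm{Supp}(f_j)}\subset U_j$, the image $f_j(c_{j\,i_0\dots i_{q-1}})$ vanishes on the open set $U_{i_0\dots i_{q-1}}\setminus\overline{\mathrm{Supp}(f_j)}$; these two opens cover $U_{i_0\dots i_{q-1}}$ and the two local descriptions agree on the overlap, so the gluing axiom of $\mathcal{F}$ produces a well-defined extension by zero. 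Local finiteness of $\{\mathrm{Supp}(f_j)\}$ then guarantees that the sum over $j$ is locally finite and hence defines a genuine section.

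Second I would verify the homotopy identity $\check{\delta}K+K\check{\delta}=\mathrm{id}$ on $\check{C}^q$ for $q>0$. This is a direct index manipulation: expanding $(K\check{\delta}c)_{i_0\dots i_q}=\sum_j f_j\bigl((\check{\delta}c)_{j\,i_0\dots i_q}\bigr)$ and isolating the leading term of $\check{\delta}$ that deletes the index $j$, that contribution equals $\bigl(\sum_j f_j\bigr)c_{i_0\dots i_q}=c_{i_0\dots i_q}$ by the normalization $\sum_j f_j=\mathrm{id}_{\mathcal{F}}$, while the remaining terms reassemble, up to the alternating signs carried by $\check{\delta}$, into $-(\check{\delta}Kc)_{i_0\dots i_q}$. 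The net outcome is $c=\check{\delta}Kc+K\check{\delta}c$.

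Finally, given any $q$-cocycle $c$ with $q>0$, the identity collapses to $c=\check{\delta}(Kc)$, so $c$ is a coboundary and $\check{H}^q(\mathcal{U},\mathcal{F})=0$; passing to the colimit over refinements yields $\check{H}^q(X,\mathcal{F})=0$, i.e. $\mathcal{F}$ is acyclic. I expect the only genuinely delicate step to be the extension-by-zero construction underlying the definition of $K$, which is precisely where the support condition $\overline{\mathrm{Supp}(f_j)}\subset U_j$ and the local finiteness built into the notion of partition of the identity are indispensable; the homotopy identity itself is a routine, if sign-sensitive, computation with the Čech differential.
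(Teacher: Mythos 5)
Your proposal is correct and follows essentially the same route as the paper: your operator $K$ is exactly the paper's assignment $a\mapsto b$ with $b_{i_0\dots i_{q-1}}=\sum_j g_j(a_{j i_0\dots i_{q-1}})$, where the extension by zero via $\overline{\mathrm{Supp}(f_j)}\subset U_j$ and the local finiteness play the same role, and both arguments conclude by passing to the colimit over coverings. The only difference is presentational — you verify the full contracting-homotopy identity $\check{\delta}K+K\check{\delta}=\mathrm{id}$ on all cochains, whereas the paper checks $\check{\delta}b=a$ only on cocycles — which is the same computation.
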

\begin{proof}
    We show a stronger fact from which the theorem follows easily. The stronger assertion we want to prove is that if $\mathcal{F}$ is a sheaf on $X$ which admits a partition of the identity subordinated to an open covering $\mathcal{U}=\{U_i\}_{i \in I}$ of $X$ then $\Check{H}^q(\mathcal{U},\mathcal{F})=0 \ \forall q>0.$ In fact, let us consider the shaves $\mathcal{F}_i$ by placing $\mathcal{F}_i(U):=\mathcal{F}(U_i \cap U)$ for every open $U \subset X$ and the shaves morphisms $g_i: \mathcal{F}_i \rightarrow \mathcal{F}$ defined, for every $s \in \mathcal{F}_i(U)$, by 
    \begin{equation}
        g_i(s):=
        \begin{cases}
             f_i(s) \qquad on \ U \cap U_i\\
             0  \qquad on \ U\setminus \overline{Supp(f_i)}
        \end{cases};
    \end{equation}
where the $f_i$ are the shave morphisms of the partition of identity subordinated to $\mathcal{U}$. Choosing a cocycle $a \in \Check{Z}^q(\mathcal{U},\mathcal{F})$ and defining $b \in \Check{C}^{q-1}(\mathcal{U},\mathcal{F})$ as 
\begin{equation}
    b_{i_1...i_n}:=\sum_jg_j(a_{ji_1...i_n})
\end{equation}
is easy to show that $\Check{\delta}b=a$. The theorem now follows since a fine sheaf $\mathcal{F}$ admits a partition of the identity subordinated to every open covenging of $X$, hence passing to the colimit we get $\Check{H}^q(X,\mathcal{F})=0 \ \forall q>0.$ and so $\mathcal{F}$ is acyclic. 
\end{proof}

\begin{Theorem}[Long exact sequence in Čech cohomology]
    For every short exact sequence of shaves 
\begin{equation}
\tikzset{every picture/.style={line width=0.75pt}} %set default line width to 0.75pt        
\begin{tikzpicture}[x=0.75pt,y=0.75pt,yscale=-1,xscale=1]
%uncomment if require: \path (0,699); %set diagram left start at 0, and has height of 699

%Straight Lines [id:da022187689160346746] 
\draw    (80,191) -- (138.73,190.84) ;
\draw [shift={(140.73,190.83)}, rotate = 179.84] [color={rgb, 255:red, 0; green, 0; blue, 0 }  ][line width=0.75]    (10.93,-3.29) .. controls (6.95,-1.4) and (3.31,-0.3) .. (0,0) .. controls (3.31,0.3) and (6.95,1.4) .. (10.93,3.29)   ;
%Straight Lines [id:da6142137785302545] 
\draw    (260,191) -- (318.73,190.84) ;
\draw [shift={(320.73,190.83)}, rotate = 179.84] [color={rgb, 255:red, 0; green, 0; blue, 0 }  ][line width=0.75]    (10.93,-3.29) .. controls (6.95,-1.4) and (3.31,-0.3) .. (0,0) .. controls (3.31,0.3) and (6.95,1.4) .. (10.93,3.29)   ;
%Straight Lines [id:da8277820837232078] 
\draw    (170,191) -- (228.73,190.84) ;
\draw [shift={(230.73,190.83)}, rotate = 179.84] [color={rgb, 255:red, 0; green, 0; blue, 0 }  ][line width=0.75]    (10.93,-3.29) .. controls (6.95,-1.4) and (3.31,-0.3) .. (0,0) .. controls (3.31,0.3) and (6.95,1.4) .. (10.93,3.29)   ;
%Straight Lines [id:da7391204576359137] 
\draw    (350,191) -- (408.73,190.84) ;
\draw [shift={(410.73,190.83)}, rotate = 179.84] [color={rgb, 255:red, 0; green, 0; blue, 0 }  ][line width=0.75]    (10.93,-3.29) .. controls (6.95,-1.4) and (3.31,-0.3) .. (0,0) .. controls (3.31,0.3) and (6.95,1.4) .. (10.93,3.29)   ;

% Text Node
\draw (147,180.4) node [anchor=north west][inner sep=0.75pt]    {$\mathcal{E}$};
% Text Node
\draw (238,181.4) node [anchor=north west][inner sep=0.75pt]    {$\mathcal{F}$};
% Text Node
% Text Node
\draw (277,166.4) node [anchor=north west][inner sep=0.75pt]    {$h$};
% Text Node
\draw (190,166.4) node [anchor=north west][inner sep=0.75pt]    {$g$};
% Text Node
\draw (62,182.4) node [anchor=north west][inner sep=0.75pt]    {$0$};
% Text Node
\draw (418,182.4) node [anchor=north west][inner sep=0.75pt]    {$0$};
% Text Node
\draw (328,181.4) node [anchor=north west][inner sep=0.75pt]    {$\mathcal{G}$};
% Text Node
\end{tikzpicture}
\end{equation}
induce a long exact sequence in Čech cohomology
\begin{equation}
\tikzset{every picture/.style={line width=0.75pt}} %set default line width to 0.75pt               
\begin{tikzpicture}[x=0.75pt,y=0.75pt,yscale=-1,xscale=1]
%uncomment if require: \path (0,699); %set diagram left start at 0, and has height of 699

%Straight Lines [id:da8277820837232078] 
\draw    (121.88,188.6) -- (141,188.94) ;
\draw [shift={(143,188.98)}, rotate = 181.02] [color={rgb, 255:red, 0; green, 0; blue, 0 }  ][line width=0.75]    (10.93,-3.29) .. controls (6.95,-1.4) and (3.31,-0.3) .. (0,0) .. controls (3.31,0.3) and (6.95,1.4) .. (10.93,3.29)   ;
%Straight Lines [id:da7852380879050053] 
\draw    (28.88,188.6) -- (48,188.94) ;
\draw [shift={(50,188.98)}, rotate = 181.02] [color={rgb, 255:red, 0; green, 0; blue, 0 }  ][line width=0.75]    (10.93,-3.29) .. controls (6.95,-1.4) and (3.31,-0.3) .. (0,0) .. controls (3.31,0.3) and (6.95,1.4) .. (10.93,3.29)   ;
%Straight Lines [id:da16637451708967887] 
\draw    (217.88,188.6) -- (237,188.94) ;
\draw [shift={(239,188.98)}, rotate = 181.02] [color={rgb, 255:red, 0; green, 0; blue, 0 }  ][line width=0.75]    (10.93,-3.29) .. controls (6.95,-1.4) and (3.31,-0.3) .. (0,0) .. controls (3.31,0.3) and (6.95,1.4) .. (10.93,3.29)   ;
%Straight Lines [id:da7809160996728634] 
\draw    (308.88,189.6) -- (328,189.94) ;
\draw [shift={(330,189.98)}, rotate = 181.02] [color={rgb, 255:red, 0; green, 0; blue, 0 }  ][line width=0.75]    (10.93,-3.29) .. controls (6.95,-1.4) and (3.31,-0.3) .. (0,0) .. controls (3.31,0.3) and (6.95,1.4) .. (10.93,3.29)   ;
%Straight Lines [id:da5236156280218165] 
\draw    (412.88,190.6) -- (432,190.94) ;
\draw [shift={(434,190.98)}, rotate = 181.02] [color={rgb, 255:red, 0; green, 0; blue, 0 }  ][line width=0.75]    (10.93,-3.29) .. controls (6.95,-1.4) and (3.31,-0.3) .. (0,0) .. controls (3.31,0.3) and (6.95,1.4) .. (10.93,3.29)   ;
%Straight Lines [id:da930086871431906] 
\draw    (521.88,190.6) -- (541,190.94) ;
\draw [shift={(543,190.98)}, rotate = 181.02] [color={rgb, 255:red, 0; green, 0; blue, 0 }  ][line width=0.75]    (10.93,-3.29) .. controls (6.95,-1.4) and (3.31,-0.3) .. (0,0) .. controls (3.31,0.3) and (6.95,1.4) .. (10.93,3.29)   ;
% Text Node
\draw (55.08,178.44) node [anchor=north west][inner sep=0.75pt]    {$\Check{H}^{q}( X,\mathcal{E})$};
% Text Node
\draw (220.01,164.91) node [anchor=north west][inner sep=0.75pt]    {$h^{*}$};
% Text Node
\draw (126.09,161.5) node [anchor=north west][inner sep=0.75pt]    {$g^{*}$};
% Text Node
\draw (4.4,185.81) node [anchor=north west][inner sep=0.75pt]    {$...$};
% Text Node
\draw (146.73,178.34) node [anchor=north west][inner sep=0.75pt]    {$\Check{H}^{q}( X,\mathcal{F})$};
% Text Node
\draw (241.77,179.34) node [anchor=north west][inner sep=0.75pt]    {$\Check{H}^{q}( X,\mathcal{G})$};
% Text Node
\draw (332.96,179.34) node [anchor=north west][inner sep=0.75pt]    {$\Check{H}^{q+1}( X,\mathcal{E})$};
% Text Node
\draw (524.47,165.81) node [anchor=north west][inner sep=0.75pt]    {$h^{*}$};
% Text Node
\draw (417.9,162.4) node [anchor=north west][inner sep=0.75pt]    {$g^{*}$};
% Text Node
\draw (439.57,179.34) node [anchor=north west][inner sep=0.75pt]    {$\Check{H}^{q+1}( X,\mathcal{F})$};
% Text Node
\draw (549.94,188.81) node [anchor=north west][inner sep=0.75pt]    {$...$};
\end{tikzpicture}   
\end{equation}
\end{Theorem}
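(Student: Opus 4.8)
The plan is to reduce the statement to the zig-zag (snake) lemma of homological algebra, the only genuinely sheaf-theoretic input being the promotion of the short exact sequence of sheaves to a short exact sequence of Čech cochain complexes. First I would note that for a \emph{fixed} open covering $\mathcal{U}=\{U_i\}_{i\in I}$ the section functor produces only a left-exact sequence of cochain complexes
\begin{equation}
0 \longrightarrow \Check{C}^\bullet(\mathcal{U},\mathcal{E}) \overset{g}{\longrightarrow} \Check{C}^\bullet(\mathcal{U},\mathcal{F}) \overset{h}{\longrightarrow} \Check{C}^\bullet(\mathcal{U},\mathcal{G}),
\end{equation}
because exactness of sheaves is a condition on stalks and the map $h$ need not be surjective on $q$-cochains over a fixed $\mathcal{U}$. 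The crucial observation is that any component $c_{i_0\ldots i_q}\in\mathcal{G}(U_{i_0\ldots i_q})$ of a cochain lifts \emph{locally} to a section of $\mathcal{F}$, since $h_x$ is surjective on every stalk; passing to a common refinement of $\mathcal{U}$ therefore trivialises the obstruction to lifting simultaneously.

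Next I would pass to the colimit over refinements. Because $\Check{H}^\bullet(X,-)$ is by construction the filtered colimit of the $\Check{H}^\bullet(\mathcal{U},-)$ recalled above, and filtered colimits of abelian groups are exact, the limiting sequence
\begin{equation}
0 \longrightarrow \Check{C}^\bullet(X,\mathcal{E}) \overset{g}{\longrightarrow} \Check{C}^\bullet(X,\mathcal{F}) \overset{h}{\longrightarrow} \Check{C}^\bullet(X,\mathcal{G}) \longrightarrow 0
\end{equation}
becomes a genuine short exact sequence of cochain complexes, and exactness of the colimit guarantees that passage to cohomology commutes with the limit. At this stage all sheaf-theoretic content has been absorbed into formal homological algebra.

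It then remains to run the standard zig-zag construction of the connecting homomorphism. Given a class in $\Check{H}^q(X,\mathcal{G})$ represented by a cocycle $c$, I would lift it along $h$ to some $\tilde c\in\Check{C}^q(X,\mathcal{F})$; since $h(\Check{\delta}\tilde c)=\Check{\delta}c=0$, exactness provides a unique $e\in\Check{C}^{q+1}(X,\mathcal{E})$ with $g(e)=\Check{\delta}\tilde c$, and injectivity of $g$ together with $\Check{\delta}\circ\Check{\delta}=0$ forces $\Check{\delta}e=0$. Setting $[c]\mapsto[e]$ defines the map $\Check{H}^q(X,\mathcal{G})\to\Check{H}^{q+1}(X,\mathcal{E})$; checking that $[e]$ is independent of the choices of $c$ and $\tilde c$, and that the resulting long sequence is exact at each of the three types of node, is the usual diagram chase and carries no further conceptual difficulty.

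The main obstacle is precisely the first step: the section functor is only left exact, so exactness must be recovered after the colimit rather than at a fixed cover. I would stress that this hinges on two facts — that exactness of the sheaf sequence is a stalk-wise condition, which yields local liftability of sections of $\mathcal{G}$, and that filtered colimits over refinements are exact, which lets these local lifts be assembled without a residual global obstruction. Once the short exact sequence of cochain complexes in the colimit is in hand, the long exact sequence follows formally from the zig-zag lemma.
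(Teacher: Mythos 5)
Your overall strategy is sound and genuinely different from the paper's: you aim to establish a short exact sequence of Čech cochain complexes directly in the colimit over coverings and then apply the zig-zag lemma, whereas the paper first replaces $\mathcal{G}$ by the presheaf image $\Tilde{\mathcal{G}}$ of $h$, for which the sequence of cochain groups over a \emph{fixed} covering is already exact, obtains the long exact sequence there, and only afterwards compares $\Check{H}^q(X,\Tilde{\mathcal{G}})$ with $\Check{H}^q(X,\mathcal{G})$. The zig-zag portion of your argument is standard and unobjectionable.

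The gap lies in the one non-formal step, and the justification you give for it does not establish it. Exactness of filtered colimits says that a filtered colimit of \emph{exact} sequences is exact; here the sequences $0\to\Check{C}^q(\mathcal{U},\mathcal{E})\to\Check{C}^q(\mathcal{U},\mathcal{F})\to\Check{C}^q(\mathcal{U},\mathcal{G})$ are only left exact for each fixed $\mathcal{U}$, so that principle yields left exactness of the colimit and nothing more. What you actually need is: for every $\mathcal{U}$ and every $c\in\Check{C}^q(\mathcal{U},\mathcal{G})$ there is a refinement $\mathcal{V}\xrightarrow{\tau}\mathcal{U}$ such that $\tau^{*}c$ lifts componentwise to $\Check{C}^q(\mathcal{V},\mathcal{F})$. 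Local liftability of each $c_{i_0\ldots i_q}$ does not immediately give this, because the sets $V_{j_0\ldots j_q}$ on which you must lift are intersections, and a point of $V_{j_0}$ need not lie in $U_{\tau(j_0)\ldots\tau(j_q)}$ at all, so you cannot simply shrink each $V_j$ around its points using stalk surjectivity. The claim is true, but only thanks to the standing hypothesis that $X$ is paracompact and Hausdorff (hence normal): one passes to a locally finite refinement, takes a shrinking $\overline{U_i'}\subset U_i$, and chooses for each $x$ a neighborhood $W_x$ meeting $\overline{U_j'}$ only for the finitely many $j$ with $x\in\overline{U_j'}$ and small enough that all the corresponding finitely many $c_{i_0\ldots i_q}$ lift on $W_x$; only then does every nonempty intersection of the refined cover land inside a single such $W_x$. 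This is the same paracompactness input that the paper defers to its final step, the identification $\Check{H}^q(X,\Tilde{\mathcal{G}})\cong\Check{H}^q(X,\mathcal{G})$. As written, your proposal asserts the crucial surjectivity rather than proving it, and the reason it cites for it is a non sequitur.
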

\begin{proof}
The short exact sequence of shaves induces for every open covering $\mathcal{U}$, by replacing $\mathcal{G}$ with the image $\Tilde{\mathcal{G}}$ of $f$, a short exact sequence of Čech complexes 
\begin{equation}
\tikzset{every picture/.style={line width=0.75pt}} %set default line width to 0.75pt        
\begin{tikzpicture}[x=0.75pt,y=0.75pt,yscale=-1,xscale=1]
%uncomment if require: \path (0,699); %set diagram left start at 0, and has height of 699

%Straight Lines [id:da022187689160346746] 
\draw    (68,190) -- (126.73,189.84) ;
\draw [shift={(128.73,189.83)}, rotate = 179.84] [color={rgb, 255:red, 0; green, 0; blue, 0 }  ][line width=0.75]    (10.93,-3.29) .. controls (6.95,-1.4) and (3.31,-0.3) .. (0,0) .. controls (3.31,0.3) and (6.95,1.4) .. (10.93,3.29)   ;
%Straight Lines [id:da6142137785302545] 
\draw    (335,191) -- (393.73,190.84) ;
\draw [shift={(395.73,190.83)}, rotate = 179.84] [color={rgb, 255:red, 0; green, 0; blue, 0 }  ][line width=0.75]    (10.93,-3.29) .. controls (6.95,-1.4) and (3.31,-0.3) .. (0,0) .. controls (3.31,0.3) and (6.95,1.4) .. (10.93,3.29)   ;
%Straight Lines [id:da8277820837232078] 
\draw    (203,191) -- (261.73,190.84) ;
\draw [shift={(263.73,190.83)}, rotate = 179.84] [color={rgb, 255:red, 0; green, 0; blue, 0 }  ][line width=0.75]    (10.93,-3.29) .. controls (6.95,-1.4) and (3.31,-0.3) .. (0,0) .. controls (3.31,0.3) and (6.95,1.4) .. (10.93,3.29)   ;
%Straight Lines [id:da7391204576359137] 
\draw    (469,191) -- (527.73,190.84) ;
\draw [shift={(529.73,190.83)}, rotate = 179.84] [color={rgb, 255:red, 0; green, 0; blue, 0 }  ][line width=0.75]    (10.93,-3.29) .. controls (6.95,-1.4) and (3.31,-0.3) .. (0,0) .. controls (3.31,0.3) and (6.95,1.4) .. (10.93,3.29)   ;

% Text Node
\draw (134,179.4) node [anchor=north west][inner sep=0.75pt]    {$\Check{C}^{*}(\mathcal{U} ,\mathcal{E})$};
% Text Node
\draw (349,166.4) node [anchor=north west][inner sep=0.75pt]    {$h^{*}$};
% Text Node
\draw (221,164.4) node [anchor=north west][inner sep=0.75pt]    {$g^{*}$};
% Text Node
\draw (49,182.5) node [anchor=north west][inner sep=0.75pt]    {$0$};
% Text Node
\draw (537,182.5) node [anchor=north west][inner sep=0.75pt]    {$0$};
% Text Node
\draw (399,179.4) node [anchor=north west][inner sep=0.75pt]    {$\Check{C}^{*}(\mathcal{U} ,\Tilde{\mathcal{G}})$};
% Text Node
\draw (266,179.4) node [anchor=north west][inner sep=0.75pt]    {$\Check{C}^{*}(\mathcal{U} ,\mathcal{F})$};
\end{tikzpicture},
\end{equation}
which induces, by standard theorem of cohomological algebra, a long exact sequence in cohomology
\begin{equation}
\tikzset{every picture/.style={line width=0.75pt}} %set default line width to 0.75pt               
\begin{tikzpicture}[x=0.75pt,y=0.75pt,yscale=-1,xscale=1]
%uncomment if require: \path (0,699); %set diagram left start at 0, and has height of 699

%Straight Lines [id:da8277820837232078] 
\draw    (121.88,188.6) -- (141,188.94) ;
\draw [shift={(143,188.98)}, rotate = 181.02] [color={rgb, 255:red, 0; green, 0; blue, 0 }  ][line width=0.75]    (10.93,-3.29) .. controls (6.95,-1.4) and (3.31,-0.3) .. (0,0) .. controls (3.31,0.3) and (6.95,1.4) .. (10.93,3.29)   ;
%Straight Lines [id:da7852380879050053] 
\draw    (28.88,188.6) -- (48,188.94) ;
\draw [shift={(50,188.98)}, rotate = 181.02] [color={rgb, 255:red, 0; green, 0; blue, 0 }  ][line width=0.75]    (10.93,-3.29) .. controls (6.95,-1.4) and (3.31,-0.3) .. (0,0) .. controls (3.31,0.3) and (6.95,1.4) .. (10.93,3.29)   ;
%Straight Lines [id:da16637451708967887] 
\draw    (217.88,188.6) -- (237,188.94) ;
\draw [shift={(239,188.98)}, rotate = 181.02] [color={rgb, 255:red, 0; green, 0; blue, 0 }  ][line width=0.75]    (10.93,-3.29) .. controls (6.95,-1.4) and (3.31,-0.3) .. (0,0) .. controls (3.31,0.3) and (6.95,1.4) .. (10.93,3.29)   ;
%Straight Lines [id:da7809160996728634] 
\draw    (308.88,189.6) -- (328,189.94) ;
\draw [shift={(330,189.98)}, rotate = 181.02] [color={rgb, 255:red, 0; green, 0; blue, 0 }  ][line width=0.75]    (10.93,-3.29) .. controls (6.95,-1.4) and (3.31,-0.3) .. (0,0) .. controls (3.31,0.3) and (6.95,1.4) .. (10.93,3.29)   ;
%Straight Lines [id:da5236156280218165] 
\draw    (412.88,190.6) -- (432,190.94) ;
\draw [shift={(434,190.98)}, rotate = 181.02] [color={rgb, 255:red, 0; green, 0; blue, 0 }  ][line width=0.75]    (10.93,-3.29) .. controls (6.95,-1.4) and (3.31,-0.3) .. (0,0) .. controls (3.31,0.3) and (6.95,1.4) .. (10.93,3.29)   ;
%Straight Lines [id:da930086871431906] 
\draw    (521.88,190.6) -- (541,190.94) ;
\draw [shift={(543,190.98)}, rotate = 181.02] [color={rgb, 255:red, 0; green, 0; blue, 0 }  ][line width=0.75]    (10.93,-3.29) .. controls (6.95,-1.4) and (3.31,-0.3) .. (0,0) .. controls (3.31,0.3) and (6.95,1.4) .. (10.93,3.29)   ;
% Text Node
\draw (55.08,178.44) node [anchor=north west][inner sep=0.75pt]    {$\Check{H}^{q}( \mathcal{U},\mathcal{E})$};
% Text Node
\draw (220.01,164.91) node [anchor=north west][inner sep=0.75pt]    {$h^{*}$};
% Text Node
\draw (126.09,161.5) node [anchor=north west][inner sep=0.75pt]    {$g^{*}$};
% Text Node
\draw (4.4,185.81) node [anchor=north west][inner sep=0.75pt]    {$...$};
% Text Node
\draw (146.73,178.34) node [anchor=north west][inner sep=0.75pt]    {$\Check{H}^{q}( X,\mathcal{F})$};
% Text Node
\draw (241.77,179.34) node [anchor=north west][inner sep=0.75pt]    {$\Check{H}^{q}( \mathcal{U},\Tilde{\mathcal{G}})$};
% Text Node
\draw (332.96,179.34) node [anchor=north west][inner sep=0.75pt]    {$\Check{H}^{q+1}( \mathcal{U},\mathcal{E})$};
% Text Node
\draw (524.47,165.81) node [anchor=north west][inner sep=0.75pt]    {$h^{*}$};
% Text Node
\draw (417.9,162.4) node [anchor=north west][inner sep=0.75pt]    {$g^{*}$};
% Text Node
\draw (439.57,179.34) node [anchor=north west][inner sep=0.75pt]    {$\Check{H}^{q+1}( \mathcal{U},\mathcal{F})$};
% Text Node
\draw (549.94,188.81) node [anchor=north west][inner sep=0.75pt]    {$...$};
\end{tikzpicture}   
\end{equation}
At this point is enough to pass to the colimit, which preserves exactness, to get
\begin{equation}
\tikzset{every picture/.style={line width=0.75pt}} %set default line width to 0.75pt               
\begin{tikzpicture}[x=0.75pt,y=0.75pt,yscale=-1,xscale=1]
%uncomment if require: \path (0,699); %set diagram left start at 0, and has height of 699

%Straight Lines [id:da8277820837232078] 
\draw    (121.88,188.6) -- (141,188.94) ;
\draw [shift={(143,188.98)}, rotate = 181.02] [color={rgb, 255:red, 0; green, 0; blue, 0 }  ][line width=0.75]    (10.93,-3.29) .. controls (6.95,-1.4) and (3.31,-0.3) .. (0,0) .. controls (3.31,0.3) and (6.95,1.4) .. (10.93,3.29)   ;
%Straight Lines [id:da7852380879050053] 
\draw    (28.88,188.6) -- (48,188.94) ;
\draw [shift={(50,188.98)}, rotate = 181.02] [color={rgb, 255:red, 0; green, 0; blue, 0 }  ][line width=0.75]    (10.93,-3.29) .. controls (6.95,-1.4) and (3.31,-0.3) .. (0,0) .. controls (3.31,0.3) and (6.95,1.4) .. (10.93,3.29)   ;
%Straight Lines [id:da16637451708967887] 
\draw    (217.88,188.6) -- (237,188.94) ;
\draw [shift={(239,188.98)}, rotate = 181.02] [color={rgb, 255:red, 0; green, 0; blue, 0 }  ][line width=0.75]    (10.93,-3.29) .. controls (6.95,-1.4) and (3.31,-0.3) .. (0,0) .. controls (3.31,0.3) and (6.95,1.4) .. (10.93,3.29)   ;
%Straight Lines [id:da7809160996728634] 
\draw    (308.88,189.6) -- (328,189.94) ;
\draw [shift={(330,189.98)}, rotate = 181.02] [color={rgb, 255:red, 0; green, 0; blue, 0 }  ][line width=0.75]    (10.93,-3.29) .. controls (6.95,-1.4) and (3.31,-0.3) .. (0,0) .. controls (3.31,0.3) and (6.95,1.4) .. (10.93,3.29)   ;
%Straight Lines [id:da5236156280218165] 
\draw    (412.88,190.6) -- (432,190.94) ;
\draw [shift={(434,190.98)}, rotate = 181.02] [color={rgb, 255:red, 0; green, 0; blue, 0 }  ][line width=0.75]    (10.93,-3.29) .. controls (6.95,-1.4) and (3.31,-0.3) .. (0,0) .. controls (3.31,0.3) and (6.95,1.4) .. (10.93,3.29)   ;
%Straight Lines [id:da930086871431906] 
\draw    (521.88,190.6) -- (541,190.94) ;
\draw [shift={(543,190.98)}, rotate = 181.02] [color={rgb, 255:red, 0; green, 0; blue, 0 }  ][line width=0.75]    (10.93,-3.29) .. controls (6.95,-1.4) and (3.31,-0.3) .. (0,0) .. controls (3.31,0.3) and (6.95,1.4) .. (10.93,3.29)   ;
% Text Node
\draw (55.08,178.44) node [anchor=north west][inner sep=0.75pt]    {$\Check{H}^{q}( X,\mathcal{E})$};
% Text Node
\draw (220.01,164.91) node [anchor=north west][inner sep=0.75pt]    {$h^{*}$};
% Text Node
\draw (126.09,161.5) node [anchor=north west][inner sep=0.75pt]    {$g^{*}$};
% Text Node
\draw (4.4,185.81) node [anchor=north west][inner sep=0.75pt]    {$...$};
% Text Node
\draw (146.73,178.34) node [anchor=north west][inner sep=0.75pt]    {$\Check{H}^{q}( X,\mathcal{F})$};
% Text Node
\draw (241.77,179.34) node [anchor=north west][inner sep=0.75pt]    {$\Check{H}^{q}( X,\Tilde{\mathcal{G}})$};
% Text Node
\draw (332.96,179.34) node [anchor=north west][inner sep=0.75pt]    {$\Check{H}^{q+1}( X,\mathcal{E})$};
% Text Node
\draw (524.47,165.81) node [anchor=north west][inner sep=0.75pt]    {$h^{*}$};
% Text Node
\draw (417.9,162.4) node [anchor=north west][inner sep=0.75pt]    {$g^{*}$};
% Text Node
\draw (439.57,179.34) node [anchor=north west][inner sep=0.75pt]    {$\Check{H}^{q+1}(X,\mathcal{F})$};
% Text Node
\draw (549.94,188.81) node [anchor=north west][inner sep=0.75pt]    {$...$};
\end{tikzpicture}   
\end{equation}
and the only non-trivial step is to show that $\Check{H}^{q}( X,\Tilde{\mathcal{G}}) \cong \Check{H}^{q}( X,{\mathcal{G}})$ that can be done thanks to the good properties of Čech cohomology under changing of refinement function. 
\end{proof}

\subsection{The abstract de Rham theorem}

\begin{Definition}[Resolution of a sheaf]
    A resolution of a sheaf is a exact sequence of the form 
\begin{equation}
\tikzset{every picture/.style={line width=0.75pt}} %set default line width to 0.75pt        
\begin{tikzpicture}[x=0.75pt,y=0.75pt,yscale=-1,xscale=1]
%uncomment if require: \path (0,699); %set diagram left start at 0, and has height of 699

%Straight Lines [id:da022187689160346746] 
\draw    (68,190) -- (126.73,189.84) ;
\draw [shift={(128.73,189.83)}, rotate = 179.84] [color={rgb, 255:red, 0; green, 0; blue, 0 }  ][line width=0.75]    (10.93,-3.29) .. controls (6.95,-1.4) and (3.31,-0.3) .. (0,0) .. controls (3.31,0.3) and (6.95,1.4) .. (10.93,3.29)   ;
%Straight Lines [id:da6142137785302545] 
\draw    (253,191) -- (311.73,190.84) ;
\draw [shift={(313.73,190.83)}, rotate = 179.84] [color={rgb, 255:red, 0; green, 0; blue, 0 }  ][line width=0.75]    (10.93,-3.29) .. controls (6.95,-1.4) and (3.31,-0.3) .. (0,0) .. controls (3.31,0.3) and (6.95,1.4) .. (10.93,3.29)   ;
%Straight Lines [id:da8277820837232078] 
\draw    (156,190) -- (214.73,189.84) ;
\draw [shift={(216.73,189.83)}, rotate = 179.84] [color={rgb, 255:red, 0; green, 0; blue, 0 }  ][line width=0.75]    (10.93,-3.29) .. controls (6.95,-1.4) and (3.31,-0.3) .. (0,0) .. controls (3.31,0.3) and (6.95,1.4) .. (10.93,3.29)   ;
%Straight Lines [id:da9166783543020783] 
\draw    (343,192) -- (401.73,191.84) ;
\draw [shift={(403.73,191.83)}, rotate = 179.84] [color={rgb, 255:red, 0; green, 0; blue, 0 }  ][line width=0.75]    (10.93,-3.29) .. controls (6.95,-1.4) and (3.31,-0.3) .. (0,0) .. controls (3.31,0.3) and (6.95,1.4) .. (10.93,3.29)   ;

% Text Node
\draw (178,168.4) node [anchor=north west][inner sep=0.75pt]    {$i$};
% Text Node
\draw (52,182.5) node [anchor=north west][inner sep=0.75pt]    {$0$};
% Text Node
\draw (321,182.4) node [anchor=north west][inner sep=0.75pt]    {$\mathcal{E}^{1}$};
% Text Node
\draw (224,181.4) node [anchor=north west][inner sep=0.75pt]    {$\mathcal{E}^{0}$};
% Text Node
\draw (135,180.4) node [anchor=north west][inner sep=0.75pt]    {$\mathcal{F}$};
% Text Node
\draw (270,170.4) node [anchor=north west][inner sep=0.75pt]    {$d$};
% Text Node
\draw (360,171.4) node [anchor=north west][inner sep=0.75pt]    {$d$};
% Text Node
\draw (412,188.5) node [anchor=north west][inner sep=0.75pt]    {$...$};
\end{tikzpicture}
\end{equation}
if every $\mathcal{E}^j$ is acyclic the resolution is called an acyclic resolution. 
\end{Definition}
We note that if we discard the sheaf $\mathcal{F}$ and we consider the global sections $\mathcal{E}^j(X)$ of shaves $\mathcal{E}^j$ we can define their cohomology in the standard way getting the cohomology groups $H^q(\mathcal{E}^*(X))$. For example if the $\mathcal{E}^*(X)$ is the singular cochain complex then $H^q(\mathcal{E}^*(X))$ will be the singular cohomology groups.
\begin{Theorem}[Abstract de Rham theorem]
    Given an acyclic resolution of the sheaf $\mathcal{F}$
\begin{equation}
\tikzset{every picture/.style={line width=0.75pt}} %set default line width to 0.75pt        
\begin{tikzpicture}[x=0.75pt,y=0.75pt,yscale=-1,xscale=1]
%uncomment if require: \path (0,699); %set diagram left start at 0, and has height of 699

%Straight Lines [id:da022187689160346746] 
\draw    (68,190) -- (126.73,189.84) ;
\draw [shift={(128.73,189.83)}, rotate = 179.84] [color={rgb, 255:red, 0; green, 0; blue, 0 }  ][line width=0.75]    (10.93,-3.29) .. controls (6.95,-1.4) and (3.31,-0.3) .. (0,0) .. controls (3.31,0.3) and (6.95,1.4) .. (10.93,3.29)   ;
%Straight Lines [id:da6142137785302545] 
\draw    (253,191) -- (311.73,190.84) ;
\draw [shift={(313.73,190.83)}, rotate = 179.84] [color={rgb, 255:red, 0; green, 0; blue, 0 }  ][line width=0.75]    (10.93,-3.29) .. controls (6.95,-1.4) and (3.31,-0.3) .. (0,0) .. controls (3.31,0.3) and (6.95,1.4) .. (10.93,3.29)   ;
%Straight Lines [id:da8277820837232078] 
\draw    (156,190) -- (214.73,189.84) ;
\draw [shift={(216.73,189.83)}, rotate = 179.84] [color={rgb, 255:red, 0; green, 0; blue, 0 }  ][line width=0.75]    (10.93,-3.29) .. controls (6.95,-1.4) and (3.31,-0.3) .. (0,0) .. controls (3.31,0.3) and (6.95,1.4) .. (10.93,3.29)   ;
%Straight Lines [id:da9166783543020783] 
\draw    (343,192) -- (401.73,191.84) ;
\draw [shift={(403.73,191.83)}, rotate = 179.84] [color={rgb, 255:red, 0; green, 0; blue, 0 }  ][line width=0.75]    (10.93,-3.29) .. controls (6.95,-1.4) and (3.31,-0.3) .. (0,0) .. controls (3.31,0.3) and (6.95,1.4) .. (10.93,3.29)   ;

% Text Node
\draw (178,168.4) node [anchor=north west][inner sep=0.75pt]    {$i$};
% Text Node
\draw (52,182.5) node [anchor=north west][inner sep=0.75pt]    {$0$};
% Text Node
\draw (321,182.4) node [anchor=north west][inner sep=0.75pt]    {$\mathcal{E}^{1}$};
% Text Node
\draw (224,181.4) node [anchor=north west][inner sep=0.75pt]    {$\mathcal{E}^{0}$};
% Text Node
\draw (135,180.4) node [anchor=north west][inner sep=0.75pt]    {$\mathcal{F}$};
% Text Node
\draw (270,170.4) node [anchor=north west][inner sep=0.75pt]    {$d$};
% Text Node
\draw (360,171.4) node [anchor=north west][inner sep=0.75pt]    {$d$};
% Text Node
\draw (412,188.5) node [anchor=north west][inner sep=0.75pt]    {$...$};
\end{tikzpicture}
\end{equation}   
    there is an isomorphism 
    \begin{equation}
        \Check{H}^q(X,\mathcal{F}) \cong H^q(\mathcal{E}^*(X))
    \end{equation}
\end{Theorem}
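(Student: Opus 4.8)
The plan is to run the standard dimension-shifting argument: break the acyclic resolution into short exact sequences of sheaves, feed each into the long exact sequence in \v{C}ech cohomology established just above, and use the acyclicity of the $\mathcal{E}^{j}$ to collapse everything onto a single degree. First I would introduce the \emph{cocycle sheaves}. Set $\mathcal{Z}^{0}:=\mathcal{F}$ (identified with $i(\mathcal{F})\subset \mathcal{E}^{0}$) and, for $k\geq 1$, let $\mathcal{Z}^{k}:=\ker(d:\mathcal{E}^{k}\to \mathcal{E}^{k+1})$, which by exactness of the resolution coincides with the sheaf image of $d:\mathcal{E}^{k-1}\to \mathcal{E}^{k}$. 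Exactness of the resolution is then precisely the statement that the sequences
\[
0 \to \mathcal{Z}^{k} \to \mathcal{E}^{k} \xrightarrow{\,d\,} \mathcal{Z}^{k+1} \to 0, \qquad k\geq 0,
\]
are short exact sequences of sheaves.

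Next I would apply the long exact sequence in \v{C}ech cohomology to each such short exact sequence. Since every $\mathcal{E}^{k}$ is acyclic, $\Check{H}^{q}(X,\mathcal{E}^{k})=0$ for all $q>0$. Reading off the portion of the long exact sequence around degree $q\geq 1$, the two flanking $\mathcal{E}^{k}$-terms vanish, so the connecting homomorphism is forced to be an isomorphism
\[
\Check{H}^{q}(X,\mathcal{Z}^{k+1}) \;\cong\; \Check{H}^{q+1}(X,\mathcal{Z}^{k}), \qquad q\geq 1.
\]
Iterating this shift for $k=0,1,\dots,q-1$ reduces the computation of $\Check{H}^{q}(X,\mathcal{F})=\Check{H}^{q}(X,\mathcal{Z}^{0})$ all the way down to $\Check{H}^{1}(X,\mathcal{Z}^{q-1})$.

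The final and substantive step is to identify $\Check{H}^{1}(X,\mathcal{Z}^{q-1})$ with $H^{q}(\mathcal{E}^{*}(X))$, and this is where I expect the real content to sit. I would use the low-degree terms of the long exact sequence associated to $0\to \mathcal{Z}^{q-1}\to \mathcal{E}^{q-1}\to \mathcal{Z}^{q}\to 0$, namely
\[
\mathcal{E}^{q-1}(X) \xrightarrow{\,d\,} \mathcal{Z}^{q}(X) \to \Check{H}^{1}(X,\mathcal{Z}^{q-1}) \to \Check{H}^{1}(X,\mathcal{E}^{q-1})=0,
\]
which yields $\Check{H}^{1}(X,\mathcal{Z}^{q-1})\cong \mathcal{Z}^{q}(X)/d\,\mathcal{E}^{q-1}(X)$. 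The main obstacle is that the global-sections functor is only left exact, so the sheaf surjection $d:\mathcal{E}^{q-1}\to \mathcal{Z}^{q}$ need not be surjective on sections; indeed this defect is exactly what the higher cohomology measures. I would dispose of it by exploiting left-exactness in the favourable direction: taking sections commutes with kernels, so $\mathcal{Z}^{q}(X)=\ker\!\big(d:\mathcal{E}^{q}(X)\to \mathcal{E}^{q+1}(X)\big)=Z^{q}(\mathcal{E}^{*}(X))$ is precisely the module of global cocycles, while $d\,\mathcal{E}^{q-1}(X)=B^{q}(\mathcal{E}^{*}(X))$ is the module of global coboundaries. Hence the quotient is $H^{q}(\mathcal{E}^{*}(X))$, and composing with the chain of connecting isomorphisms proves the claim for $q\geq 1$. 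The case $q=0$ is immediate from left-exactness alone, since $\Check{H}^{0}(X,\mathcal{F})=\mathcal{F}(X)=\ker\!\big(d:\mathcal{E}^{0}(X)\to \mathcal{E}^{1}(X)\big)=H^{0}(\mathcal{E}^{*}(X))$.
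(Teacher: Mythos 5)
Your proof is correct and follows essentially the same dimension-shifting strategy as the paper: both factor the resolution through the cocycle sheaves $\mathcal{Z}^{k}=\ker(d:\mathcal{E}^{k}\to\mathcal{E}^{k+1})$ (the paper's $\mathcal{F}'$ is your $\mathcal{Z}^{1}$), apply the long exact sequence in \v{C}ech cohomology to the resulting short exact sequences, and use acyclicity of the $\mathcal{E}^{j}$ together with left-exactness of global sections to land on $H^{q}(\mathcal{E}^{*}(X))$. The only cosmetic difference is that you unroll the induction into an explicit chain of connecting isomorphisms and spell out the low-degree identification $\Check{H}^{1}(X,\mathcal{Z}^{q-1})\cong \mathcal{Z}^{q}(X)/d\,\mathcal{E}^{q-1}(X)$, a step the paper leaves more implicit.
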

\begin{proof}
    We show that for every resolution of the sheaf $\mathcal{F}$ there exist homomorphisms
    \begin{equation}
        \alpha_q : H^q(\mathcal{E}^*(X)) \rightarrow \Check{H}^q(X,\mathcal{F}) \qquad q\geq 0
    \end{equation}
such that 
\begin{itemize}
    \item if $ \Check{H}^{q-i-1}(X,\mathcal{E}^i)=0 \ \forall i \in [0,q-2]$ then $\alpha_q$ is injective; 
    \item if $ \Check{H}^{q-i}(X,\mathcal{E}^i)=0 \ \forall i \in [0,q-1]$ then $\alpha_q$ is surjective. 
\end{itemize}
Hence the theorem follow since if the resolution is acyclic both the condition are satisfied and the $\alpha_q$ is an isomorphism for every $q$. The assertion can be shown by induction.\\
Let us start with $q=0$. In this case, using the sheaf property and the definition of the Čech differential we can show that $\Check{H}^0(X,\mathcal{F})=\mathcal{F}(X)$ (the global section of the sheaf). Hence we have the following exact sequence (due to the left exactness of global sections)
\begin{equation}
    \tikzset{every picture/.style={line width=0.75pt}} %set default line width to 0.75pt        
\begin{tikzpicture}[x=0.75pt,y=0.75pt,yscale=-1,xscale=1]
%uncomment if require: \path (0,699); %set diagram left start at 0, and has height of 699

%Straight Lines [id:da022187689160346746] 
\draw    (68,190) -- (126.73,189.84) ;
\draw [shift={(128.73,189.83)}, rotate = 179.84] [color={rgb, 255:red, 0; green, 0; blue, 0 }  ][line width=0.75]    (10.93,-3.29) .. controls (6.95,-1.4) and (3.31,-0.3) .. (0,0) .. controls (3.31,0.3) and (6.95,1.4) .. (10.93,3.29)   ;
%Straight Lines [id:da6142137785302545] 
\draw    (288,189) -- (346.73,188.84) ;
\draw [shift={(348.73,188.83)}, rotate = 179.84] [color={rgb, 255:red, 0; green, 0; blue, 0 }  ][line width=0.75]    (10.93,-3.29) .. controls (6.95,-1.4) and (3.31,-0.3) .. (0,0) .. controls (3.31,0.3) and (6.95,1.4) .. (10.93,3.29)   ;
%Straight Lines [id:da8277820837232078] 
\draw    (174,190) -- (232.73,189.84) ;
\draw [shift={(234.73,189.83)}, rotate = 179.84] [color={rgb, 255:red, 0; green, 0; blue, 0 }  ][line width=0.75]    (10.93,-3.29) .. controls (6.95,-1.4) and (3.31,-0.3) .. (0,0) .. controls (3.31,0.3) and (6.95,1.4) .. (10.93,3.29)   ;
%Straight Lines [id:da9166783543020783] 
\draw    (399,189) -- (457.73,188.84) ;
\draw [shift={(459.73,188.83)}, rotate = 179.84] [color={rgb, 255:red, 0; green, 0; blue, 0 }  ][line width=0.75]    (10.93,-3.29) .. controls (6.95,-1.4) and (3.31,-0.3) .. (0,0) .. controls (3.31,0.3) and (6.95,1.4) .. (10.93,3.29)   ;

% Text Node
\draw (196,168.4) node [anchor=north west][inner sep=0.75pt]    {$i$};
% Text Node
\draw (52,182.5) node [anchor=north west][inner sep=0.75pt]    {$0$};
% Text Node
\draw (352,180.4) node [anchor=north west][inner sep=0.75pt]    {$\mathcal{E}^{1}( X)$};
% Text Node
\draw (240,180.4) node [anchor=north west][inner sep=0.75pt]    {$\mathcal{E}^{0}( X)$};
% Text Node
\draw (132,180.4) node [anchor=north west][inner sep=0.75pt]    {$\mathcal{F}( X)$};
% Text Node
\draw (305,168.4) node [anchor=north west][inner sep=0.75pt]    {$d$};
% Text Node
\draw (416,168.4) node [anchor=north west][inner sep=0.75pt]    {$d$};
% Text Node
\draw (467,185.5) node [anchor=north west][inner sep=0.75pt]    {$...$};
\end{tikzpicture}
\end{equation}
which means that  
\begin{equation}
    \Check{H}^0(X,\mathcal{F})=\mathcal{F}(X) \cong Ker(d: \mathcal{E}^0(X) \rightarrow \mathcal{E}^1(X))=H^0(\mathcal{E}^*(X)).
\end{equation}
For the general $q>0$ we first define $\mathcal{F}'$ as the kernel of the morphism $d: \mathcal{E}^1(X) \rightarrow \mathcal{E}^2(X)$. Therefore we have a resolution 
\begin{equation}
\tikzset{every picture/.style={line width=0.75pt}} %set default line width to 0.75pt        
\begin{tikzpicture}[x=0.75pt,y=0.75pt,yscale=-1,xscale=1]
%uncomment if require: \path (0,699); %set diagram left start at 0, and has height of 699

%Straight Lines [id:da022187689160346746] 
\draw    (68,190) -- (126.73,189.84) ;
\draw [shift={(128.73,189.83)}, rotate = 179.84] [color={rgb, 255:red, 0; green, 0; blue, 0 }  ][line width=0.75]    (10.93,-3.29) .. controls (6.95,-1.4) and (3.31,-0.3) .. (0,0) .. controls (3.31,0.3) and (6.95,1.4) .. (10.93,3.29)   ;
%Straight Lines [id:da6142137785302545] 
\draw    (253,191) -- (311.73,190.84) ;
\draw [shift={(313.73,190.83)}, rotate = 179.84] [color={rgb, 255:red, 0; green, 0; blue, 0 }  ][line width=0.75]    (10.93,-3.29) .. controls (6.95,-1.4) and (3.31,-0.3) .. (0,0) .. controls (3.31,0.3) and (6.95,1.4) .. (10.93,3.29)   ;
%Straight Lines [id:da8277820837232078] 
\draw    (156,190) -- (214.73,189.84) ;
\draw [shift={(216.73,189.83)}, rotate = 179.84] [color={rgb, 255:red, 0; green, 0; blue, 0 }  ][line width=0.75]    (10.93,-3.29) .. controls (6.95,-1.4) and (3.31,-0.3) .. (0,0) .. controls (3.31,0.3) and (6.95,1.4) .. (10.93,3.29)   ;
%Straight Lines [id:da9166783543020783] 
\draw    (343,192) -- (401.73,191.84) ;
\draw [shift={(403.73,191.83)}, rotate = 179.84] [color={rgb, 255:red, 0; green, 0; blue, 0 }  ][line width=0.75]    (10.93,-3.29) .. controls (6.95,-1.4) and (3.31,-0.3) .. (0,0) .. controls (3.31,0.3) and (6.95,1.4) .. (10.93,3.29)   ;

% Text Node
\draw (178,168.4) node [anchor=north west][inner sep=0.75pt]    {$i$};
% Text Node
\draw (52,182.5) node [anchor=north west][inner sep=0.75pt]    {$0$};
% Text Node
\draw (321,182.4) node [anchor=north west][inner sep=0.75pt]    {$\mathcal{E}^{2}$};
% Text Node
\draw (224,181.4) node [anchor=north west][inner sep=0.75pt]    {$\mathcal{E}^{1}$};
% Text Node
\draw (135,180.4) node [anchor=north west][inner sep=0.75pt]    {$\mathcal{F}'$};
% Text Node
\draw (270,170.4) node [anchor=north west][inner sep=0.75pt]    {$d$};
% Text Node
\draw (360,171.4) node [anchor=north west][inner sep=0.75pt]    {$d$};
% Text Node
\draw (412,188.5) node [anchor=north west][inner sep=0.75pt]    {$...$};
\end{tikzpicture},
\end{equation}   
which provides the inductive step by which we have a map 
\begin{equation}
    \alpha_{q-1}: H^{q-1}(\mathcal{E}^{*-1}(X))=H^{q}(\mathcal{E}^{*}(X)) \rightarrow \Check{H}^{q-1}(X,\mathcal{F}'),
\end{equation}
and a short exact sequence 
\begin{equation}
\tikzset{every picture/.style={line width=0.75pt}} %set default line width to 0.75pt        
\begin{tikzpicture}[x=0.75pt,y=0.75pt,yscale=-1,xscale=1]
%uncomment if require: \path (0,699); %set diagram left start at 0, and has height of 699

%Straight Lines [id:da022187689160346746] 
\draw    (68,190) -- (126.73,189.84) ;
\draw [shift={(128.73,189.83)}, rotate = 179.84] [color={rgb, 255:red, 0; green, 0; blue, 0 }  ][line width=0.75]    (10.93,-3.29) .. controls (6.95,-1.4) and (3.31,-0.3) .. (0,0) .. controls (3.31,0.3) and (6.95,1.4) .. (10.93,3.29)   ;
%Straight Lines [id:da6142137785302545] 
\draw    (253,191) -- (311.73,190.84) ;
\draw [shift={(313.73,190.83)}, rotate = 179.84] [color={rgb, 255:red, 0; green, 0; blue, 0 }  ][line width=0.75]    (10.93,-3.29) .. controls (6.95,-1.4) and (3.31,-0.3) .. (0,0) .. controls (3.31,0.3) and (6.95,1.4) .. (10.93,3.29)   ;
%Straight Lines [id:da8277820837232078] 
\draw    (156,190) -- (214.73,189.84) ;
\draw [shift={(216.73,189.83)}, rotate = 179.84] [color={rgb, 255:red, 0; green, 0; blue, 0 }  ][line width=0.75]    (10.93,-3.29) .. controls (6.95,-1.4) and (3.31,-0.3) .. (0,0) .. controls (3.31,0.3) and (6.95,1.4) .. (10.93,3.29)   ;
%Straight Lines [id:da9166783543020783] 
\draw    (343,192) -- (401.73,191.84) ;
\draw [shift={(403.73,191.83)}, rotate = 179.84] [color={rgb, 255:red, 0; green, 0; blue, 0 }  ][line width=0.75]    (10.93,-3.29) .. controls (6.95,-1.4) and (3.31,-0.3) .. (0,0) .. controls (3.31,0.3) and (6.95,1.4) .. (10.93,3.29)   ;

% Text Node
\draw (178,168.4) node [anchor=north west][inner sep=0.75pt]    {$i$};
% Text Node
\draw (52,182.5) node [anchor=north west][inner sep=0.75pt]    {$0$};
% Text Node
\draw (321,182.4) node [anchor=north west][inner sep=0.75pt]    {$\mathcal{F} '$};
% Text Node
\draw (224,181.4) node [anchor=north west][inner sep=0.75pt]    {$\mathcal{E}^{0}$};
% Text Node
\draw (135,180.4) node [anchor=north west][inner sep=0.75pt]    {$\mathcal{F}$};
% Text Node
\draw (270,170.4) node [anchor=north west][inner sep=0.75pt]    {$d$};
% Text Node
\draw (411,182.5) node [anchor=north west][inner sep=0.75pt]    {$0$};
\end{tikzpicture},
\end{equation}
which gives us a long exact sequence in cohomology and composing with the map $\delta: \Check{H}^{q-1}(X,\mathcal{F}') \rightarrow \Check{H}^{q}(X,\mathcal{F})$ we get the map we are looking for.
\end{proof}
\subsection{The sheaf of differential mixed symmetry tensors}
Let us now set $(X,\mathcal{A})$ a differential manifold with differential maximal atlas $\mathcal{A}$. This atlas naturally furnish a structure sheaf $\mathcal{E}$ (this is the sheaf associated to the manifold; for example for the case of $\mathbb{R}^n$ this is the sheaf of $C^{\infty}$ functions). In the following definition $\chi(U)$ is the sheaf of differential vector field on $U$.

\begin{Definition}[Differential mixed symmetry tensor]
    We define the sheaf $\mathcal{D}\Omega^{p_1\otimes ... \otimes p_N}$ as the sheaf associated to the map $x \mapsto \Omega^{p_1\otimes ... \otimes p_N}(X)$. Then there exist, for every open $U \subset X$, a map $\langle-;-,...,-\rangle : \mathcal{D}\Omega^{p_1\otimes ... \otimes p_N}(U) \times \chi^{p_1}(U) \times ... \times \chi^{p_N}(U) \rightarrow \mathcal{D}\mathbb{R}_X$ defined by
    \begin{equation}
        \langle T;(\alpha_1,..,\alpha_{p_1}),...,(\beta_1,..,\beta_{p_N})\rangle (x):= (T(x);(\alpha_1(x),..,\alpha_{p_1}(x)),...,(\beta_1(x),..,\beta_{p_N}(x)))
    \end{equation}
which is nothing but the evaluation of the mixed symmetry tensor $T$ on sets vector fields. A differential mixed symmetry tensor is an element $T \in \mathcal{D}\Omega^{p_1\otimes ... \otimes p_N}(U)$ such that for every $V \subset U$ 
\begin{equation}
    \langle T;(\alpha_1,..,\alpha_{p_1}),...,(\beta_1,..,\beta_{p_N})\rangle |_V (x) \in \mathcal{E}(V).
\end{equation}
\end{Definition}
As for the case of vector fields and differential forms, differential mixed symmetry tensors give rise to a sheaf and, with a little abuse of notation, we will still indicate with $\Omega^{p_1\otimes ... \otimes p_N}$ the sheaf of differential mixed symmetry tensors. We also stress that $\Omega^{p_1\otimes ... \otimes p_N}$ is also an $\mathcal{E}$-module under the pointwise multiplication.  

\begin{Theorem}[Fine shaves on differential manifold]
Let $\mathcal{F}$ be an $\mathcal{E}$-module on $X$. If $X$ is a differential manifold than $\mathcal{F}$ is a fine sheaf.
\end{Theorem}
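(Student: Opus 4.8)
The plan is to produce, for an arbitrary open covering $\mathcal{U}=\{U_i\}_{i\in I}$ of $X$, a partition of the identity of $\mathcal{F}$ subordinate to $\mathcal{U}$ by promoting a smooth partition of unity to a family of $\mathcal{E}$-linear sheaf endomorphisms. First I would invoke the existence of smooth partitions of unity on a differential manifold: since $X$ is a differential manifold and, by the standing hypothesis, every open set of $X$ is paracompact and Hausdorff, there exists a family $\{\rho_i\}_{i\in I}$ of functions $\rho_i\in\mathcal{E}(X)$ with $\overline{\operatorname{supp}\rho_i}\subset U_i$, with $\{\operatorname{supp}\rho_i\}_{i\in I}$ locally finite, and with $\sum_{i\in I}\rho_i=1$. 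This is the only genuinely analytic input and it is exactly where the smooth structure (existence of bump functions subordinate to charts) enters; a purely topological partition of unity would not return elements of $\mathcal{E}(X)$ and hence would not yield $\mathcal{E}$-linear maps.

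Next I would define, for each $i$, the candidate morphism $f_i:\mathcal{F}\to\mathcal{F}$ by multiplication by $\rho_i$: for every open $U\subset X$ put $(f_i)_U(s):=\rho_i|_U\cdot s$ for $s\in\mathcal{F}(U)$, where the product uses the $\mathcal{E}(U)$-module structure of $\mathcal{F}(U)$ and $\rho_i|_U=\rho_{XU}(\rho_i)\in\mathcal{E}(U)$. I would then check that each $f_i$ is a sheaf morphism, i.e. that it commutes with the restriction maps $\rho_{UV}$; this is immediate from the compatibility of the module action with restrictions, so that $\rho_{UV}\bigl(\rho_i|_U\cdot s\bigr)=\rho_i|_V\cdot\rho_{UV}(s)$.

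Then I would verify the three axioms defining a partition of the identity. For the support condition, I would observe that the induced stalk map $(f_i)_x$ is multiplication by the germ $(\rho_i)_x$, hence vanishes whenever $x\notin\operatorname{supp}\rho_i$; therefore $\operatorname{Supp}(f_i)\subset\operatorname{supp}\rho_i$, giving $\overline{\operatorname{Supp}(f_i)}\subset\overline{\operatorname{supp}\rho_i}\subset U_i$, while the local finiteness of $\{\operatorname{supp}\rho_i\}$ transfers verbatim to $\{\operatorname{Supp}(f_i)\}$. For the completeness relation $\sum_i f_i=\mathrm{id}_{\mathcal{F}}$, I would use local finiteness to make the sum meaningful: around any point $x$ there is an open $V$ meeting only finitely many of the $\operatorname{supp}\rho_i$, so on $V$ the sum $\sum_i (f_i)_V$ collapses to a finite sum equal to multiplication by $\sum_i\rho_i|_V=1|_V$, i.e. the identity; these locally defined identities glue by the sheaf axioms to the global identity endomorphism.

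The main obstacle is more bookkeeping than conceptual: I must treat the possibly infinite index set $I$ cleanly, ensuring that $\sum_i f_i$ is a genuine sheaf endomorphism rather than a formal sum, which is precisely what local finiteness of the supports guarantees, and I must carefully maintain the distinction between the topological support $\operatorname{supp}\rho_i$ of a function and the support $\operatorname{Supp}(f_i)$ of the associated sheaf morphism defined through the stalks. Once these two points are settled, $\{f_i\}_{i\in I}$ is by construction a partition of the identity of $\mathcal{F}$ subordinate to $\mathcal{U}$; since $\mathcal{U}$ was arbitrary, $\mathcal{F}$ admits such a partition for every open covering and is therefore a fine sheaf, which also completes the chain of reasoning used in the proof of Theorem \ref{con4.3.1}.
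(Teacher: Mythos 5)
Your proof is correct and follows essentially the same route as the paper: take a smooth partition of unity subordinate to the covering and let each $f_i$ act by multiplication via the $\mathcal{E}$-module structure. The paper states this in a single sentence, whereas you carefully verify the sheaf-morphism property, the support conditions, and the convergence of $\sum_i f_i$ via local finiteness; this is the same argument, just made explicit.
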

\begin{proof}
   Given an open covering $\mathcal{U}$, it is enough to consider as partition of the identity $\mathcal{F}$ subordinated to $\mathcal{U}$ the multiplication of the elements of $\mathcal{F}$ by the functions of the partition of the unity of $X$ subordinated to $\mathcal{U}$.
\end{proof}

\bibliographystyle{JHEP} 
\bibliography{biblio}

\end{document}